\renewcommand{\epsilon}{\varepsilon}
\def\a{\alpha}
\def\C{\mathbb{C}}
\def\b{\beta}
\def\beq{\begin{equation}}
\def\eeq{\end{equation}}
\def\beqq{\begin{equation*}}
\def\eeqq{\end{equation*}}
\def\bs{\begin{split}}
	\def\es{\end{split}}
\def\bl{{\boldsymbol{\lambda}}}
\def\bo{\boldsymbol{\omega}}
\def\bx{\boldsymbol{x}}
\def\bt{\boldsymbol{t}}
\def\by{\boldsymbol{y}}
\def\bz{\boldsymbol{z}}
\def\bk{\boldsymbol{k}}
\def\bbx{\underline{\boldsymbol{x}}}
\def\bby{\underline{\boldsymbol{y}}}
\def\bbt{\underline{\boldsymbol{t}}}
\def\bbz{\underline{\boldsymbol{z}}}
\def\const{{2\pi\imath}}
\def\cbk{\color{black}}
\def\Im{\operatorname{Im}}
\def\g{\gamma}
\def\K{{K}}
\def\KK{\hat{\K}}
\def\l{\lambda}
\def\o{\omega}
\def\R{\mathbb{R}}
\def\Re{\mathrm{Re}\,}
\def\Res{\operatorname{Res}}
\def\ve{\varepsilon}
\def\vf{\varphi}
\def\Z{\mathbb{Z}}
\newtheorem{lemma}{Lemma}
\newtheorem*{lemma*}{Lemma}
\newtheorem*{proposition*}{Proposition}
\newtheorem*{theorem*}{Theorem}
\newtheorem*{remark*}{Remark}
\newcommand{\rf}[1]{(\ref{#1})}
\newcommand{\LL}[1]{\Lambda_{#1}}
\begin{document}
\-\vspace{-2.5cm}

\begin{center}
{\bf \large Baxter operators in Ruijsenaars hyperbolic system IV. \\[4pt] Coupling constant reflection symmetry}
\bigskip

{\bf  N. Belousov$^{\dagger\times}$, S. Derkachov$^{\dagger\times}$, S. Kharchev$^{\bullet\ast}$, S. Khoroshkin$^{\circ\ast}$
}\medskip\\
$^\dagger${\it Steklov Mathematical Institute, Fontanka 27, St. Petersburg, 191023, Russia;}\smallskip\\
$^\times${\it National Research University Higher School of Economics, Soyuza Pechatnikov 16, \\St. Petersburg, 190121, Russia;}\smallskip\\
$^\bullet${\it National Research Center ``Kurchatov Institute'', 123182, Moscow, Russia;}\smallskip\\
$^\circ${\it National Research University Higher School of Economics, Myasnitskaya 20, \\Moscow, 101000, Russia;}\smallskip\\
$^\ast${\it Institute for Information Transmission Problems RAS (Kharkevich Institute), \\Bolshoy Karetny per. 19, Moscow, 127994, Russia}
\end{center}

\abstract{
 We introduce and study a new family of commuting Baxter operators in the Ruijsenaars hyperbolic system, different from that considered by us earlier. Using a degeneration of Rains integral identity we verify the commutativity between the two families of Baxter operators and explore this fact for the proof of the coupling constant symmetry of the wave function. We also establish a connection between new Baxter operators and Noumi-Sano difference operators. 
}
\tableofcontents
\section{Introduction}
\subsection{Ruijsenaars system and $Q$-operator}
Denote by $T^{a}_{x_i}$ the shift operator
\begin{equation}
T^{a}_{x_i}:=e^{a\partial_{x_i}}, \qquad \left( T^{a}_{x_i} \, f \right)(x_1,\ldots,x_i,\ldots,x_n) = f(x_1,\ldots,x_i+a,\ldots,x_n)
\end{equation}
and define its products for any subset $I\subset[n] = \{1, \dots, n\}$
\begin{equation}
T^{a}_{I,x}=\prod_{i \in I} T_{x_i}^a.
\end{equation}
The Ruijsenaars hyperbolic system \cite{R1} is governed by commuting symmetric difference operators
\begin{equation}
\label{I2}
H_r(\bx_n;g|\bo) = \sum_{\substack{I\subset[n] \\ |I|=r}}
\prod_{\substack{i\in I \\ j\notin I}}
\frac{\sh^{\frac{1}{2}}\frac{\pi}{\o_2}\left(x_i-x_j-\imath g\right)}
{\sh^{\frac{1}{2}}\frac{\pi}{\o_2}\left(x_i-x_j\right)}
\cdot T^{-\imath\o_1}_{I,x}\cdot \prod_{\substack{i\in I \\ j\notin I}}
\frac{\sh^{\frac{1}{2}}\frac{\pi}{\o_2}\left(x_i-x_j+\imath g\right)}
{\sh^{\frac{1}{2}}\frac{\pi}{\o_2}\left(x_i-x_j\right)}
\end{equation}
where $r = 1, \dots, n$.
Here and in what follows we denote tuples of $n$ variables as
 \begin{equation}
	\bm{x}_n = (x_1, \dots,x_n).
\end{equation}
We also consider gauge equivalent Macdonald operators
\begin{equation}
	\label{I2a}
	 M_r:=	M_r(\bx_n;g|\bo) = \sum_{\substack{I\subset[n] \\ |I|=r}}
	\prod_{\substack{i\in I \\ j\notin I}}
	\frac{\sh\frac{\pi}{\o_2}\left(x_i-x_j-\imath g\right)}
	{\sh\frac{\pi}{\o_2}\left(x_i-x_j\right)}
	\cdot T^{-\imath\o_1}_{I,x}.
\end{equation}
Both families of operators are parametrized by three constants: periods $\bm{\omega}=(\omega_1, \omega_2)$ and a coupling constant $g$,
 which originally are supposed to be real positive. The equivalence is established by means of the measure function
\begin{equation}\label{I5}
\mu(\bx_n)=\prod_{\substack{i,j=1 \\ i\not=j}}^n\mu(x_i-x_j),\qquad
\mu(x):=\mu_g(x|\bo)=S_2(\imath x|\bo) S_2^{-1}(\imath x+g|\bo). \end{equation}
Here $S_2(z|\bo)$ is the double sine function, see Appendix \ref{AppA}.
Namely,
\begin{equation}\label{I4}
\sqrt{\mu (\bx_n)} \,
M_r(\bx_n;g|\bo) \, \frac{1}{\sqrt{\mu  (\bx_n)}}=
H_r(\bx_n,g|\bo).
\end{equation}

In this paper, as well as in \cite{BDKK, BDKK2, BDKK3} and unlike the original Ruijsenaars setting, we consider periods $\bo$ and coupling constant $g$ to be complex valued,  assuming that
\begin{equation}\label{I0a} \Re \o_1 > 0, \qquad \Re \o_2 > 0,\qquad 0< \Re g<\Re \o_1+\Re \o_2  \end{equation}
and
\begin{equation}\label{I0b} \nu_g=\Re\frac{ g}{\o_1\o_2}>0.\end{equation}
Denote the dual coupling constant by reflection
\begin{equation}\label{I3b} g \rightarrow g^\ast=\o_1+\o_2-g\end{equation}
and introduce the function $\K(x)$ 
\begin{equation}\label{I6} \K(x):=\K_g(x|\bo)=S_2^{-1}\Bigl(\imath x +\frac{g^\ast}{2}\Big|\bo\Bigr)S_2^{-1}\Bigl(-\imath x+\frac{g^\ast}{2}\Big|\bo\Bigr).\end{equation}
We also frequently use the products of this function
\begin{equation}\label{I6a}
\K(\bx_n,\by_m)=\prod_{i=1}^n \prod_{j = 1}^m \K(x_i-y_j).
\end{equation}
Note that in the notation \rf{I3b} the measure function \rf{I5} can be rewritten as
\begin{equation}\label{I5a} \mu(\bx_n)=\prod_{\substack{i,j=1 \\ i\not=j}}^n S_2(\imath x_i-\imath x_j|\bo) \, S_2(\imath x_i-\imath x_j+g^\ast|\bo)\end{equation}
due to the reflection formula \eqref{A3}. In \cite{BDKK, BDKK2} we studied a family of operators $Q_{n}(\lambda)$ parameterized by $\lambda \in \mathbb{C}$ and called Baxter $Q$-operators. These are integral operators
\begin{equation}\label{I14}
\left( Q_{n}(\lambda) f\right) (\bm{x}_n) =  d_n(g|\bo) \, \int_{\mathbb{R}^n} d\bm{y}_n \, Q(\bm{x}_n, \bm{y}_n; \lambda) f(\bm{y}_n)
\end{equation}
with the kernel
\begin{equation}\label{Qker} Q(\bx_n,\by_{n};\l)= e^{\const \l(\bbx_n-\bby_n)}
\K(\bx_n,\by_{n})\mu  (\by_{n})
\end{equation}
and normalizing constant
\begin{equation}\label{dconst}
d_n(g|\bo) = \frac{1}{n!} \left[ \sqrt{\omega_1 \omega_2} S_2(g|\bo) \right]^{-n}.
\end{equation}
Here and below for a tuple $\bx_n=(x_1,\ldots,x_n)$ we use the notation $\bbx_n$ for the sum of components
\begin{equation}\label{I10a} \bbx_n=x_1+\ldots+x_n.
\end{equation}
In \cite{BDKK} we established the commutativity of Baxter operators with Macdonald operators and of Baxter operators themselves
\begin{align}\label{QMcomm}
Q_{n}(\lambda) \, M_r&= M_r \, Q_{n}(\lambda), \\[6pt]\label{QQcomm}
Q_{n}(\lambda) \, Q_{n}(\rho) &= Q_{n}(\rho) \, Q_{n}(\lambda),
\end{align}
where $r = 1, \dots, n$. The kernels of the operators in both sides of \rf{QQcomm} are analytic functions of $\l,\rho$ in the strip
\begin{equation} \label{QQcond}  |\Im(\lambda - \rho)| < \nu_g. \end{equation}
The commutativity \eqref{QQcomm} holds under assumptions \eqref{I0a}, \eqref{I0b}. For \eqref{QMcomm} we assume in addition
\begin{equation}\label{MQassump} \Re g < \Re \o_2. \end{equation} 

As it was shown for other integrable systems, kernels of Baxter operators in the classical limit give generating functions of Backlund transformations \cite{KS, Skl}. Classical transformation corresponding to the present $Q$-operators was considered in the paper~\cite{HR0}. We also note that for the trigonometric Ruijsenaars and Calogero models Baxter operators of similar type were studied in the works \cite{KMS, GLO}. 

\subsection{Reflection $g \rightarrow g^*$ and $Q^*$-operator}
Commuting the shift operators to the right in Ruijsenaars operators \eqref{I2} we rewrite them in the form 
\begin{equation} \label{HH}
H_r(\bx_n;g) = \sum_{\substack{I\subset[n] \\ |I|=r}}
\prod_{\substack{i\in I \\ j\notin I}}
\frac{\sh^{\frac{1}{2}}\frac{\pi}{\o_2}\left(x_i-x_j-\imath g\right)}
{\sh^{\frac{1}{2}}\frac{\pi}{\o_2}\left(x_i-x_j\right)}
\frac{\sh^{\frac{1}{2}}\frac{\pi}{\o_2}\left(x_i-x_j-\imath g^*\right)}
{\sh^{\frac{1}{2}}\frac{\pi}{\o_2}\left(x_i-x_j-\imath \o_1 - \imath \o_2\right)} \cdot T^{-\imath\o_1}_{I,x}.
\end{equation} 
Here we omit the dependence on periods $\bo$. From \rf{HH} it is clear that these operators are invariant under reflection $g \rightarrow g^* = \o_1 + \o_2 - g$
\begin{equation} H_r(\bx_n;g) = H_r(\bx_n;g^*). \end{equation} 
Since the measure function \eqref{I5} depends explicitly on $g$, the Macdonald operators \eqref{I2a} lack this symmetry. Instead, due to the connection formula \eqref{I4}, they satisfy
\begin{equation}\label{Msymm} M_r(\bx_n; g) = \eta^{-1}(\bx_n) \, M_r(\bx_n; g^*) \, \eta(\bx_n)  \end{equation}
with
\begin{equation}\label{eta} \eta(\bx_n) = \sqrt{ \prod_{\substack{i,j=1 \\ i\not=j}}^n \frac{\mu_g(x_i-x_j)}{\mu_{g^*}(x_i - x_j) }}= \prod_{\substack{i,j=1 \\ i\not=j}}^n S^{-1}_2(\imath x_i - \imath x_j + g), \end{equation}
where we used reflection formula \eqref{A3} for the double sine function. The function $\eta(\bx_n)$ represents $g$-dependent part of the measure function $\mu(\bx_n)$ \eqref{I5}
\begin{equation}\label{mu-eta}
\mu(\bx_n) = \eta(\bx_n) \Delta(\bx_n)
\end{equation}
with the rest part $\Delta(\bx_n)$ being hyperbolic Vandermonde determinant
\begin{equation} \label{hvand} \Delta(\bx_n) = \prod_{\substack{i,j=1 \\ i\not=j}}^n S_2(\imath x_i - \imath x_j) = \prod_{\substack{i,j=1 \\ i<j}}^n 4\sh \frac{\pi(x_i - x_j)}{\o_1} \sh \frac{\pi(x_i - x_j)}{\o_2}, \end{equation}
see the reflection formula \eqref{trig4}.

As well as Macdonald operators, Baxter operators \eqref{I14} depend explicitly on~$g$. For a moment we emphasize it in notation writing $Q_n(\l; g)$. The symmetry \eqref{Msymm} suggests to look at another family of integral operators. Namely, introduce a family of operators $Q^*_n(\l)$ parameterized by $\l \in \mathbb{C}$
\begin{equation}\label{Q*} Q^*_n(\l) = \eta^{-1}(\bx_n) \, Q_n(\l; g^*) \, \eta(\bx_n).\end{equation}
It is given by integral operator
\begin{equation} \bigl( Q^*_n(\l)  f\bigr) (\bx_n) = d_n(g^*) \int_{\R^n} d\by_n \, Q^*(\bx_n, \by_n; \l) f(\by_n) \end{equation}
with the kernel obtained from \eqref{Qker} using \eqref{mu-eta}
\begin{equation} \label{Q*ker} Q^*(\bx_n, \by_n; \l) = \eta^{-1}(\bx_n)\,  e^{2\pi \imath \l ( \bbx_n - \bby_n) } K^*(\bx_n, \by_n) \Delta(\by_n). \end{equation}
Here we denoted 
\begin{equation} K^*(\bx_n, \by_n) = \prod_{i,j = 1}^n K^*(x_i-y_j) \end{equation} 
with $K^*(x)$ being the counterpart of $K(x)$ \eqref{I6} with respect to the reflection $g \rightarrow g^*$
\begin{equation} K^*(x) := K_{g^*}(x|\bo) = S_2^{-1}\Bigl(\imath x + \frac{g}{2} \,\Big|\, \bo \Bigr) S_2^{-1}\Bigl(-\imath x + \frac{g}{2} \,\Big|\, \bo \Bigr).\end{equation}

In the light of the formulas \eqref{Msymm} and \rf{Q*} the commutativity relations for the first $Q$-operator and Macdonald operators \eqref{QMcomm}, \eqref{QQcomm} imply the same relations between Macdonald operators and the second $Q$-operator, as well as for the second $Q$-operators themselves 
\begin{align}\label{Q*Mcomm}
Q^*_{n}(\lambda) \, M_r(\bx_n;g)&= M_r(\bx_n;g) \, Q^*_{n}(\lambda), \\[6pt] \label{Q*Q*comm}
Q^*_{n}(\lambda) \, Q^*_{n}(\rho) &= Q^*_{n}(\rho) \, Q^*_{n}(\lambda).
\end{align}
For the second relation we need assumption analogous to \eqref{I0b}, that is we assume
\begin{equation}\label{nu*} \nu_{g^*} = \Re \frac{g^*}{\o_1\o_2} > 0. \end{equation}
For the first commutativity we in addition need the condition analogous to \eqref{MQassump}
\begin{equation} \Re g^* <  \Re \o_2.\end{equation}
Note that in the case of real constants $\bo,g$ both assumptions \eqref{I0b}, \eqref{nu*} follow from~\eqref{I0a}. Looking at the mentioned above commutativity relations it is natural to expect that the first and the second $Q$-operators also commute with each other.

\begin{restatable}{thm}{QQthm} \label{theorem1}
Under conditions \eqref{I0a}, \eqref{I0b}, \eqref{nu*} the two families of Baxter $Q$-operators commute
\begin{equation} \label{QQ*comm}
Q_n^*(\l) \, Q_n(\rho) = Q_n(\rho) \, Q_n^*(\l).
\end{equation}
The kernels of the operators in both sides are analytic functions of $\l, \rho$ in the
strip
\begin{equation}\label{lrho}
| \Im(\l - \rho) | < \frac{\nu_g + \nu_{g^*}}{2} = \Re \frac{\o_1 + \o_2}{\o_1 \o_2}.
\end{equation}
\end{restatable}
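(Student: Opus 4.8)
The plan is to pass to integral kernels and to reduce the operator identity \eqref{QQ*comm} to a single multidimensional integral identity of Rains type. Assuming the absolute convergence checked below, Fubini's theorem expresses each composition as an integral operator whose kernel is an $n$-fold integral over an intermediate tuple $\bz_n$, with identical normalizing constants $d_n(g^*)d_n(g)$. Substituting the explicit kernels \eqref{Qker}, \eqref{Q*ker}, using the factorization $\mu=\eta\Delta$ of \eqref{mu-eta} (hence $\mu(\bz_n)\,\eta^{-1}(\bz_n)=\Delta(\bz_n)$), and the fact that $K$, $K^*$ are even functions whose pair-products are invariant under relabelling, one finds that the kernel of $Q^*_n(\l)Q_n(\rho)$ equals, up to the common constant, $\eta^{-1}(\bx_n)\,\mu(\by_n)\,e^{2\pi\imath(\l\bbx_n-\rho\bby_n)}\,G(\bx_n,\by_n;\rho-\l)$, while that of $Q_n(\rho)Q^*_n(\l)$ equals $\Delta(\by_n)\,e^{2\pi\imath(\rho\bbx_n-\l\bby_n)}\,G(\by_n,\bx_n;\l-\rho)$, where
\begin{equation}\label{Gdef}
G(\bx_n,\by_n;c)=\int_{\R^n}d\bz_n\;e^{2\pi\imath c\,\bbz_n}\,K^*(\bx_n,\bz_n)\,K(\bz_n,\by_n)\,\Delta(\bz_n).
\end{equation}
Equating the two kernels and cancelling the common factor $\Delta(\by_n)=\mu(\by_n)/\eta(\by_n)$, the commutativity \eqref{QQ*comm} becomes equivalent to the functional equation
\begin{equation}\label{Gid}
G(\bx_n,\by_n;c)=\frac{\eta(\bx_n)}{\eta(\by_n)}\,e^{2\pi\imath c\,(\bbx_n+\bby_n)}\,G(\by_n,\bx_n;-c),\qquad c=\rho-\l,
\end{equation}
and one checks (restoring the prefactors, using \eqref{mu-eta} and $g+g^*=\o_1+\o_2$) that the exponents then match on the nose.

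Next I would establish \eqref{Gid} as a specialization of a hyperbolic, $A_{n-1}$-type Rains integral identity. Written out through double sine functions, the integrand of \eqref{Gdef} carries, for each variable $z_k$, the $2n$ factors $S_2^{-1}(\pm\imath(x_i-z_k)+g/2)$ coming from $K^*$, the $2n$ factors $S_2^{-1}(\pm\imath(z_k-y_j)+g^*/2)$ coming from $K$, and the $A_{n-1}$ Vandermonde $\prod_{i\neq j}S_2(\imath z_i-\imath z_j)$; since $g+g^*=\o_1+\o_2$ the balancing condition of the relevant transformation holds. The strategy is: start from the elliptic $A_n$ transformation formula, pass to its hyperbolic limit to obtain an identity between $n$-dimensional double sine integrals, and then degenerate once more by sending a pair of the Rains parameters to infinity with their difference fixed. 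This last limit is what produces the exponential weight $e^{2\pi\imath c\,\bbz_n}$, converts the former ``beta-integral'' balancing into the strip restriction \eqref{lrho}, and exchanges the two parameter groups ($\bx_n\leftrightarrow\by_n$) together with their shifts ($g/2\leftrightarrow g^*/2$) --- precisely the transposition appearing in \eqref{Gid}. Tracking the finite leftover prefactor and simplifying it with the reflection formula \eqref{A3} for $S_2$ and the expression \eqref{eta} for $\eta$ should reproduce exactly $\eta(\bx_n)\eta^{-1}(\by_n)\,e^{2\pi\imath c(\bbx_n+\bby_n)}$. As an alternative to invoking the elliptic identity, \eqref{Gid} can also be proved directly at the hyperbolic level by induction on $n$, peeling off one integration variable and reducing to the one-dimensional case, but the Rains route is the shorter one.

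Finally, convergence and analyticity. From the asymptotics of $S_2$ one verifies that for $c=0$ the integrand of \eqref{Gdef} decays as each $z_k\to\pm\infty$ at a rate governed by $\nu_g$ and $\nu_{g^*}$, and that inserting $e^{2\pi\imath c\,\bbz_n}$ leaves \eqref{Gdef} absolutely convergent exactly for $|\Im c|<(\nu_g+\nu_{g^*})/2=\Re\frac{\o_1+\o_2}{\o_1\o_2}$, i.e. on the strip \eqref{lrho}. There both sides of \eqref{QQ*comm} are analytic in $\l,\rho$ --- the dependence on $\l,\rho$ enters only through the entire factors $e^{2\pi\imath\l(\cdots)}$, $e^{2\pi\imath\rho(\cdots)}$ and through the locally uniformly convergent $\bz_n$-integral --- so it suffices to prove \eqref{Gid} on any open subset and then continue analytically; absolute convergence on that subset also legitimizes the Fubini step of the first part.

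The main obstacle I anticipate is the degeneration in the second step: extracting the exponentially weighted form of the $A_{n-1}$ Rains identity requires controlling the cancellation of the diverging $S_2$-prefactors as the two parameters run off to infinity and then matching the surviving factor with $\eta(\bx_n)/\eta(\by_n)$ exactly, and this bookkeeping must be carried out in a parameter range where every integral in sight converges before the result is pushed to the strip \eqref{lrho}. A secondary technical point is to make the $S_2$-asymptotics (with complex periods) quantitative enough to justify both the use of Fubini and the analytic continuation.
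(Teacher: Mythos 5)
Your proposal is correct and follows essentially the same route as the paper: both reduce the kernel identity, via $\mu=\eta\Delta$, to the single $n$-fold integral identity (your \eqref{Gid}, the paper's \eqref{IM}), and both obtain that identity as a degeneration of the elliptic Rains $A_n$ transformation — hyperbolic limit followed by sending a pair of parameters to infinity — with convergence on the strip $|\Im(\l-\rho)|<(\nu_g+\nu_{g^*})/2$ justifying the Fubini step and the analytic continuation. The bookkeeping you flag as the main obstacle is exactly what the paper's Appendix~\ref{AppB} carries out in its successive reductions.
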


The proof of this theorem is given in Section \ref{sectionQQ*}. The main ingredient of the proof is the  degeneration of the remarkable elliptic integral identity proved by E.~Rains \cite[Theorem 4.1]{Rains1}. A derivation of this degeneration is given in Appendix~\ref{AppB}.

\subsection{Wave function and local relations}
Commutativity relations between both $Q$-operators and Macdonald operators \eqref{Q*Mcomm}, \eqref{Q*Q*comm}, \eqref{QQ*comm} suggest that they should have common eigenfunctions. Eigenfunctions of the Macdonald operators were constructed by M. Hallnäs  and S. Ruijsenaars \cite{HR1}. In \cite{BDKK2} we proved that these eigenfunctions diagonalize the operator $Q_n(\l)$. In the present work we show that they also diagonalize the operator $Q_n^*(\l)$. It is done using certain local relations between Baxter operators and their degenerations.

Denote by $\LL{n}(\l)$ the integral operator
\begin{equation}\label{I11}\begin{split} \left(\LL{n}(\l)f\right)(\bx_n)=d_{n - 1}(g) \int_{\R^{n-1}}d\by_{n-1} \, \Lambda(\bx_n,\by_{n-1};\l) f(\by_{n-1})
\end{split}\end{equation}
with the kernel
\begin{equation}\label{Lker} \Lambda(\bx_n,\by_{n-1};\l)= e^{\const \l(\bbx_n-\bby_{n-1})}
\K(\bx_n,\by_{n-1}) \mu  (\by_{n-1})
\end{equation}
and normalizing constant $d_{n - 1}(g)$ given by the formula \eqref{dconst}. We call it raising operator. The wave function is given by the multiple integral
\begin{equation} \label{I12}\Psi_{\bl_n}(\bx_n):=\Psi_{\bl_n}(\bx_n; g|\bo)=\LL{n}(\l_n) \, \LL{n-1}(\l_{n-1})\cdots \LL{2}(\l_2) \, e^{\const \l_1x_1}
\end{equation}
or recursively
\begin{equation}\label{Psi-rec}
	\Psi_{\bm{\lambda}_n}(\bm{x}_n) = \Lambda_{n}(\lambda_n) \, \Psi_{\bm{\lambda}_{n - 1}}(\bm{x}_{n - 1}), \qquad \Psi_{\lambda_1}(x_1) = e^{{2\pi \imath}{} \lambda_1 x_1}.
\end{equation}
M.~Hallnäs  and S. Ruijsenaars proved \cite{HR1} for real periods $\bo$ and complex valued constant $g$ that the function \rf{I12}  
is a joint eigenfunction of Macdonald operators. In \cite{BDKK2} by similar arguments we extended this result to the case of complex $\bo$ and proved that it is also an eigenfunction of the first $Q$-operator \rf{I14}. Let us describe the corresponding eigenvalues. \cbk For any $a\in \mathbb{C}$ denote
\begin{equation} \hat{a} = \frac{a}{\o_1 \o_2}, \end{equation}
so that 
\begin{equation} \hat{\bo} = (\o_2^{-1}, \o_1^{-1}), \qquad \hat{g} = \frac{g}{\o_1\o_2}, \end{equation}
and analogously to $g^* = \o_1 + \o_2 - g$ we define
\begin{equation} \hat{g}^* = \hat{\o}_1 + \hat{\o}_2 - \hat{g} = \frac{g^*}{\o_1 \o_2}. \end{equation}
Also introduce the function
\begin{equation} \label{Khat} \hat{K}(\l) := K_{\hat{g}^*}(\l | \hat{\bo}) = S_2^{-1} \Bigl( \imath \l + \frac{\hat{g}}{2} \,\Big|\, \hat{\bo} \Bigr) \, S_2^{-1} \Bigl( -\imath \l + \frac{\hat{g}}{2}  \,\Big|\, \hat{\bo} \Bigr).\end{equation}
Then the spectral description of the first $Q$-operator has the following form
\begin{equation} Q_n(\l) \, \Psi_{\bl_n}(\bx_n) = \prod_{j = 1}^n \hat{K}(\l - \l_j) \, \Psi_{\bl_n}(\bx_n), \end{equation}
see \cite[Theorem 4]{BDKK2}. For the proof we use the iterative structure of the wave function \eqref{I12} and the local exchange relation between $Q$-operator and raising operator \cite[Theorem 3]{BDKK2}
\begin{equation} \label{QLcomm} Q_n(\l) \, \Lambda_n(\rho) = \hat{K}(\l - \rho) \, \Lambda_n(\rho) \, Q_{n - 1}(\l). \end{equation}
In the course of proof one also needs to justify convergence of various multiple integrals appearing on the way.

The $\Lambda$-operator \rf{I11} can be obtained in the certain limit from the first Baxter $Q$-operator, due to the similarity of their kernels \eqref{Qker}, \eqref{Lker} and asymptotic behavior of the function~$K(x)$. Consequently, the exchange relation \eqref{QLcomm} can be obtained in the limit of the $Q$-commutativity relation \eqref{QQcomm}, see \cite[Section~2]{BDKK2}.

In  present work we investigate properties of the second $Q$-operator \eqref{Q*} precisely in the same way. First, taking certain limit of the commutativity relation \eqref{QQ*comm} we obtain an analogous to \eqref{QLcomm} exchange relation, but for the second $Q$-operator. Denote
\begin{equation} \hat{K}^*(\l) := K_{\hat{g}}(\l | \hat{\bo}) = S_2^{-1} \Bigl(\imath \l + \frac{\hat{g}^*}{2} \,\Big|\, \hat{\bo} \Bigr) \, S_2^{-1} \Bigl( -\imath \l + \frac{\hat{g}^*}{2} \,\Big|\, \hat{\bo} \Bigr). \end{equation}
It is a counterpart of the function $\hat{K}(\l)$ with respect to the reflection $\hat{g} \rightarrow \hat{g}^*$. In what follows we always assume conditions \eqref{I0a}, \eqref{I0b} and \eqref{nu*}.

\begin{restatable}{thm}{QLthm} \label{QLthm}
The operator identity
\begin{equation} \label{Q*Lcomm}
Q^*_{n}(\lambda) \, \Lambda_{n}(\rho) = \hat{K}^*(\lambda - \rho) \; \Lambda_{n}(\rho) \, Q^*_{n-1}(\lambda)
\end{equation}
holds true for $\lambda, \rho \in \mathbb{C}$ such that 
\begin{equation} | \Im(\lambda - \rho) | < \frac{\nu_{g^*}}{2}. \end{equation}
\end{restatable}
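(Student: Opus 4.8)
The plan is to derive the exchange relation \eqref{Q*Lcomm} as a degenerate limit of the commutativity \eqref{QQ*comm}, in exact parallel with how \eqref{QLcomm} is obtained from \eqref{QQcomm} in \cite[Section~2]{BDKK2}. The starting point is the observation that the raising operator $\Lambda_n(\rho)$ arises from $Q_n(\rho;g^*)$ (equivalently from $Q_n^*(\rho)$, after conjugation by $\eta$) by sending one of the spectral-type variables to infinity: the kernels \eqref{Qker} and \eqref{Lker} differ only by dropping one integration variable $y_n$ together with the corresponding $K$-factors and $\mu$-factors, and this dropping is implemented analytically by shifting $y_n \to y_n + c$, $\rho \to \rho + \tau$ with $c, \tau$ real and $c \to +\infty$, using the asymptotics of $K(x)$ and $S_2$ recalled in Appendix~\ref{AppA}. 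Concretely, I would write $Q_n^*(\l)\, Q_n^*(\rho + \tau)$ via its double-integral kernel, perform this shift on the $\rho$-block of integration variables, and track the prefactors produced by the asymptotics; the normalizing constants $d_n$ are designed precisely so that the surviving prefactor is finite and equals $1$ in the relevant normalization, leaving $Q_n^*(\l)\,\Lambda_n(\rho)$ on one side.

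Applying the same limiting procedure to the right-hand side $Q_n^*(\rho+\tau)\,Q_n^*(\l)$ of \eqref{QQ*comm} produces $\Lambda_n(\rho)\,Q_{n-1}^*(\l)$, but now the order of operators matters: in $Q_n^*(\rho+\tau)\,Q_n^*(\l)$ the operator whose variable goes to infinity is the outer one, so one of its $n$ variables is being sent away, which reduces the inner $Q_n^*(\l)$ to $Q_{n-1}^*(\l)$ acting on the remaining $n-1$ variables, and leaves $\Lambda_n(\rho)$ as the outer operator. The scalar factor $\hat K^*(\l - \rho)$ is exactly the residual contribution of the cross-terms $K^*(x_i - y_j)$ linking the two blocks as the $\rho$-block variable escapes: using the product-of-double-sines form of $K^*$ and the exponential asymptotics $S_2(z|\bo) \sim e^{\mp \ldots}$ for $\Re z \to \pm\infty$, each of these cross-factors contributes a $K_{\hat g}(\cdot|\hat\bo)$-type expression in the difference of spectral parameters, and collecting them yields precisely $\hat K^*(\l-\rho) = K_{\hat g}(\l - \rho|\hat\bo)$. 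I would present this as: take the limit on both sides of \eqref{QQ*comm}, identify the two sides as above, and cancel the common nonzero factor.

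The analyticity bookkeeping is the step that needs genuine care rather than routine computation. The commutativity \eqref{QQ*comm} holds only in the strip $|\Im(\l-\rho)| < (\nu_g+\nu_{g^*})/2$ of \eqref{lrho}, and the limiting shift $\rho \to \rho+\tau$ with $\tau \to +\infty$ along the real axis keeps $\Im(\l-\rho)$ fixed, so the relation survives the limit; but one must check that the multiple integrals defining both sides converge uniformly enough on compact subsets of the target strip $|\Im(\l-\rho)| < \nu_{g^*}/2$ to justify interchanging the limit $\tau\to\infty$ with the integration, and that no poles of the $S_2$-factors cross the contours $\R^{n-1}$, $\R^n$ during the shift. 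This is where conditions \eqref{I0a}, \eqref{I0b}, \eqref{nu*} enter: they guarantee the requisite decay of $\mu$, $K$, $K^*$ at infinity (exponential decay governed by $\nu_g$, $\nu_{g^*}$) so that the dominated-convergence argument goes through and the resulting strip is the stated $|\Im(\l-\rho)| < \nu_{g^*}/2$. The main obstacle, then, is not any single clever identity but the careful estimation of the tail behavior of the $n$- and $(n-1)$-fold integrands and the verification that the pole-free shifting of contours is legitimate in the full claimed parameter region; the algebraic extraction of the prefactor $\hat K^*(\l-\rho)$ from the $S_2$-asymptotics is then a short computation that mirrors the one already done for $\hat K(\l-\rho)$ in \cite{BDKK2}.
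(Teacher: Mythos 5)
Your high-level plan --- obtain \eqref{Q*Lcomm} as a degenerate limit of the mixed commutativity \eqref{QQ*comm} --- is indeed the route the paper takes, but your setup of the degeneration contains errors that would derail the computation. First, you degenerate the wrong factor: you assert that $\Lambda_n(\rho)$ arises from $Q_n(\rho;g^*)$, equivalently from $Q^*_n(\rho)$, yet comparing \eqref{Qker} with \eqref{Lker} shows that $\Lambda_n(\rho)$ is a degeneration of the \emph{unstarred} $Q_n(\rho)$ (same $K$, same $\mu$, same constant $d_{n-1}(g)$, one fewer $y$-variable); degenerating $Q^*_n$ would instead produce $\Lambda^*_n$ with kernel \eqref{L*ker}, a different operator. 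Accordingly the correct starting point is $Q^*_n(\l)\,Q_n(\rho)=Q_n(\rho)\,Q^*_n(\l)$ with the unstarred factor degenerating, and the limit is taken in the last \emph{coordinate} $z_n$ of the composed kernel (after multiplying by a regularizing exponential $r(\bz_n;\rho)$), not by sending $\rho\to\rho+\tau$ with $\tau\to\infty$ while shifting an integration variable $y_n\to y_n+c$: the passage from the $n$-point kernel $Q(\by_n,\bz_n;\rho)$ to the $(n{-}1)$-point kernel $\Lambda(\by_n,\bz_{n-1};\rho)$ is a reduction in the number of output coordinates $\bz$, and a shift of an integration variable over $\R$ does not by itself remove an integration.

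Second, and more seriously, your explanation of the scalar $\hat{K}^*(\l-\rho)$ is wrong. You attribute it to the accumulated asymptotics of the cross-factors $K^*(x_i-y_j)$ as a variable escapes; but asymptotics of $S_2$ produce pure exponentials, and after the regularizing prefactor all such factors tend to $1$ --- they cannot generate a double sine of $\l-\rho$. In the actual proof the factor appears on the side $Q_n(\rho)\,Q^*_n(\l)$ only after one symmetrizes over the domains where a given $y_j$ is maximal and shifts $y_n\to y_n+z_n$; in the limit $z_n\to\infty$ the $y_n$-integral \emph{separates} and equals the exact Fourier transform $\int_{\R}dy_n\,e^{2\pi\imath(\l-\rho+\imath\hat{g}/2)y_n}K^*(y_n)=\sqrt{\o_1\o_2}\,S_2(g^*)\,\hat{K}^*\bigl(\l-\rho+\imath\hat{g}/2\bigr)$ of \eqref{K-fourier}. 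Without this separation-of-variables step your argument has no mechanism producing a function of $\l-\rho$ at all. Finally, the dominated-convergence bounds force the asymmetric strip $-\tfrac{1}{2}(\nu_{g^*}+\nu_g)<\Im(\l-\rho)<\tfrac{1}{2}(\nu_{g^*}-\nu_g)$, strictly smaller than \eqref{lrho}; the symmetric strip $|\Im(\l-\rho)|<\nu_{g^*}/2$ of the theorem emerges only after the concluding reparametrization $\rho\to\rho+\imath\hat{g}/2$, so the claim that the stated strip comes directly from keeping $\Im(\l-\rho)$ fixed in the limit is not the right bookkeeping.
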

The  proof is given in Section \ref{sec:exch}.

A further reduction of the formula \eqref{Q*Lcomm} gives one more local relation. Define  the raising operator $\Lambda^*_n(\l)$ analogously to the operator $Q_n^*(\l)$ \eqref{Q*}
\begin{equation}\label{L*def}
	\Lambda^*_n(\l) = \eta^{-1}(\bx_n) \, \Lambda_n(\l; g^*) \, \eta(\bx_{n - 1}),
\end{equation}
where from the right we emphasized the dependence on the coupling constant in the notation of the $\Lambda$-operator. Explicitly, we introduced the integral operator
\begin{equation}
	\bigl( \Lambda^*_n(\l) f \bigr) (\bx_n) = d_{n - 1}(g^*) \int_{\R^{n - 1}} d\by_{n - 1} \, \Lambda^*(\bx_n, \by_{n - 1}; \l) f(\by_{n - 1})
\end{equation}
with the kernel
\begin{equation}\label{L*ker}
	\Lambda^*(\bx_n, \by_{n - 1}; \l) = \eta^{-1}(\bx_n) \, e^{2\pi \imath \l (\bbx_n - \bby_{n - 1})} \, K^*(\bx_n, \by_{n - 1}) \, \Delta(\by_{n - 1})
\end{equation}
and normalizing constant \eqref{dconst}. Taking the certain limit of the identity \eqref{Q*Lcomm} we obtain the following relation.
\begin{restatable}{thm}{LLthm}\label{LLthm}
	The operator identity
	\begin{equation}\label{LL*}
		\Lambda_{n}^*(\l) \, \Lambda_{n - 1}(\rho) = K_{2\hat{g}}(\l - \rho | \hat{\bo}) \,  \Lambda_n(\rho) \, \Lambda^*_{n - 1}(\l)
	\end{equation}
	holds true for $\l, \rho \in \mathbb{C}$ such that
	\begin{equation}
		| \Im(\l - \rho) | < \min(\nu_{g}, \nu_{g^*}).
	\end{equation}
\end{restatable}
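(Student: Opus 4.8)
The plan is to obtain \eqref{LL*} as a degeneration of the exchange relation \eqref{Q*Lcomm} of Theorem~\ref{QLthm}, in exactly the way \eqref{Q*Lcomm} is itself obtained in Section~\ref{sec:exch} from the commutativity \eqref{QQ*comm}, and \eqref{QLcomm} is obtained in \cite[Section~2]{BDKK2} from \eqref{QQcomm}. The starting point is that the kernels \eqref{Qker}, \eqref{Lker} of $Q_n(\l)$ and $\Lambda_n(\l)$ --- and equally \eqref{Q*ker}, \eqref{L*ker} of $Q^*_n(\l)$ and $\Lambda^*_n(\l)$ --- coincide up to the number of integration variables, while the double sine asymptotics collected in Appendix~\ref{AppA} give the exponential decay of $K(x)$, $K^*(x)$ and growth of $\mu(x)$, with rates fixed by $\hat{g}$, $\hat{g}^*$. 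These asymptotics provide, as in \cite[Section~2]{BDKK2}, a limiting procedure under which $Q^*_m(\l)$ passes into $\Lambda^*_m(\l)$ (up to a rescaling and possibly a shift of the spectral parameter), with the normalization $d_m(g^*)$ going over to $d_{m-1}(g^*)$ as in \eqref{dconst}.

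Applying this limit to \eqref{Q*Lcomm}: on the left-hand side $Q^*_n(\l)\,\Lambda_n(\rho)$ the degeneration of $Q^*_n(\l)$ into $\Lambda^*_n(\l)$ forces the preceding $\Lambda_n(\rho)$ to drop one level, so that the whole side becomes $\Lambda^*_n(\l)\,\Lambda_{n-1}(\rho)$; on the right-hand side $\hat{K}^*(\l-\rho)\,\Lambda_n(\rho)\,Q^*_{n-1}(\l)$ only $Q^*_{n-1}(\l)$ degenerates, into $\Lambda^*_{n-1}(\l)$, leaving $\Lambda_n(\rho)$ untouched, so that side becomes $\hat{K}^*(\l-\rho)$ times $\Lambda_n(\rho)\,\Lambda^*_{n-1}(\l)$, up to the scalar produced by the limit. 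Since \eqref{Q*Lcomm} is an identity, equating the two limits and cancelling the scalar factors leaves a coefficient which, after evaluation by the reflection formula \eqref{A3} and the shift and unitarity functional equations for $S_2$ (Appendix~\ref{AppA}), is exactly $K_{2\hat{g}}(\l-\rho\,|\,\hat{\bo})$ --- this is the point where the dual periods $\hat{\bo}$ and the coupling $2\hat{g}$ appear. This proves \eqref{LL*}.

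The real work, and the main obstacle, is the analytic bookkeeping: (i) interchanging the limit with the $(n-1)$-fold integrals on the two sides --- which requires the $S_2$-asymptotics on the relevant sectors of $\C$ with uniform remainder control so that a dominated-convergence argument applies --- and, crucially, recognising the limiting integrands as the kernels of the composite operators claimed above; (ii) tracking every scalar factor --- the power of the large parameter produced by the competition between the decay of $\prod_i K^*$ and the growth of the $\mu$- and $\Delta$-factors, the constants in the $S_2$-asymptotics, and the ratios of normalizations from \eqref{dconst} --- so that they cancel and the coefficient comes out precisely $K_{2\hat{g}}(\l-\rho\,|\,\hat{\bo})$; and (iii) determining the range of validity, the strip $|\Im(\l-\rho)|<\min(\nu_g,\nu_{g^*})$ being where the resulting integral representations converge absolutely, possibly reached by analytic continuation in $\l-\rho$ from a narrower strip on which the limit is directly legitimate. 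A self-contained alternative is to prove \eqref{LL*} from scratch as a hyperbolic integral identity --- the invariance of a suitable $(n-1)$-fold integral of a product of $K$, $K^*$ and hyperbolic Vandermonde factors under an appropriate exchange --- by the same degeneration of the Rains identity \cite[Theorem~4.1]{Rains1} that underlies Theorem~\ref{theorem1}; but the route through Theorem~\ref{QLthm} is considerably shorter.
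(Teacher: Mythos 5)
Your proposal follows essentially the same route as the paper: the authors prove \eqref{LL*} by writing the exchange relation \eqref{Q*Lcomm} in terms of kernels, sending the last coordinate $z_{n-1}\to\infty$ after multiplying by a compensating exponential factor so that $Q^*_k$ degenerates into $\Lambda^*_k$ (with the spectral parameter shifted by $\imath\hat{g}^*/2$), separating a Fourier transform of $K$ via the symmetrization-and-shift trick, justifying the interchange of limit and integration by dominated convergence, and finally cancelling a common $S_2$-factor and analytically continuing from a narrower strip to $|\Im(\l-\rho)|<\min(\nu_g,\nu_{g^*})$. Even your closing remark about an alternative direct derivation from the Rains identity matches the remark the authors place after their proof.
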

Note that the coefficient in this relation explicitly reads
\begin{equation}
	K_{2\hat{g}}(\l - \rho | \hat{\bo}) = S^{-1}_2(\imath \l - \imath \rho + \hat{g}^*|\hat{\bo}) \, S^{-1}_2(\imath \rho - \imath \lambda + \hat{g}^*|\hat{\bo}).
\end{equation}
The proof of this exchange relation is given in Section \ref{sec:exch}. The whole chain of reductions looks as
\begin{align*}
&Q^*_n(\l)\, Q_n(\rho) = Q_n(\rho) \, Q_n^*(\l) &&\text{(Theorem \ref{theorem1})} \\[4pt]
& \hspace{2.25cm} \downarrow \\[4pt]
&Q^*_n(\l) \, \Lambda_n(\rho) = K_{\hat{g}}(\l - \rho | \hat{\bo}) \, \Lambda_n(\rho) \, Q^*_{n - 1}(\l) &&\text{(Theorem \ref{QLthm})}\\[4pt]
& \hspace{2.25cm} \downarrow \\[4pt]
&\Lambda_{n}^*(\l) \, \Lambda_{n - 1}(\rho) = K_{2\hat{g}}(\l - \rho | \hat{\bo}) \,  \Lambda_n(\rho) \, \Lambda^*_{n - 1}(\l) &&\text{(Theorem \ref{LLthm})}
\end{align*}
We also remark that similar limits performed among the original operators $Q_n(\l)$ and $\Lambda_n(\l)$ give the relations
\begin{align*}
	& Q_n(\l)\, Q_n(\rho) = Q_n(\rho) \, Q_n(\l) &&\text{\cite[Theorem 2]{BDKK}} \\[4pt]
	& \hspace{2.25cm} \downarrow \\[4pt]
	& Q_n(\l) \, \Lambda_n(\rho) = K_{\hat{g}^*}(\l - \rho | \hat{\bo}) \, \Lambda_n(\rho) \, Q_{n - 1}(\l) &&\text{\cite[Theorem 3]{BDKK2}}\\[4pt]
	& \hspace{2.25cm} \downarrow \\[4pt]
	& \Lambda_{n}(\l) \, \Lambda_{n - 1}(\rho) =  \Lambda_n(\rho) \, \Lambda_{n - 1}(\l) &&\text{\cite[Theorem 2]{BDKK2}}
\end{align*}

 Using the iterative representation of the wave function \eqref{I12} and the relation \eqref{Q*Lcomm} we arrive at the spectral description of the second Baxter $Q$-operator. Its proof with the necessary convergence arguments is given in Section~\ref{sec:Q*eigen}.
\begin{restatable}{thm}{QPsithm} \label{thm:QPsi}
	The wave function $\Psi_{ \bm{\lambda}_n }(\bm{x}_n)$ is a joint eigenfunction of the commuting family of operators $Q^*_{n}(\lambda)$
	\begin{equation}\label{QPsi}
		Q^*_{n}(\lambda) \, \Psi_{ \bm{\lambda}_n }(\bm{x}_n) = \prod_{j = 1}^n \hat{K}^*(\lambda-\lambda_j) \, \Psi_{ \bm{\lambda}_n }(\bm{x}_n).
	\end{equation}
	The integrals in both sides of \eqref{QPsi} converge if
	\begin{equation}
		| \Im(\l - \l_n) | < \frac{1}{2}(\nu_{g^*} - \epsilon \nu_g), \qquad |\Im(\l_k - \l_j)| \leq \theta(\epsilon), \qquad k,j = 1, \dots, n
	\end{equation}
	for any $\epsilon \in [0,1)$ and
	\begin{equation}
		\theta(\epsilon) = \frac{\nu_g}{2(n - 1)! e} \epsilon.
	\end{equation} 
\end{restatable}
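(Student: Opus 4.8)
The plan is to induct on $n$, using the recursive structure \eqref{Psi-rec} of the wave function together with the exchange relation of Theorem~\ref{QLthm}.

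\emph{Base case $n = 1$.} Here $\eta(\bx_1) \equiv 1$, so $Q^*_1(\lambda)$ coincides with $Q_1(\lambda; g^*)$ and acts on the ($g$-independent) function $\Psi_{\lambda_1}(x_1) = e^{\const \lambda_1 x_1}$; after the substitution $u = x_1 - y_1$ one gets $d_1(g^*)\, e^{\const \lambda_1 x_1} \int_{\R} du\, e^{\const(\lambda - \lambda_1)u}\, K^*(u)$. The remaining one-dimensional integral is the Fourier transform of a product of two double sine functions; it converges for $|\Im(\lambda - \lambda_1)| < \nu_{g^*}/2$ and equals $\hat{K}^*(\lambda - \lambda_1)$, which is the $n = 1$ specialization of \cite[Theorem~4]{BDKK2} under $g \to g^*$ (the period swap $\bo \to \hat{\bo}$ built into the definition of $\hat{K}^*$ is accounted for in that evaluation). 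Thus \eqref{QPsi} holds for $n = 1$, with the second condition vacuous and $\epsilon = 0$ admissible.

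\emph{Inductive step.} Assuming \eqref{QPsi} for $n - 1$, one applies in succession \eqref{Psi-rec}, the exchange relation \eqref{Q*Lcomm} with $\rho = \lambda_n$, the induction hypothesis, and \eqref{Psi-rec} once more:
\begin{align*}
Q^*_n(\lambda)\,\Psi_{\bl_n}(\bx_n)
&= \hat{K}^*(\lambda - \lambda_n)\,\Lambda_n(\lambda_n)\,Q^*_{n-1}(\lambda)\,\Psi_{\bl_{n-1}}(\bx_{n-1}) \\
&= \prod_{j=1}^{n}\hat{K}^*(\lambda - \lambda_j)\,\Lambda_n(\lambda_n)\,\Psi_{\bl_{n-1}}(\bx_{n-1})
= \prod_{j=1}^{n}\hat{K}^*(\lambda - \lambda_j)\,\Psi_{\bl_n}(\bx_n).
\end{align*}
Algebraically this is the whole statement; everything else is analysis.

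\emph{Main obstacle: convergence and Fubini.} To justify the chain above one must show that the $\tfrac{n(n+1)}{2}$-fold integral obtained by writing out $Q^*_n(\lambda)\Psi_{\bl_n}(\bx_n)$ through \eqref{I14}, \eqref{Q*ker}, \eqref{I11}, \eqref{Lker} converges absolutely in the stated region, so that the integration order may be rearranged to perform the $\bx$-integral of $Q^*_n(\lambda)$ first against the outermost $\by_{n-1}$-kernel of $\Lambda_n(\lambda_n)$; this is precisely what makes the operator identity \eqref{Q*Lcomm} applicable \emph{inside} the wave-function integral. The required bounds follow from the exponential asymptotics of $K^*(x)$ (decay controlled by $\nu_{g^*}$, analogous to the asymptotics of $K(x)$ exploited in \cite{BDKK2}) combined with the known growth of the raising-operator kernel \eqref{Lker}, propagated recursively through the inner layers $\Lambda_{n-1}(\lambda_{n-1}), \dots, \Lambda_2(\lambda_2)$. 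At each layer one must split the available ``imaginary room'' between the inner and outer integrations; optimizing this split produces exactly the trade-off between the strip $|\Im(\lambda - \lambda_n)| < \tfrac{1}{2}(\nu_{g^*} - \epsilon\nu_g)$ and the bound $|\Im(\lambda_k - \lambda_j)| \le \theta(\epsilon)$, with the factorial $(n-1)!$ and the Euler number $e$ appearing in $\theta(\epsilon)$ from the resulting geometric-type recursion. One must also check, separately, that the right-hand side of \eqref{QPsi} --- the wave function $\Psi_{\bl_n}(\bx_n)$ times the bounded factor $\prod_j \hat{K}^*(\lambda - \lambda_j)$ --- converges in the same region, which is covered by the convergence analysis of $\Psi$ in \cite{HR1, BDKK2}. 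I expect this convergence bookkeeping, rather than the algebra, to be the crux of the proof.
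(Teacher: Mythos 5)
Your proposal follows essentially the same route as the paper: induction on $n$ with the base case reduced to the Fourier transform of $K^*$, the exchange relation of Theorem~\ref{QLthm} applied at the top layer, and absolute convergence of the multiple integrals invoked to justify the interchanges of integration order. The algebraic skeleton is identical to the paper's argument.

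There is, however, one concrete step you gloss over. The induction requires \emph{two} applications of Fubini: one to rearrange $I_{\l,\bl_n}=Q^*_n(\l)\,\Lambda_n(\l_n)\cdots\Lambda_2(\l_2)\,e^{\const\l_1x_1}$ so that the exchange relation can be applied, and a second one to rearrange the resulting intermediate integral $J_{\l,\bl_n}=\Lambda_n(\l_n)\,Q^*_{n-1}(\l)\,\Lambda_{n-1}(\l_{n-1})\cdots\Lambda_2(\l_2)\,e^{\const\l_1x_1}$ so that the $\Lambda_n(\l_n)$ integrations become outermost and the induction hypothesis can be applied to the inner block. Your plan to establish absolute convergence ``in the stated region'' addresses only the first integral; the paper's Proposition~\ref{prop1} controls $J_{\l,\bl_n}$ only on the restricted slice $\Im\l=\Im\l_k$ for all $k$, not on the full domain \eqref{I-dom}. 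Consequently the identity is first proved on that real slice and then extended to the whole domain by analytic continuation, using the analyticity of $I_{\l,\bl_n}$ (and of the right-hand side) on compact subsets of \eqref{I-dom}. Without this two-step structure --- or a genuinely stronger convergence estimate for $J_{\l,\bl_n}$ that you have not supplied --- the inductive chain as you wrote it is not justified on the full stated domain. Everything else in your outline, including the origin of the factor $(n-1)!\,e$ in $\theta(\epsilon)$ from the recursive bounds on the layered integrals, matches the paper's treatment.
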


On the other hand, 
the exchange relation \eqref{LL*} is crucial for the proof of the wave function symmetry with respect to the reflection $g \rightarrow g^*$, which is suggested by the formula \eqref{Msymm}. Introduce the counterpart of the function \eqref{eta}
\begin{equation}\label{coef}
 \hat{\eta}(\bl_n) := \prod_{\substack{i,j=1 \\ i\not=j}}^n S_2^{-1}(\imath \l_i - \imath \l_j + \hat{g}^* | \hat{\bo}).
\end{equation}

\begin{restatable}{thm}{Psisym} \label{thm:psi-g}
The wave function satisfies the relation
\begin{equation}\label{psi-g}
\Psi_{ \bm{\lambda}_n }(\bx_n; g | \bo) = \hat{\eta}^{-1}(\bl_n) \, \eta^{-1}(\bx_n) \, \Psi_{ \bm{\lambda}_n }(\bx_n; g^* | \bo).
\end{equation}
\end{restatable}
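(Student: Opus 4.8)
The plan is to prove \eqref{psi-g} by induction on $n$, exploiting the recursive definition \eqref{Psi-rec} of the wave function together with the exchange relation \eqref{LL*} from Theorem \ref{LLthm}. The base case $n = 1$ is immediate: $\Psi_{\lambda_1}(x_1) = e^{2\pi\imath\lambda_1 x_1}$ does not depend on $g$, and both $\eta(\bx_1)$ and $\hat\eta(\bl_1)$ are empty products equal to $1$, so \eqref{psi-g} holds trivially.

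For the inductive step, the idea is to rewrite $\Lambda_n^*(\l_n)$ acting on $\Psi_{\bl_{n-1}}(\bx_{n-1};g^*|\bo)$ in two ways. On one hand, by the very definition \eqref{L*def}, $\Lambda_n^*(\l_n) = \eta^{-1}(\bx_n)\,\Lambda_n(\l_n;g^*)\,\eta(\bx_{n-1})$, and applying the induction hypothesis in the form $\eta(\bx_{n-1})\,\Psi_{\bl_{n-1}}(\bx_{n-1};g^*|\bo) = \hat\eta(\bl_{n-1})\,\Psi_{\bl_{n-1}}(\bx_{n-1};g|\bo)$ converts the right-hand factor into the $g$-wave function. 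On the other hand, the natural object to compare with is $\Lambda_n(\l_n;g)\,\Psi_{\bl_{n-1}}(\bx_{n-1};g|\bo) = \Psi_{\bl_n}(\bx_n;g|\bo)$. So what I really need is a relation expressing $\Lambda_n^*$ applied to the full iterated product $\Lambda_{n-1}(\l_{n-1})\cdots\Lambda_2(\l_2)e^{2\pi\imath\l_1 x_1}$ in terms of the plain $\Lambda$-operators. Here is where \eqref{LL*} enters: it lets me commute $\Lambda_n^*(\l_n)$ past $\Lambda_{n-1}(\l_{n-1})$ at the cost of the scalar factor $K_{2\hat g}(\l_n - \l_{n-1}|\hat\bo)$ and turning $\Lambda_{n-1}^*(\l_{n-1})$ into $\Lambda_{n-1}(\l_{n-1})$, i.e. $\Lambda_n^*(\l_n)\Lambda_{n-1}(\l_{n-1}) = K_{2\hat g}(\l_n - \l_{n-1}|\hat\bo)\,\Lambda_n(\l_n)\Lambda_{n-1}^*(\l_{n-1})$. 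Iterating this down the chain $\Lambda_n^*\Lambda_{n-1}\Lambda_{n-2}\cdots$, each step produces one factor $K_{2\hat g}(\l_n-\l_j|\hat\bo)$ for $j = n-1, n-2, \dots, 1$ (with the convention that at the bottom $\Lambda_2^*(\l_2)e^{2\pi\imath\l_1 x_1}$ is handled directly, since $\Lambda_1^* $ collapses), and leaves behind $\Lambda_n(\l_n)\Psi_{\bl_{n-1}}(\bx_{n-1};g|\bo) = \Psi_{\bl_n}(\bx_n;g|\bo)$.

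Assembling the two computations, I get
\begin{equation*}
\eta^{-1}(\bx_n)\,\Psi_{\bl_n}(\bx_n;g^*|\bo) = \Lambda_n^*(\l_n)\,\hat\eta^{-1}(\bl_{n-1})^{-1}\cdots
\end{equation*}
more precisely: starting from $\Psi_{\bl_n}(\bx_n;g^*|\bo) = \Lambda_n(\l_n;g^*)\Psi_{\bl_{n-1}}(\bx_{n-1};g^*|\bo)$, insert $1 = \eta(\bx_n)\eta^{-1}(\bx_n)$ and $1 = \eta(\bx_{n-1})\eta^{-1}(\bx_{n-1})$ to recognize $\eta^{-1}(\bx_n)\Psi_{\bl_n}(\bx_n;g^*|\bo) = \Lambda_n^*(\l_n)\,\eta^{-1}(\bx_{n-1})\Psi_{\bl_{n-1}}(\bx_{n-1};g^*|\bo)$, then use the induction hypothesis $\eta^{-1}(\bx_{n-1})\Psi_{\bl_{n-1}}(\bx_{n-1};g^*|\bo) = \hat\eta(\bl_{n-1})\,\Psi_{\bl_{n-1}}(\bx_{n-1};g|\bo)$, then apply the iterated \eqref{LL*} to move $\Lambda_n^*(\l_n)$ through $\Psi_{\bl_{n-1}}(\bx_{n-1};g|\bo) = \Lambda_{n-1}(\l_{n-1})\cdots e^{2\pi\imath\l_1 x_1}$, collecting $\prod_{j=1}^{n-1}K_{2\hat g}(\l_n-\l_j|\hat\bo)$ and producing $\Psi_{\bl_n}(\bx_n;g|\bo)$. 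It remains to check that $\hat\eta(\bl_{n-1})\prod_{j=1}^{n-1}K_{2\hat g}(\l_n-\l_j|\hat\bo) = \hat\eta(\bl_n)$, which is a direct computation: $K_{2\hat g}(\l_n-\l_j|\hat\bo) = S_2^{-1}(\imath\l_n - \imath\l_j + \hat g^*|\hat\bo)\,S_2^{-1}(\imath\l_j - \imath\l_n + \hat g^*|\hat\bo)$ supplies exactly the $2(n-1)$ factors of \eqref{coef} that involve the index $n$, completing the telescoping into $\hat\eta(\bl_n)$. This gives $\eta^{-1}(\bx_n)\Psi_{\bl_n}(\bx_n;g^*|\bo) = \hat\eta(\bl_n)\Psi_{\bl_n}(\bx_n;g|\bo)$, which is \eqref{psi-g}.

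The main obstacle is not algebraic but analytic: the iterated use of \eqref{LL*} requires the spectral parameters to satisfy $|\Im(\l_n - \l_j)| < \min(\nu_g,\nu_{g^*})$ for all $j < n$, and, more seriously, the multiple integrals defining both sides of the claimed identity must converge so that the formal manipulations (inserting $\eta\eta^{-1}$, commuting operators, applying Fubini) are justified. I expect the cleanest route is to first establish \eqref{psi-g} for $\bl_n$ real (or in a small polydisc where all the relevant exchange relations apply and all integrals converge absolutely, along the lines of the convergence analysis already invoked for Theorem \ref{thm:QPsi}), and then extend to general $\bl_n \in \mathbb{C}^n$ by analytic continuation, using that both sides are meromorphic in the $\l_j$ and that $\hat\eta(\bl_n)$, $\eta(\bx_n)$ are explicit. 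One should also note that the identity is to be read as an identity of (meromorphic families of) functions of $\bx_n$, so the continuation argument is in the $\bl$-variables only. A secondary point to verify carefully is the bottom of the induction, where $\Lambda_2^*(\l_2)$ acts on $e^{2\pi\imath\l_1 x_1}$: here one checks directly from \eqref{L*ker} that $\Lambda_2^*(\l_2)e^{2\pi\imath\l_1 x_1} = \hat\eta^{-1}(\bl_1)^{-1}\cdots$ — trivially, since $\hat\eta(\bl_1) = 1$ — matches $\eta^{-1}(\bx_2)\Lambda_2(\l_2;g^*)e^{2\pi\imath\l_1 x_1}$, so no extra factor is generated and the induction is consistent.
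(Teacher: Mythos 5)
Your proposal is correct and follows essentially the same route as the paper: induction on $n$ via the recursion \eqref{Psi-rec}, the definition \eqref{L*def} of $\Lambda_n^*$ to absorb $\eta^{-1}(\bx_n)$, the iterated exchange relation \eqref{LL*} to collect $\prod_{j=1}^{n-1}K_{2\hat{g}}(\l_n-\l_j|\hat{\bo})$, the telescoping of this product with $\hat{\eta}(\bl_{n-1})$ into $\hat{\eta}(\bl_n)$, and absolute-convergence estimates (the paper's Proposition \ref{prop2}) plus analytic continuation from real $\bl_n$. One slip: you wrote the exchange relation as $\Lambda_n^*(\l_n)\Lambda_{n-1}(\l_{n-1})=K_{2\hat{g}}(\l_n-\l_{n-1}|\hat{\bo})\,\Lambda_n(\l_n)\Lambda_{n-1}^*(\l_{n-1})$, whereas \eqref{LL*} swaps the spectral parameters, $\Lambda_n^*(\l)\Lambda_{n-1}(\rho)=K_{2\hat{g}}(\l-\rho|\hat{\bo})\,\Lambda_n(\rho)\Lambda_{n-1}^*(\l)$; only the correct form yields the factors $K_{2\hat{g}}(\l_n-\l_j|\hat{\bo})$ you claim at each step, and it leaves the plain $\Lambda$-chain with cyclically permuted parameters, so one also invokes the symmetry of $\Psi_{\bl_n}$ in the spectral variables (a point the paper likewise leaves implicit).
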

 
 The proof of Theorem \ref{thm:psi-g} based on the local relation \rf{LL*} is given in Section \ref{sec:eigen-g}. The symmetry \eqref{psi-g} was conjectured by M.~Halln\"as and S. Ruijsenaars in \cite{HR1} and was proven by them for $n=2$ in \cite{HR2} by different method. 

  The relation \rf{psi-g} complements the list of symmetries of the wave function $\Psi_{ \bm{\lambda}_n }(\bx_n; g | \bo)$ conjectured by S. Ruijsenaars and proven in \cite{BDKK2}: symmetry over spatial variables and over spectral variables, and the space-spectral duality
 \begin{equation}\label{duality}
	\Psi_{ \bm{\lambda}_n }(\bx_n; g | \bo) = \Psi_{ \bx_n }(\bl_n; \hat{g}^* | \hat{\bo}),
\end{equation}
see  \cite[Theorem 5]{BDKK2}. Note the consistency of the relations \rf{psi-g} and \rf{duality}.

In \cite{HR1} M.~Halln\"as and S. Ruijsenaars also conjectured that the poles of the wave function $\Psi_{\bl_n}(\bx_n; g |\bo)$ are located at the points
\begin{equation}\label{Psi-poles}
x_i - x_j = \imath g^* + m^1 \o_1 + m^2 \o_2, \qquad \l_i - \l_j = \imath \hat{g} + m^1 \hat{\o}_1 + m^2 \hat{\o}_2, \qquad m^k \geq 0
\end{equation}
for all $i, j$. This conjecture agrees with the formula \eqref{psi-g}: the points \eqref{Psi-poles} are precisely the poles of the coefficient $\hat{\eta}^{-1}(\bl_n) \, \eta^{-1}(\bx_n)$, see \eqref{A1a}.

Let us give one important corollary of both relations \eqref{psi-g}, \eqref{duality} and of the principal result of the paper \cite{HR3}. Introduce the function
\begin{equation}
\mu'(\bx_n) = \prod_{\substack{i, j = 1 \\ i < j}}^n \mu(x_i - x_j),
\end{equation}
so that for the measure function \eqref{I5} we have
\begin{equation}
\mu(\bx_n) = \mu'(\bx_n) \, \mu'(-\bx_n).
\end{equation}
Introduce also its counterpart with respect to the space-spectral duality \eqref{duality}
\begin{equation}
\hat{\mu}'(\bl_n) = \prod_{\substack{i, j = 1 \\ i < j}}^n \hat{\mu}(\l_i - \l_j), \qquad \hat{\mu}(\l) := \mu_{\hat{g}^*}(\l_i - \l_j | \hat{\bo}) = S_2(\imath \l | \hat{\bo}) S_2^{-1}(\imath \l + \hat{g}^*| \hat{\bo}).
\end{equation}
Finally, let us define a close relative of the wave function
\begin{equation}\label{Edef}
E_{\bl_n}(\bx_n) := E_{\bl_n}(\bx_n; g | \bo) =  e^{-\frac{\imath \hat{g}g^*}{4}n(n - 1)} \, \mu'(\bx_n) \, \hat{\mu}'(\bl_n) \, \Psi_{\bl_n}(\bx_n).
\end{equation}
In the paper \cite{HR3} M.~Halln\"as and S. Ruijsenaars obtained the asymptotics of this function with respect to $\l_j$ using the recursive representation \eqref{I12}. Namely, in the case of real periods $\o_1, \o_2$ they proved that 
\begin{equation}\label{Easymp1}
E_{\bl_n}(\bx_n) = \hat{E}^{\mathrm{as}}_{\bl_n}(\bx_n) + O \bigl( e^{-2 \pi r d(\bl_n)} \bigr), \qquad \l_j - \l_{j + 1} \to \infty
\end{equation}
with  $j = 1, \dots, n - 1$, where
\begin{equation}
d(\bl_n) = \min_{1 \leq i < j \leq n}(\l_i - \l_j) 
\end{equation}
and the asymptotic function is given by the formula
\begin{equation}
\hat{E}^{\mathrm{as}}_{\bl_n}(\bx_n) = \sum_{\sigma \in S_n} \prod_{\substack{i < j \\ \sigma^{-1}(i) > \sigma^{-1}(j)}} \, \frac{\mu(x_i - x_j)}{\mu(x_j - x_i)} \; \exp \biggl( 2\pi \imath \sum_{j = 1}^n \l_j x_{\sigma(j)} \biggr).
\end{equation}
It is also assumed that
\begin{equation}\label{r-g-cond}
r \in \Bigl[\frac{\min(\o_1, \o_2)}{2}, \min(\o_1, \o_2) \Bigl), \qquad \Re g \in \bigl(0, \max(\o_1, \o_2) \bigr].
\end{equation}
The duality relation \eqref{duality} gives the same type of asymptotics with respect to the coordinates $x_j$, and using the relation \eqref{psi-g} we extend the interval \eqref{r-g-cond} on $\Re g$.

\begin{restatable}{prop}{propAsymp}
	The function $E_{\bl_n}(\bx_n)$ has asymptotics
	\begin{equation}\label{Easymp}
	E_{\bl_n}(\bx_n) = E^{\mathrm{as}}_{\bl_n}(\bx_n) + O \bigl( e^{-2 \pi r d(\bx_n)} \bigr), \qquad x_j - x_{j + 1} \to \infty
	\end{equation}
	with $j = 1, \dots, n - 1$, where
	\begin{equation}\label{Eas}
	E^{\mathrm{as}}_{\bl_n}(\bx_n) = \sum_{\sigma \in S_n} \prod_{\substack{i < j \\ \sigma^{-1}(i) > \sigma^{-1}(j)}} \, \frac{\hat{\mu}(\l_i - \l_j)}{\hat{\mu}(\l_j - \l_i)} \; \exp \biggl( 2\pi \imath \sum_{j = 1}^n \l_{\sigma(j)} x_{j} \biggr)
	\end{equation}
	and we assume $\o_1, \o_2 > 0$ together with
	\begin{equation}
	r \in \Bigl[\frac{\min(\hat{\o}_1, \hat{\o}_2)}{2}, \min(\hat{\o}_1, \hat{\o}_2) \Bigl), \qquad \Re g \in (0, \o_1 + \o_2).
	\end{equation}
\end{restatable}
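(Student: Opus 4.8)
The strategy is to deduce the statement from the Hallnäs--Ruijsenaars asymptotics \eqref{Easymp1} of \cite{HR3} by combining the two symmetries of the wave function recorded above, namely the space--spectral duality \eqref{duality} and the coupling constant reflection \eqref{psi-g}. The crucial point is that the normalization in \eqref{Edef} is arranged precisely so that $E$ inherits \emph{both} symmetries in a clean, prefactor-free form:
\begin{equation}\label{Esymmetries}
E_{\bl_n}(\bx_n; g|\bo) = E_{\bx_n}(\bl_n; \hat{g}^*|\hat{\bo}),
\qquad
E_{\bl_n}(\bx_n; g|\bo) = E_{\bl_n}(\bx_n; g^*|\bo).
\end{equation}

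First I would establish \eqref{Esymmetries}. The left identity follows from \eqref{duality} together with three routine checks: that the scalar prefactor in \eqref{Edef} is duality invariant, since $\widehat{\hat{g}^*} = g^*$ and $(\hat{g}^*)^* = \hat{g}$ with respect to $\hat{\bo}$, so that $\hat{g}g^* \mapsto g^*\hat{g}$; that $\mu'(\bx_n)$ and $\hat{\mu}'(\bl_n)$ are exchanged, which is immediate from $\hat{\mu}(\l) = \mu_{\hat{g}^*}(\l|\hat{\bo})$ and $\widehat{(\hat{g}^*)^*} = g$; and that $\Psi$ itself is exchanged, which is exactly \eqref{duality}. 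The right identity follows from \eqref{psi-g}: the ratio $\mu'_g(\bx_n)/\mu'_{g^*}(\bx_n)$ equals $\eta(\bx_n)$ and, on the spectral side, the ratio of the $\hat{\mu}'$-factors attached to $g$ and to $g^*$ equals $\hat{\eta}(\bl_n)$ --- both one-line consequences of the reflection formula \eqref{A3} --- so that these changes cancel the coefficient $\hat{\eta}^{-1}(\bl_n)\,\eta^{-1}(\bx_n)$ of \eqref{psi-g}, while the scalar prefactor is again unchanged because $\widehat{g^*}\,(g^*)^* = \hat{g}g^*$.

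Next I would transport the asymptotics. Applying \eqref{Easymp1} to the right-hand side of the left identity in \eqref{Esymmetries}, i.e.\ to $E_{\bx_n}(\bl_n; \hat{g}^*|\hat{\bo})$ with the roles of the $n$ spatial and $n$ spectral variables interchanged, the regime $\l_j - \l_{j+1}\to\infty$ becomes $x_j - x_{j+1}\to\infty$, the remainder $O(e^{-2\pi r d(\bl_n)})$ becomes $O(e^{-2\pi r d(\bx_n)})$, and the leading term $\hat{E}^{\mathrm{as}}$ turns into exactly the function $E^{\mathrm{as}}_{\bl_n}(\bx_n)$ of \eqref{Eas}, since under the interchange $\mu_g(\cdot|\bo)$ becomes $\mu_{\hat{g}^*}(\cdot|\hat{\bo}) = \hat{\mu}$ and $\sum_j \l_j x_{\sigma(j)}$ becomes $\sum_j \l_{\sigma(j)} x_j$. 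The hypotheses of \cite{HR3} translate into $\o_1,\o_2>0$, the stated range $r \in \bigl[\tfrac12\min(\hat{\o}_1,\hat{\o}_2),\,\min(\hat{\o}_1,\hat{\o}_2)\bigr)$, and $\Re\hat{g}^* \in (0,\max(\hat{\o}_1,\hat{\o}_2)]$, which amounts to $\Re g \in [\min(\o_1,\o_2),\,\o_1+\o_2)$. To remove this last restriction, note that for $\Re g \in (0,\min(\o_1,\o_2))$ one has $\Re g^* = \o_1+\o_2-\Re g \in (\max(\o_1,\o_2),\,\o_1+\o_2)$, so the preceding step applies to $E_{\bl_n}(\bx_n;g^*|\bo)$; by the right identity in \eqref{Esymmetries} this equals $E_{\bl_n}(\bx_n;g|\bo)$, and the leading term is still $E^{\mathrm{as}}_{\bl_n}(\bx_n)$ as written, because $\mu_g(x)/\mu_g(-x) = \mu_{g^*}(x)/\mu_{g^*}(-x)$, again by \eqref{A3}. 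Taking the union of the two ranges gives the asymptotics for all $\Re g \in (0,\o_1+\o_2)$.

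No new analytic input is required: the relations \eqref{Esymmetries} are exact identities of meromorphic functions, across which an asymptotic expansion transfers verbatim, and the genuinely hard analysis is already contained in \cite{HR3}. The only steps that call for attention are the double sine reflection-formula bookkeeping establishing \eqref{Esymmetries}, and the correct translation of the conditions on $(r,\Re g)$ through the duality $\bo \mapsto \hat{\bo}$; these, rather than any real analytic difficulty, constitute the main (if modest) obstacle.
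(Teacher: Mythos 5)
Your proposal is correct and follows essentially the same route as the paper: apply the Hallnäs--Ruijsenaars asymptotics through the space--spectral duality \eqref{duality} to get \eqref{Easymp} for $\Re g \in [\min(\o_1,\o_2),\,\o_1+\o_2)$, then use the reflection symmetry $E_{\bl_n}(\bx_n;g)=E_{\bl_n}(\bx_n;g^*)$ (a consequence of \eqref{psi-g} and \eqref{A3}) to cover $\Re g \in (0,\min(\o_1,\o_2))$. You spell out the prefactor bookkeeping in \eqref{Edef} more explicitly than the paper does, but the argument is the same.
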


A short proof of this proposition is given in Section \ref{sec:eigen-g}. The coefficients behind the exponents in the asymptotic function \eqref{Eas} are the factorized scattering amplitudes of the Ruijsenaars hyperbolic system. Remarkably, they are connected to the $S$-matrices in the various field theories, see \cite{R4} and references therein.

It is a challenge to realize a wider group of symmetries, permuting parameters of eigenfunctions in accordance with Nekrasov connection to  fields theories \cite{N} and known symmetries of elliptic hypergeometric functions, see \cite{S}

\subsection{Relation to Noumi-Sano operators}
\def\bbe{\boldsymbol{e}}
Denote by $[x|\o_1]_m$ and $[x|\o_2]_m$ the following trigonometric Pochhammer symbols 
\begin{equation} \begin{split}\label{N0}
	[x|\o_1]_m =& \frac{S_2(x)}{S_2(x + m \o_1)} = \prod_{j = 0}^{m - 1} 2 \sin \frac{\pi (x + j \o_1)}{\o_2},\\
	[x|\o_2]_m =& \frac{S_2(x)}{S_2(x + m \o_2)} = \prod_{j = 0}^{m - 1} 2 \sin \frac{\pi (x + j \o_2)}{\o_1}.
\end{split}\end{equation}
Note that 
\begin{equation}\label{N1} [x+\o_1|\o_2]_m = (-1)^m[x|\o_2]_m. \end{equation} 
In \cite{NS} M. Noumi and A. Sano introduced an infinite family of difference operators,  that in the case of hyperbolic Ruijsenaars model have the following form 
\begin{equation}  \label{N2}
N_r^{(1)}(\bx_n) = (-1)^r \sum_{\substack{\bm{m}_n \in \mathbb{N}^n_0 \\[2pt] | \bm{m}_n | = r}} \, \prod_{i,j=1}^n \frac{[\imath x_i - \imath x_j + g|\o_1]_{m_i}}{[\imath x_i - \imath x_j - m_j \o_1|\o_1]_{m_i}} \prod_{i=1}^n T^{-\imath m_i \o_1}_{x_i}.
\end{equation} 
Here $\bm{m}_n = (m_1, \dots, m_n)$ is a sequence of non-negative integers and
\begin{equation} | \bm{m}_n | = m_1 + \ldots + m_n. \end{equation} 
Let us collect them into the generating series
\begin{equation} \label{N3} N^{(1)}(\bx_n;\l)=\sum_{r=0}^\infty (-1)^r e^{-2\pi\l r\o_1} \, N_r^{(1)}(\bx_n).\end{equation} 
Alternatively, we can consider Noumi-Sano operators with shifts by period $\o_2$
\begin{equation}  \label{N4}
N_r^{(2)}(\bx_n) = (-1)^r \sum_{\substack{\bm{m}_n \in \mathbb{N}^n_0 \\[2pt] | \bm{m}_n | = r}} \, \prod_{i,j=1}^n \frac{[\imath x_i - \imath x_j + g|\o_2]_{m_i}}{[\imath x_i - \imath x_j - m_j \o_2|\o_2]_{m_i}} \prod_{i=1}^n T^{-\imath m_i \o_2}_{x_i}
\end{equation}
collected into the generating series
\begin{equation} \label{N5} N^{(2)}(\bx_n;\l)=\sum_{r=0}^\infty (-1)^r e^{-2\pi\l r\o_2} \, N_r^{(2)}(\bx_n).\end{equation} 
As opposed to Macdonald operators, Noumi-Sano operators contain shifts by multiple periods $\o_i$. 
In the work \cite{NS} it is proven that these operators commute with Macdonald operators and between themselves
\begin{equation}\label{N3a}\begin{split} M_r(\bx_n;g|\bo) \, N_s^{(i)}(\bx_n) &=  N_s^{(i)}(\bx_n)\, M_n(\bx_n;g|\bo), \\[8pt] N_r^{(i)}(\bx_n)N_s^{(i)}(\bx_n)&=N_s^{(i)}(\bx_n)N_r^{(i)}(\bx_n)\end{split}\end{equation}
for any $r,s$ and $i=1,2$. 
Due to \rf{N1} Noumi-Sano operators of different kind also commute between themselves. Moreover, since they can be expressed via Macdonald operators by means of certain determinant formulas \cite[Proposition 1.3]{NS}, they also commute with both families of $Q$-operators.

In Section \ref{sec:NS} we observe certain relations between Noumi-Sano operators and the operator $Q^*_n(\l)$. On the one side, we can express the product of Noumi-Sano operators \rf{N3} and \rf{N4} of two kinds as certain contour integral with the kernel \rf{Q*ker} defining the operator
$Q^*_n(\l)$. 
Denote
\begin{equation}
\bbe_n= (1,\ldots,1) \in \C^n.
\end{equation}

\begin{restatable}{prop}{propNS} \label{propositionNS} 
	 For any non-negative $p,q \in \mathbb{Z}$  we have the equality of functions of $\bx_n$
	\begin{equation}
	\label{N8}\begin{split} 
		N_p^{(1)}(\bx_n) \, &N_q^{(2)}(\bx_n) \, f(\bx_n) =  (-1)^{p + q} \,  \biggl(\frac{2\pi S_2(g)}{\imath \sqrt{\o_1\o_2}}\biggr)^n \, e^{2\pi\l (p\o_1+q\o_2+\frac{n}{2}g )} \\[10pt]
		&\times \sum_{\substack{|\bm{m}|=p,\\[2pt]|\bk|=q}} \;
		\Res_{\substack{y_1=x_1-\imath(m_1\o_1+k_1\o_2)\\[2pt] \ldots \\[2pt] y_n=x_n-\imath(m_n\o_1+k_n\o_2)}} \,
		 Q^*\Bigl(\bx_n+\frac{\imath g}{2}\bbe_n,
		\by_n;\l \Bigr)f(\by_n) 
	\end{split}
	\end{equation}
	assuming $f(\bx_n)$ is analytical in the domain
	\begin{equation}\label{N11}
	-p \, \Re \o_1 - q \, \Re \omega_2 \leq \Im x_i \leq  0, \qquad i = 1, \dots, n.
	\end{equation}
\end{restatable}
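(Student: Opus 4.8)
The strategy is to directly evaluate the right-hand side of \eqref{N8} by computing the iterated residues of the kernel $Q^*$ and matching the result term by term with the left-hand side. Recall from \eqref{Q*ker} that
\begin{equation*}
Q^*(\bx_n, \by_n; \l) = \eta^{-1}(\bx_n) \, e^{2\pi \imath \l (\bbx_n - \bby_n)} \, K^*(\bx_n, \by_n) \, \Delta(\by_n),
\end{equation*}
where $K^*(x) = S_2^{-1}(\imath x + \tfrac{g}{2}|\bo) \, S_2^{-1}(-\imath x + \tfrac{g}{2}|\bo)$. The poles of $S_2^{-1}(z|\bo)$ in $z$ lie (see \eqref{A1a}) along the lattice $z = \tfrac{g}{2} + \imath(m^1\o_1 + m^2\o_2)$-type shifts; after the substitution $\bx_n \mapsto \bx_n + \tfrac{\imath g}{2}\bbe_n$ the relevant factor $K^*(x_i + \tfrac{\imath g}{2} - y_j)$ has its poles in $y_j$ precisely at $y_j = x_i - \imath(m\o_1 + k\o_2)$ with $m, k \geq 0$, coming from the single factor $S_2^{-1}(\imath(x_i - y_j) + g|\bo)$. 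So only the diagonal $i = j$ among the factors $\prod_{i,j} K^*(x_i + \tfrac{\imath g}{2} - y_j)$ contributes a pole at the point around which we take the residue. The first step is therefore to isolate, for fixed multi-indices $\bm{m}, \bk$ with $|\bm{m}| = p$, $|\bk| = q$, the residue of $\prod_i S_2^{-1}(\imath(x_i - y_i) + g|\bo)$ at $y_i = x_i - \imath(m_i\o_1 + k_i\o_2)$; this is a standard computation using the recursion \eqref{A4}-type shift relations for $S_2$, and it produces exactly the trigonometric Pochhammer factors $[x|\o_1]_{m}$, $[x|\o_2]_{k}$ appearing in \eqref{N0} (together with an overall constant involving $S_2(g)$ and powers of $2\pi/(\imath\sqrt{\o_1\o_2})$ that account for $d_n$ and the residue normalization).

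\textbf{Key steps.} First I would substitute $y_j = x_j - \imath(m_j\o_1 + k_j\o_2)$ into each of the remaining pieces of $Q^*$: the exponential $e^{2\pi\imath\l(\bbx_n + \tfrac{n}{2}\imath g - \bby_n)}$ becomes $e^{2\pi\l(p\o_1 + q\o_2 + \tfrac{n}{2}g)}$ times the shift $e^{2\pi\imath\l\bbx_n}$ hidden in the action of the shift operators $\prod_i T^{-\imath m_i\o_1 - \imath k_i\o_2}_{x_i}$ applied to $f$; the off-diagonal factors $K^*(x_i + \tfrac{\imath g}{2} - y_j)$ for $i \neq j$ together with $\eta^{-1}(\bx_n)$ and the hyperbolic Vandermonde $\Delta(\by_n)$ combine — again via the $S_2$ reflection \eqref{A3} and shift relations — into the ratio $\prod_{i,j} [\imath x_i - \imath x_j + g|\cdot]_{\cdot}/[\imath x_i - \imath x_j - \cdot|\cdot]_{\cdot}$ that defines $N^{(1)}_p N^{(2)}_q$. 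The second step is to recognize that summing over all $\bm{m}$ with $|\bm{m}| = p$ and all $\bk$ with $|\bk| = q$ and applying the resulting shift operators reproduces, factor by factor, the definitions \eqref{N2} and \eqref{N4} of the Noumi-Sano operators — here one uses the quasi-periodicity \eqref{N1} to handle the interplay of $\o_1$- and $\o_2$-shifts inside a single Pochhammer symbol, and one must check that the product structure $N^{(1)}_p \circ N^{(2)}_q$ (not just a single combined operator) emerges, i.e. that the $\o_1$-shifts act "before" the $\o_2$-shifts in the sense dictated by the definition. Finally I would collect the overall constant: the factor $d_n(g^*)$ from the definition of $Q^*_n(\l)$, cancelled against $n!$ from symmetrization and the $\bigl[\sqrt{\o_1\o_2}\, S_2(g^*)\bigr]^{-n}$, together with the $(2\pi\imath)^n$-type Jacobians from the $n$ residues, should assemble into $(-1)^{p+q}(2\pi S_2(g)/(\imath\sqrt{\o_1\o_2}))^n$; the appearance of $S_2(g)$ rather than $S_2(g^*)$ comes from the reflection formula \eqref{A3} applied to the residue of $S_2^{-1}(\cdot + g)$.

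\textbf{Convergence and analyticity.} Before the residue computation is meaningful one must justify that the contour $\R^n$ in the integral defining $Q^*_n(\l)f$ can be deformed to pick up exactly the poles at $y_j = x_j - \imath(m_j\o_1 + k_j\o_2)$ with $0 \leq m_j, k_j$ and $|\bm{m}| \leq p$, $|\bk| \leq q$, and no others. This is where the analyticity hypothesis \eqref{N11} on $f$ enters: pushing the $y_j$-contour down into the lower half-plane, the poles of the kernel that are crossed are precisely the lattice points listed, and the integrand decays at the new contour (using the Gaussian-type decay of $S_2^{-1}$ along vertical lines, cf. the asymptotics recalled in Appendix~\ref{AppA}) provided $f$ has no singularities in the strip \eqref{N11}. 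Actually, since the left side is a finite sum (the operators $N^{(i)}_r$ are finite for fixed $r$), the cleanest route is to \emph{not} deform a full integral but rather to read \eqref{N8} as an identity of the stated finite residue sum; then the only analytic input needed is that $f$ is holomorphic where the shift operators $T^{-\imath m_i\o_1 - \imath k_i\o_2}_{x_i}$ evaluate it, which is exactly \eqref{N11}.

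\textbf{Main obstacle.} The bookkeeping of the combinatorial/trigonometric factors is the crux: showing that the residue of the full kernel — after stripping the diagonal poles — reorganizes into the \emph{specific} ratio of Pochhammer symbols in \eqref{N2}, \eqref{N4}, with the correct dependence of the upper index on $m_i$ (not $m_j$) in $[\imath x_i - \imath x_j + g|\o_1]_{m_i}$ and the shifted lower index $-m_j\o_1$, requires carefully tracking how the shift $y_j \to x_j - \imath(m_j\o_1 + k_j\o_2)$ interacts with each $S_2$-factor via its functional equations. The sign $(-1)^{p+q}$ and the precise power of $2\pi\imath$ are easy to lose; I expect matching these, and in particular verifying via \eqref{N1} that the mixed $\o_1/\o_2$ shifts produce no spurious signs beyond $(-1)^{p+q}$, to be the delicate point. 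The analytic/contour part, by contrast, is routine given the decay estimates already used in \cite{BDKK, BDKK2} and the explicit pole structure \eqref{A1a}.
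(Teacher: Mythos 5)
Your proposal follows essentially the same route as the paper: a direct, term-by-term residue computation of the kernel at the lattice points $y_j = x_j - \imath(m_j\o_1 + k_j\o_2)$, organized via the $S_2$ functional equations, the reflection formula \eqref{A3}, and the factorization $[x]_{m,k}=(-1)^{mk}\,[x|\o_1]_m\,[x|\o_2]_k$, with no contour deformation needed (the paper likewise reads \eqref{N8} as a finite residue identity, the hypothesis \eqref{N11} serving only to let $f$ be evaluated at the shifted points). One correction to your sketch: the pole of $K^*\bigl(x_i + \tfrac{\imath g}{2} - y_j\bigr)$ at $y_j = x_i - \imath(m\o_1 + k\o_2)$ comes from the factor $S_2^{-1}(\imath x_i - \imath y_j)$, hitting the zeros \eqref{A1b} of the double sine, not from $S_2^{-1}(\imath y_j - \imath x_i + g)$; the residue formula \eqref{trig5} for the former then supplies the constants $\imath\sqrt{\o_1\o_2}/2\pi$, while the latter factor, evaluated at the pole, is what produces the $S_2(g)$ in the overall prefactor (note also that \eqref{N8} concerns the kernel $Q^*$, so neither $d_n(g^*)$ nor the $n!$ you invoke actually enters).
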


Up to now, we do not have a satisfactory answer for the connection in opposite direction: how to express the operator $Q^*_n(\l)$ via Noumi-Sano operators. However, one can suggest the following partial result on a formal level. Namely let $f(\bx_n)$ be $\imath\o_2$-periodic symmetric function, such that  $Q_n^*(\l)f(\bx_n)$ may be computed by residues technique. It surely can happen only for periods with nonzero imaginary parts and for analytic function with not more than exponential growth. Then on this formal level we have the following proposition.
\begin{restatable}{prop}{propQNS}\label{propositionNS2}   
	 For a symmetric $\imath \o_2$-periodic analytic function $f(\bx_n)$ with not more than exponential growth  we have the equality
	\begin{equation}\label{N9}
	\bigl( Q^*_n(\l)  f \bigr) \Bigl(\bx_n+\frac{\imath g}{2}\bbe_n \Bigr) = e^{-\pi n g\l}c^{(2)}(\bx_n;\l) \, N^{(1)}(\bx_n;\l) \, f(\bx_n).
	\end{equation}
\end{restatable}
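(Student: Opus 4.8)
The plan is to evaluate the integral defining $Q_n^*(\l)$ by pushing the contours of integration into the lower half-plane and summing residues, and then to recognise the resulting multiple series through Proposition~\ref{propositionNS} together with the $\imath\o_2$-periodicity of $f$. Concretely, I would first insert the kernel \eqref{Q*ker} into the definition of $Q_n^*(\l)$ and carry out the shift $\bx_n \mapsto \bx_n + \tfrac{\imath g}{2}\bbe_n$. Since $\eta(\bx_n)$ involves only the differences $x_i - x_j$, the prefactor $\eta^{-1}(\bx_n)$ is unchanged; the exponential produces a factor $e^{2\pi\imath\l\bbx_n}\,e^{-\pi n g\l}$; and the shifted $K^*$-factors simplify to $K^*(x_i + \tfrac{\imath g}{2} - y_j) = S_2^{-1}\bigl(\imath(x_i - y_j)\bigr)\,S_2^{-1}\bigl(\imath(y_j - x_i) + g\bigr)$. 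Therefore, as a function of $\by_n$ near the real axis, the integrand has poles in the lower half-plane only from the factors $\prod_{i,j} S_2^{-1}(\imath(x_i - y_j))$, located at $y_j = x_i - \imath(a_j\o_1 + b_j\o_2)$ with $a_j, b_j \in \Z_{\ge 0}$; the other $K^*$-factor has its poles in the upper half-plane and $\Delta(\by_n)$ is entire.

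Next I would close each $y_j$-contour downward. For $f$ of at most exponential growth and $\imath\o_2$-periodic, and for $\l$ in a suitable half-plane, the arcs at infinity are negligible — this is the step that is only formal — so the integral equals $(-2\pi\imath)^n$ times the sum of residues at the poles just described. Organising these poles by the total shifts $p = \sum_j a_j$, $q = \sum_j b_j$, and using the symmetry of $f$ (which produces a factor $n!$ cancelling the $1/n!$ inside $d_n(g^*)$, identifies the various assignments of base points $x_i$ to the $y_j$, and annihilates the degenerate configurations in which two components coincide modulo $\imath\o_1\Z$ or $\imath\o_2\Z$ so that $\Delta$ vanishes), the inner sum over $|\bm{m}_n| = p$, $|\bk_n| = q$ of residues at the diagonal points $y_i = x_i - \imath(m_i\o_1 + k_i\o_2)$ is precisely the quantity that appears in Proposition~\ref{propositionNS}. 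Invoking that proposition in reverse, together with the elementary identity $d_n(g^*)\,(-2\pi\imath)^n\, n! = \bigl(\tfrac{2\pi S_2(g)}{\imath\sqrt{\o_1\o_2}}\bigr)^n$ (which follows from \eqref{dconst} and the reflection $S_2(g)\,S_2(g^*) = 1$), I obtain
\begin{equation*}
\begin{split}
\bigl(Q_n^*(\l)f\bigr)\!\Bigl(\bx_n + \tfrac{\imath g}{2}\bbe_n\Bigr)
&= e^{-\pi n g\l}\sum_{p,q \ge 0}(-1)^{p+q}\,e^{-2\pi\l(p\o_1 + q\o_2)}\,N_p^{(1)}(\bx_n)\,N_q^{(2)}(\bx_n)\,f(\bx_n) \\
&= e^{-\pi n g\l}\,N^{(1)}(\bx_n;\l)\,N^{(2)}(\bx_n;\l)\,f(\bx_n),
\end{split}
\end{equation*}
where the last equality uses the definitions \eqref{N3}, \eqref{N5} and the commutativity of the two Noumi--Sano families.

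It remains to use the $\imath\o_2$-periodicity of $f$ to turn $N^{(2)}$ into a scalar. Because $f$ is symmetric and $\imath\o_2$-periodic, the function $N^{(1)}(\bx_n;\l)f$ has the same two properties: the shifts $T^{-\imath m_i\o_1}_{x_i}$ preserve $\imath\o_2$-periodicity, and the coefficients in \eqref{N2} are $\imath\o_2$-periodic because the sign $(-1)^{m_i}$ picked up by numerator and denominator under an $\imath\o_2$-shift — compare \eqref{N1} — cancels. On any $\imath\o_2$-periodic function each shift $T^{-\imath m\o_2}_{x_i}$ acts trivially, so $N^{(2)}(\bx_n;\l)$ reduces there to multiplication by $c^{(2)}(\bx_n;\l)$; hence $N^{(2)}(\bx_n;\l)\,N^{(1)}(\bx_n;\l)f = c^{(2)}(\bx_n;\l)\,N^{(1)}(\bx_n;\l)f$, which by commutativity of the two families equals $N^{(1)}(\bx_n;\l)\,N^{(2)}(\bx_n;\l)f$. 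Combined with the previous display this yields \eqref{N9}.

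The principal obstacle is the contour-closing step: showing that the arcs at infinity are negligible requires the exponential growth of $f$ to be dominated by the Gaussian-type asymptotics of the double sine functions together with the exponential, and one must secure absolute convergence of the ensuing multiple series of residues — these are exactly the points that confine the statement to the formal level. A secondary, purely combinatorial matter is the reorganisation of the full residue sum into the Noumi--Sano series with the correct weight and with the degenerate pole configurations dropping out, which is the mechanism underlying Proposition~\ref{propositionNS} and is used here as an input.
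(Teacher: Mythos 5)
Your overall route coincides with the paper's: formally close the $y_j$-contours downward, sum residues of the shifted kernel \eqref{N14}, recognise the diagonal residue sums via Proposition \ref{propositionNS}, and finally use $\imath\o_2$-periodicity to collapse $N^{(2)}$ to the scalar $c^{(2)}$. The constant bookkeeping ($d_n(g^*)(-2\pi\imath)^n n! = (2\pi S_2(g)/\imath\sqrt{\o_1\o_2})^n$) and the identification of which factors contribute lower half-plane poles are correct.

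However, there is one genuine gap: your treatment of the residue configurations in which two integration variables are anchored to the \emph{same} $x_i$. You assert these are annihilated because $\Delta(\by_n)$ vanishes, but that is only true when the two multi-indices agree in the $\o_1$-component or in the $\o_2$-component (so that $y_1 - y_2 \in \imath\o_1\Z$ or $\imath\o_2\Z$ and one of the factors $\sh\frac{\pi(y_1-y_2)}{\o_a}$ vanishes). At a configuration such as $\imath y_1 = \imath x_1 + \tfrac{g}{2} + m_1\o_1 + k_1\o_2$, $\imath y_2 = \imath x_1 + \tfrac{g}{2} + m_2\o_1 + k_2\o_2$ with $m_1 \neq m_2$ \emph{and} $k_1 \neq k_2$, one has $S_2(\imath y_{12})S_2(\imath y_{21}) = 4(-1)^{m_1+m_2+k_1+k_2+1}\sin\frac{\pi(m_1-m_2)\o_1}{\o_2}\sin\frac{\pi(k_1-k_2)\o_2}{\o_1} \neq 0$ for generic periods, so the individual residues do not vanish. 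The paper disposes of these "second kind" poles by pairing each such configuration with the one obtained by exchanging $k_1 \leftrightarrow k_2$: the residue coefficient factorises into $\o_1$- and $\o_2$-Pochhammer pieces, the factor $\sin\frac{\pi(k_1-k_2)\o_2}{\o_1}$ flips sign under the exchange, and --- crucially --- the $\imath\o_2$-periodicity of $f$ guarantees that $f$ takes the same value at the two paired points, so the two residues cancel. This is a second, essential use of the periodicity hypothesis (beyond reducing $N^{(2)}$ to $c^{(2)}$) that your argument omits; without it the residue sum contains extra nonvanishing terms and does not reorganise into $N^{(1)}N^{(2)}f$.
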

Here 
\begin{equation} \label{N10} c^{(2)}(\bx_n;\l)=N^{(2)}(\bx_n;\l) \, \mathbf{1} \end{equation}
is an application of the second Noumi-Sano operators generating function to the constant function equal to $1$. 

As we have learned after completing our work Hjalmar Rosengren presented similar ideas about $Q$-operators in the case of Ruijsenaars elliptic system in his talk during Elliptic Integrable Systems, Representation Theory and Hypergeometric Functions Workshop (July 2023).

\setcounter{equation}{0}

\section{Local relations}
\subsection{$Q^*Q$ commutativity} \label{sectionQQ*} 
\QQthm*

Both sides of the commutativity relation \rf{QQ*comm} are given by integral operators. Consider their kernels. Up to the same constant $d_n(g) d_n(g^*)$ from both sides the kernel of the left-hand side is
\begin{equation*}\begin{split}
	\eta^{-1}(\bx_n) \, \mu(\bz_n)\, \int_{\R^n} & d\by_n \,  e^{2\pi \imath \l(\bbx_n - \bby_n) + 2\pi \imath \rho(\bby_n - \bbz_n)}\prod_{\substack{i,j=1 \\ i\not=j}}^n  S_2(\imath (y_i -  y_j)) \, \\
	 &\times \prod_{i,j=1}^nS^{-1}_2 \Bigl(\pm\imath (x_i -  y_j) + \frac{g}{2}\Bigr) \,  S^{-1}_2 \Bigl(\pm\imath (z_i -  y_j) + \frac{g^*}{2}\Bigr)
	\end{split}\end{equation*}
and the kernel of the right-hand side is
\begin{equation*} \begin{split}
	\Delta(\bz_n) \, \int_{\R^n} & d\by_n \, e^{2\pi \imath \rho(\bbx_n - \bby_n) + 2\pi \imath \l(\bby_n - \bbz_n)} \prod_{\substack{i,j=1 \\ i\not=j}}^n  S_2(\imath (y_i -  y_j
	))  \\
	&  \times \prod_{i,j=1}^n S^{-1}_2 \Bigl(\pm\imath (x_i - y_j) + \frac{g^*}{2}\Bigr) \, S^{-1}_2\Bigl(\pm\imath (z_i -  y_j) + \frac{g}{2} \Big).
	\end{split} \end{equation*}
Here $\eta(\bx_n)$ is defined in \eqref{eta}, $\Delta(\bz_n)$ in \eqref{hvand} and we used compact notation
\begin{equation}
f(\pm z + c) = f(z + c) \, f(-z + c).
\end{equation} 
Due to the formula \eqref{mu-eta} the equality of kernels reduces to the following integral identity
 \begin{equation}\begin{split}
 	& \eta^{-1}(\bx_n) \, \eta(\bz_n) \int_{\R^n} d\by_n \,  e^{2\pi \imath (\rho-\l)\bby_n}\prod_{\substack{i,j=1 \\ i\not=j}}^n S_2(\imath (y_i -  y_j)) \\
 	& \hspace{1cm} \times \prod_{i,j=1}^nS^{-1}_2 \Bigl(\pm\imath (x_i -  y_j) + \frac{g}{2}\Bigr) \,  S^{-1}_2 \Bigl(\pm\imath (z_i -  y_j) + \frac{g^*}{2}\Bigr)\\[6pt]
 	& = e^{2\pi \imath (\rho-\l)(\bx_n+\bz_n)} \, \int_{\R^n} d\by_n  \, e^{2\pi \imath (\l-\rho)\bby_n} \prod_{\substack{i,j=1 \\ i\not=j}}^n S_2(\imath (y_i -  y_j))  \\
 	& \hspace{1cm} \times \prod_{i,j=1}^n S^{-1}_2 \Bigl(\pm\imath (x_i - y_j) + \frac{g^*}{2}\Bigr) \, S^{-1}_2\Bigl(\pm\imath (z_i -  y_j) + \frac{g}{2}\Bigr).
 	\end{split}\end{equation}
 	which is precisely the degeneration \rf{IM} of the Rains integral identity proven in Appendix \ref{AppB} under assumption \rf{lrho}.
 	Note that due to the reflection formula \rf{A3} the coefficient behind the integral equals
\begin{equation} \eta^{-1}(\bx_n) \, \eta(\bz_n)=\prod_{\substack{i, j =1 \\ i\not=j}}^n S^{-1}_2(\imath x_i - \imath x_j+g^\ast) \, S^{-1}_2(\imath z_i - \imath z_j+g).\end{equation}

\subsection{$Q^*\Lambda$ exchange relation}\label{sec:exch}
\QLthm*

The proof goes along the same lines, as the proof of exchange relation between the first $Q$-operator and $\Lambda$-operator, see \cite[Section 2]{BDKK2}.

\begin{proof}
Start from the commutativity relation
\begin{equation}\label{Q*Q2}
Q^*_n(\l) \, Q_n(\rho) = Q_n(\rho) \, Q^*_n(\l)
\end{equation}
written in terms of kernels
\begin{equation}\label{Q*Qker}
\int_{\R^n} d\by_n \, Q^*(\bx_n, \by_n;\l) \, Q(\by_n, \bz_n; \rho) = \int_{\R^n} d\by_n \, Q(\bx_n, \by_n;\rho) \, Q^*(\by_n, \bz_n; \l).
\end{equation}
The main idea of the proof is to take the limit $z_n \rightarrow \infty$ of this identity. To cancel asymptotic behavior of both sides with respect to $z_n$ we multiply the identity by the function
\begin{equation}
r(\bz_n; \rho) = \exp \Bigl( \pi \hat{g} \bigl[ \bbz_{n - 1} + (2 - n)z_n \bigr] + 2\pi \imath \rho z_n \Bigr)
\end{equation}
and then consider the limit. As we argued in Appendix \ref{AppB}, the integrals on both sides are absolutely convergent when
\begin{equation}\label{lr-cond}
| \Im (\l - \rho ) | < \frac{\nu_g + \nu_{g^*}}{2} = \Re \frac{\o_1 + \o_2}{\o_1 \o_2}.
\end{equation} 

The left-hand side of the identity \eqref{Q*Qker} multiplied by $r(\bz_n; \rho)$ equals to the integral
\begin{equation}\label{intF}
\int_{\R^n} d\bm{y}_n \, F(\bx_n, \by_n, \bz_n; \l, \rho)
\end{equation}
with
\begin{equation}\label{F}
\begin{aligned}
F & = e^{ 2\pi \imath \l ( \bbx_n - \bby_n ) + 2 \pi \imath \bigl( \rho - \frac{\imath \hat{g}}{2} \bigr) (\bby_n - \bbz_{n - 1} ) } \\[6pt]
& \times \eta^{-1}(\bx_n) \, K^*(\bx_n, \by_n) \, \Delta(\by_n) \, K(\by_n, \bz_{n - 1}) \, \mu(\bz_{n - 1}) \\[6pt]
& \times \prod_{j = 1}^n e^{ \pi \hat{g} (z_n - y_j) } K(z_n - y_j)  \prod_{j = 1}^{n - 1} e^{2 \pi \hat{g}(z_j - z_n)}  \mu(z_n - z_j)  \mu(z_j - z_n).
\end{aligned}
\end{equation}
The variable $z_n$ is contained only in the last line. Using asymptotics \eqref{Kmu-asymp}
\begin{equation}\label{Kmu-asymp2}
\mu(x) \sim e^{\pi \hat{g} | x | \pm \imath \frac{\pi \hat{g} g^*}{2} }, \qquad K(x) \sim e^{- \pi \hat{g} |x|}, \qquad x\rightarrow \pm \infty
\end{equation}
and bounds \eqref{Kmu-bound}
\begin{equation}\label{Kmu-bound2}
|\mu(x)| \leq C e^{\pi\nu_g |x|}, \qquad |K(x)| \leq C e^{-\pi\nu_g |x|},  \qquad x \in \R
\end{equation}
we deduce that the products in the last line \eqref{F} have pointwise limit
\begin{equation}
\lim_{z_n \rightarrow \infty} \, \prod_{j = 1}^n e^{ \pi \hat{g} (z_n - y_j) }  K(z_n - y_j)  \prod_{j = 1}^{n - 1} e^{2 \pi \hat{g}(z_j - z_n)}  \mu(z_n - z_j)  \mu(z_j - z_n) = 1
\end{equation}
and bounded independently of $z_n$
\begin{equation}\label{zn-bound}
\Biggl| \prod_{j = 1}^n e^{ \pi \hat{g} (z_n - y_j) }  K(z_n - y_j)  \prod_{j = 1}^{n - 1} e^{2 \pi \hat{g}(z_j - z_n)}  \mu(z_n - z_j)  \mu(z_j - z_n) \Biggr| \leq C_1(g, \bo),
\end{equation}
assuming sufficiently large $z_n$, such that $z_n > z_j$ for all $j = 1, \dots, n - 1$.

Next we bound the whole integrand $F$ with an integrable function and use dominated convergence theorem to take the limit of the integral. Recall that
\begin{equation}
\Delta(\by_n) = \prod_{\substack{i,j = 1 \\ i < j}}^n 4 \sh \frac{\pi(y_i - y_j)}{\o_1} \sh \frac{\pi(y_i - y_j)}{\o_2}.
\end{equation}
For hyperbolic sines in this product we have
\begin{equation}\label{sh-bound}
\biggl| \sh \frac{\pi(y_i - y_j)}{\o_a} \biggr| \leq \ch \biggl[ \Re \frac{\pi(y_i - y_j)}{\o_a} \biggr] \leq e^{ \pi \Re \frac{1}{\o_a} (|y_i| + |y_j|) }, \qquad a=1,2,
\end{equation}
where in the last step we used the assumption $\Re \o_a > 0$ and triangle inequality. Using again bound from \eqref{Kmu-bound2}, the analogous one with $g \rightarrow g^*$ and triangle inequalities we have
\begin{equation}\label{K*K-bound}
\begin{aligned}
| K^*(x_i - y_j) | &\leq C \, e^{- \pi \nu_{g^*} | x_i - y_j| } \leq C \, e^{\pi \nu_{g^*} (|x_i| - |y_j|)}, \\[6pt]
| K(z_i - y_j) | &\leq C \, e^{- \pi \nu_{g} | z_i - y_j| } \leq C \, e^{\pi \nu_{g} (|z_i| - |y_j|)}.
\end{aligned}
\end{equation}
Collecting \eqref{zn-bound}, \eqref{sh-bound} and \eqref{K*K-bound} we arrive at
\begin{equation}\label{Fbound}
|F| \leq C_2(g, \bo, \bx_n, \bz_{n-1}) \, \exp \Bigl( \bigl[ |2\pi \Im(\l - \rho) + \pi \nu_g | - \pi \nu_{g^*} \bigr] \| \by_n \| \Bigr)
\end{equation}
with some $C_2$, where by $\|\by_n\|$ we denote $L^1$-norm
\begin{equation}
\| \by_n \| = \sum_{j = 1}^n |y_j|.
\end{equation}
The bound \rf{Fbound} is an integrable function when
\begin{equation}\label{lr-cond2}
-\frac{\nu_{g^*} + \nu_g}{2} < \Im(\l - \rho) < \frac{\nu_{g^*} - \nu_g}{2}.
\end{equation}
This condition is stronger than the one we started with \eqref{lr-cond}. Assuming it we use dominated convergence theorem to write the limit of the integral \eqref{intF} as $z_n \rightarrow \infty$
\begin{equation}
\begin{aligned}
\int_{\R^n} & d\by_n \, e^{ 2\pi \imath \l ( \bbx_n - \bby_n ) + 2 \pi \imath \bigl( \rho - \frac{\imath \hat{g}}{2} \bigr) (\bby_n - \bbz_{n - 1} ) } \\[6pt]
& \times \eta^{-1}(\bx_n) \, K^*(\bx_n, \by_n) \, \Delta(\by_n) \, K(\by_n, \bz_{n - 1}) \, \mu(\bz_{n - 1}).
\end{aligned}
\end{equation}
This integral coincides with the kernel of the product $Q^*_n(\l) \, \Lambda_n(\rho - \imath \hat{g}/{2})$ up to constant $d_n(g^*) d_{n - 1}(g)$.

The integrand from the right-hand side of the kernels identity \eqref{Q*Qker} multiplied by $r(\bz_n;\rho)$ 
\begin{equation}\label{intG}
\int_{\R^n} d\by_n \, G(\bx_n, \by_n, \bz_n; \l, \rho)
\end{equation}
doesn't have pointwise limit as $z_n \rightarrow \infty$. To proceed we modify the integral in two steps. First, introduce the domain
\begin{equation}
D_j = \{ \by_n \in \R^n \colon y_j \geq y_k, \forall k \in [n] \setminus \{j\} \}.
\end{equation}
Here $[n] = \{1, \dots, n\}$. It is clear that
\begin{equation}\label{1+1}
\bm{1}_{D_1} + \bm{1}_{D_2} + \ldots + \bm{1}_{D_n} = 1
\end{equation}
where $\bm{1}_{D_j}$ is the indicator function of the domain $D_j$. The integrand $G$ in \eqref{intG} is symmetric with respect to $y_j$. Therefore using equality \eqref{1+1}
\begin{equation}
\int_{\R^n} d\by_n \, G(\bx_n, \by_n, \bz_n; \l, \rho) = n \int_{\R^n} d\by_n \, \bm{1}_{D_n} \, G(\bx_n, \by_n, \bz_n; \l, \rho).
\end{equation}
The second step is to shift the integration variable $y_n \rightarrow y_n + z_n$. The domain of indicator function after the shift changes to
\begin{equation}
D_n' = \{ \by_n \in \R^n\colon y_n + z_n \geq y_k, \forall k \in [n - 1] \}.
\end{equation}
The whole integrand after the shift
\begin{equation}\label{G}
\begin{aligned}
&\bm{1}_{D_n'} \, G(\bx_n, \by_{n - 1}, y_n + z_n, \bz_n; \l,\rho) = e^{2\pi \imath \bigl( \rho - \frac{\imath \hat{g}}{2} \bigr) (\bbx_n - \bby_n) + 2\pi \imath \lambda (\bby_n - \bbz_{n - 1})} \\[8pt]
&\times K(\bx_n, \by_{n - 1}) \, \Delta(\by_{n - 1}) \, K^*(\by_{n - 1}, \bz_{n - 1}) \, K^*(y_n) \, \Delta(\bz_{n - 1}) \, R(\bx_n, \bz_n, \by_n)
\end{aligned}
\end{equation}
where the variable $z_n$ is contained only in the last function
\begin{equation}
\begin{aligned}
R(\bx_n, &\by_n, \bz_n) =  \prod_{j = 1}^{n - 1} e^{\pi \hat{g}^*(y_{nj} + z_{nj} + z_n)} K^*(y_n + z_{nj}) K^*(y_j - z_n) \\[6pt] 
&\times \prod_{j = 1}^n  e^{\pi \hat{g} (y_n + z_n - x_j)} K(x_n - y_n - z_n )  \, \prod_{j = 1}^{n - 1} e^{\pi (\hat{g} + \hat{g}^*)z_{jn}} \, 4 \sh \frac{\pi z_{nj}}{\o_1} \sh \frac{\pi z_{nj}}{\o_2} \\[8pt]
&\times  \prod_{j = 1}^{n - 1} e^{\pi (\hat{g} + \hat{g}^*)(y_{jn} - z_n)} 4 \sh \frac{\pi(y_{nj} + z_n)}{\o_1} \sh \frac{\pi(y_{nj} + z_n)}{\o_2} \cdot \bm{1}_{D_n'}.
\end{aligned}
\end{equation}
Here, for brevity, we used notation $y_{jn} = y_j - y_n$. Due to the asymptotics \eqref{Kmu-asymp2} the last function has pointwise limit
\begin{equation}
\lim_{z_n \rightarrow \infty} R(\bx_n, \by_n, \bz_n) = 1.
\end{equation}
Also note that in the presence of the indicator function
\begin{equation}
y_{nj} + z_n = | y_{nj} + z_n |, \qquad j = 1, \dots, n - 1.
\end{equation}
Using it together with the fact that $| \sh( z ) | \leq 2 e^{|\Re z|}$ we bound factors from the last product in $R$ 
\begin{equation}
\left| e^{\pi (\hat{g} + \hat{g}^*)(y_{jn} - z_n)} 4 \sh \frac{\pi(y_{nj} + z_n)}{\o_1} \sh \frac{\pi(y_{nj} + z_n)}{\o_2} \right| \leq 16.
\end{equation}
Factors from three other products are estimated similarly using \eqref{Kmu-bound2}, so that for $R$ we have the bound independent of $z_n$
\begin{equation}
| R(\bx_n, \by_n, \bz_n) | \leq C_3(g, \bo).
\end{equation}

Using the last bound together with \eqref{K*K-bound} we estimate the integrand \eqref{G}
\begin{equation}
\begin{aligned}
| \bm{1}_{D'_n} \,& G(\bx_n, \by_{n - 1}, y_n + z_n, \bz_n; \l,\rho) |  \\[6pt]
& \leq C_4 \exp \Bigl( \bigl[ | 2\pi \Im(\rho - \l) - \pi \nu_g | - 2\pi \nu_g - \pi \nu_{g^*} \bigr] \| \by_{n - 1} \| \\[6pt]
& \hspace{1.47cm} + \bigl[ |2\pi \Im(\rho - \l) - \pi \nu_g | - \pi \nu_{g^*} \bigr] | y_n| \Bigr)
\end{aligned}
\end{equation}
with some $C_4(g, \bo, \bx_n, \bz_{n - 1})$. Function from the right is integrable under the same condition which appeared in the limit of the left-hand side \eqref{lr-cond2}. Assuming it we may use dominated convergence theorem. The limit of the right-hand side integral \eqref{intG} equals
\begin{equation}
\begin{aligned}
n \int_{\R^n} & d\by_n \, e^{2\pi \imath \bigl( \rho - \frac{\imath \hat{g}}{2} \bigr) (\bbx_n - \bby_n) + 2\pi \imath \lambda (\bby_n - \bbz_{n - 1})} \\[8pt]
&\times K(\bx_n, \by_{n - 1}) \, \Delta(\by_{n - 1}) \, K^*(\by_{n - 1}, \bz_{n - 1}) \, K^*(y_n) \, \Delta(\bz_{n - 1}).
\end{aligned}
\end{equation}
The integral over $y_n$ has separated. It is just a Fourier transform of the function $K^*$, which differs from the Fourier of $K$ by the exchange $g \rightarrow g^*$, so by \eqref{K-fourier},
\begin{equation}
\int_\R dy_n \, e^{2 \pi \imath \bigl( \l - \rho + \frac{\imath \hat{g}}{2} \bigr) y_n} \, K^*(y_n) = \sqrt{\o_1 \o_2} \, S_2(g^*) \, \hat{K}^* \Bigl( \l - \rho + \frac{\imath \hat{g}}{2} \Bigr).
\end{equation} 
The remaining part of the integral coincides with the kernel of the product of operators $\Lambda_n(\rho - \imath \hat{g}/2) \, Q_{n - 1}^*(\l)$ up to constant $d_{n - 1}(g) d_{n - 1}(g^*)$.

Thus, taking the limit of the commutativity relation \eqref{Q*Q2} we arrive at the identity
\begin{equation}
Q^*_n(\l) \, \Lambda_n \Bigl( \rho - \frac{\imath \hat{g} }{2} \Bigr) = \hat{K}^*\Bigl( \l - \rho + \frac{\imath \hat{g} }{2} \Bigr) \, \Lambda_n \Bigl( \rho - \frac{\imath \hat{g} }{2} \Bigr) \, Q^*_{n - 1} (\l)
\end{equation}
where we assume
\begin{equation}
-\frac{\nu_{g^*} + \nu_g}{2} < \Im(\l - \rho) < \frac{\nu_{g^*} - \nu_g}{2}.
\end{equation}
The identity stated in the theorem follows by the shift of the parameter $\rho \rightarrow \rho + \imath \hat{g}/2$.
\end{proof}

\subsection{$\Lambda^*\Lambda$ exchange relation}
\LLthm*

\begin{proof}
This proof is very similar to the previous one. Let us write the exchange relation
\begin{equation}
Q^*_n(\l) \, \Lambda_n(\rho) = \hat{K}^*(\l - \rho) \, \Lambda_n(\rho) \, Q^*_{n - 1}(\l)
\end{equation}
in terms of kernels \eqref{Q*ker}, \eqref{Lker}
\begin{equation}\label{Q*Lker}
\begin{aligned}
\int_{\R^{n}}& d\by_{n} \, Q^*(\bx_n, \by_{n}; \l) \, \Lambda(\by_{n}, \bz_{n - 1}; \rho) \\[6pt]
&= n \sqrt{\o_1 \o_2} \, S_2(g^*) \, \hat{K}^*(\l - \rho) \, \int_{\R^{n - 1}} d\by_{n - 1} \, \Lambda(\bx_{n}, \by_{n - 1}; \rho) \, Q^*(\by_{n - 1}, \bz_{n - 1}; \l).
\end{aligned}
\end{equation}
As it is shown during the previous proof, this identity holds under assumption
\begin{equation}\label{Q*Lker-cond}
| \Im(\l - \rho) | < \frac{\nu_{g^*}}{2}.
\end{equation}
The idea is to take its limit as $z_{n - 1} \rightarrow \infty$. In this limit the operators $Q^*_k(\l)$ on both sides turn into the operators $\Lambda^*_k(\l)$ with the kernels \eqref{L*ker}. To cancel asymptotic behavior of both sides with respect to $z_{n - 1}$ we multiply them by the function
\begin{equation}
r(\bz_{n - 1}; \l) = \exp \Bigl( \pi \hat{g} \, \bz_{n - 2} + \pi \bigl[ \hat{g}^* + (2 - n)\hat{g} + 2\imath \l \bigr] z_{n - 1} \Bigr).
\end{equation}

Consider the right-hand side of the identity \eqref{Q*Lker} multiplied by $r(\bz_{n - 1}; \l)$. Without the coefficient behind the integral it equals
\begin{equation}
\int_{\R^{n - 1}} d\by_{n - 1} \, F(\bx_n, \by_{n - 1}, \bz_{n - 1}; \l, \rho)
\end{equation}
where
\begin{equation}
\begin{aligned}
F &= e^{2\pi \imath \rho ( \bbx_n - \bby_{n - 1} ) + 2\pi \imath \bigl( \l - \frac{\imath \hat{g}^*}{2} \bigr) (\bby_{n - 1} - \bbz_{n - 2} ) } \\[6pt]
&\times K(\bx_n, \by_{n - 1}) \, \Delta(\by_{n - 1}) \, K^*(\by_{n - 1}, \bz_{n - 2}) \, \Delta(\bz_{n - 2}) \\[6pt]
&\times \prod_{j = 1}^{n - 1} e^{\pi \hat{g}^* (z_{n - 1} - y_j)} K^*(z_{n - 1} - y_j) \prod_{j = 1}^{n - 2} e^{\pi (\hat{g} + \hat{g}^*) z_{j, n - 1} } 4 \sh \frac{\pi z_{j, n - 1}}{\o_1} \sh \frac{\pi z_{j, n - 1}}{\o_2}.
\end{aligned}
\end{equation}
Here we denoted $z_{j, n - 1} = z_j - z_{n - 1}$. Using asymptotics \eqref{Kmu-asymp2} we deduce that the last line tends to $1$ as $z_{n - 1} \rightarrow \infty$. Then using bounds \eqref{sh-bound}, \eqref{K*K-bound} we derive estimate for the whole integrand
\begin{equation}
| F | \leq C_1(g, \bo, \bx_n, \bz_{n - 2}) \, \exp \Bigl( \bigl[ | 2\pi \Im(\l - \rho) - \pi \nu_{g^*} | - 2 \pi \nu_g  \bigr] \| \by_{n - 1} \| \Bigr) 
\end{equation}
independent of $z_{n - 1}$ (for sufficiently large $z_{n - 1}$ such that $z_{n - 1} > z_j$, $j = 1, \dots,n - 2$). The function from the right is integrable under assumption
\begin{equation}
\frac{\nu_{g^*}}{2} - \nu_g <  \Im(\l - \rho) <  \frac{\nu_{g^*}}{2} + \nu_g.
\end{equation}
The overlap between this assumption and initial one \eqref{Q*Lker-cond} is as follows
\begin{equation}\label{rhs-cond}
\max \Bigl( \frac{\nu_{g^*}}{2} - \nu_g, - \frac{\nu_{g^*}}{2} \Bigr) < \Im(\l - \rho) < \frac{\nu_{g^*}}{2}.
\end{equation} 
Under this condition we use dominated convergence theorem and in the limit obtain
\begin{equation}
\begin{aligned}
\int_{\R^{n - 1}} &d\by_{n - 1} \, e^{2\pi \imath \rho ( \bbx_n - \bby_{n - 1} ) + 2\pi \imath \bigl( \l - \frac{\imath \hat{g}^*}{2} \bigr) (\bby_{n - 1} - \bbz_{n - 2} ) } \\[6pt]
&\times K(\bx_n, \by_{n - 1}) \, \Delta(\by_{n - 1}) \, K^*(\by_{n - 1}, \bz_{n - 2}) \, \Delta(\bz_{n - 2}).
\end{aligned}
\end{equation}
This integral coincides with the kernel of $\Lambda_n(\rho) \, \Lambda^*_{n - 1}(\l - \imath \hat{g}^*/2)$ up to the constant $d_{n - 1}(g) \, d_{n - 2}(g^*)$.

To take the limit of the left-hand side of the relation \eqref{Q*Lker} multiplied by the function $r(\bz_{n - 1}; \l)$
\begin{equation}
\int_{\R^n} d\by_n \, G(\bx_n, \by_n, \bz_{n - 1}; \l, \rho)
\end{equation} 
we use the same trick, as in the previous proof. Firstly, we rewrite the integral using symmetry of the integrand $G$ with respect to $y_j$ and the identity \eqref{1+1}
\begin{equation}
\int_{\R^n} d\by_n \, G(\bx_n, \by_n, \bz_{n - 1}; \l, \rho) = n \int_{\R^n} d\by_n \, \bm{1}_{D_n} G(\bx_n, \by_n, \bz_{n - 1}; \l, \rho),
\end{equation}
where the domain of the indicator function
\begin{equation}
D_n = \{ \by_n \in \R^n\colon y_n \geq y_k, \forall k \in [n - 1] \}.
\end{equation}
Secondly, we shift the integration variable $y_n \rightarrow y_n + z_{n - 1}$, so that we have the integrand
\begin{equation}
\begin{aligned}
\bm{1}_{D_n'} \, G(\bx_n, \by_{n - 1}, & y_n + z_{n - 1},  \bz_{n - 1}) = e^{2\pi \imath \bigl( \l - \frac{\imath \hat{g}^*}{2} \bigr)(\bbx_n - \bby_{n - 1}) + 2\pi \imath \rho (\bby_{n - 1} - \bbz_{n - 2})} \\[6pt] 
&\times e^{ 2\pi \imath \bigl( \rho - \l + \frac{\imath(\hat{g}^* - \hat{g})}{2} \bigr)y_{n} } \, \eta^{-1}(\bx_n) \, K^*(\bx_n, \by_{n - 1}) \, \Delta(\by_{n - 1})  \\[6pt]
&\times K(\by_{n - 1}, \bz_{n - 2}) \, K(y_n) \, \mu(\bz_{n - 2}) \, R(\bx_n, \by_n, \bz_{n - 1})
\end{aligned}
\end{equation}
where the variable $z_{n - 1}$ is contained only in the last function
\begin{equation}
\begin{aligned}
R &= \prod_{j = 1}^n e^{ \pi \hat{g}^* (y_n + z_{n - 1} - x_j) } K^*(y_n + z_{n - 1} - x_j) \prod_{j = 1}^{n - 2} e^{\pi \hat{g}(y_n + z_{n - 1,j}) } K(y_n + z_{n - 1,j})  \\[6pt]
&\times \prod_{j = 1}^{n - 1} e^{\pi \hat{g}(z_{n - 1} - y_j) } K(z_{n - 1} - y_j) \prod_{j = 1}^{n - 2} e^{2\pi \hat{g} z_{j, n - 1} }\mu(z_{n - 1, j}) \mu(z_{j, n - 1}) \\[6pt] 
&\times \prod_{j = 1}^{n - 1} e^{\pi(\hat{g} + \hat{g}^*) (y_{jn}-z_{n - 1}) } 4\sh \frac{\pi(y_{jn} - z_{n - 1})}{\o_1} \sh \frac{\pi(y_{jn} - z_{n - 1})}{\o_2} \cdot \bm{1}_{D_n'}.
\end{aligned}
\end{equation}
Note that the domain of the indicator function changed after the shift
\begin{equation}
D_n' = \{ \by_n \in \R^n\colon y_n + z_{n - 1} \geq y_k, \forall k \in [n - 1] \}.
\end{equation}
Using again the same asymptotics \eqref{Kmu-asymp2} and bounds \eqref{Kmu-bound2}, \eqref{sh-bound} we deduce that 
\begin{equation}
\lim_{z_{n - 1} \rightarrow \infty} R \, = \, 1, \hspace{2cm} |R| \leq C_2(g, \bo)
\end{equation}
assuming sufficiently large $z_{n - 1}$. Also from the same bounds we derive the estimate for the whole integrand
\begin{equation}
\begin{aligned}
\bigl| \bm{1}_{D_n'} \, &G(\bx_n, \by_{n - 1},  y_n + z_{n - 1},  \bz_{n - 1}) \bigr| \\[6pt]
&\leq C_3 \exp \Bigl( \bigl[ |2\pi \Im(\rho - \l) + \pi \nu_{g^*} | - 2 \pi \nu_{g^*} \bigr] \| \by_{n - 1} \| \\[6pt]
& \hspace{1.45cm} + \bigl[ |2\pi\Im(\rho - \l) + \pi \nu_{g^*} - \pi \nu_{g} | - \pi \nu_g \bigr] |y_n| \Bigr).
\end{aligned}
\end{equation}
The function from the right is integrable in some strip of $\Im(\l - \rho)$, whose intersection with initial strip \eqref{Q*Lker-cond} gives the same condition that appeared for the right-hand side \eqref{rhs-cond}. Assuming it we again use dominated convergence theorem and in the limit obtain
\begin{equation}
\begin{aligned}
n \int_{\R^n} &d\by_n \, e^{2\pi \imath \bigl( \l - \frac{\imath \hat{g}^*}{2} \bigr)(\bbx_n - \bby_{n - 1}) + 2\pi \imath \rho (\bby_{n - 1} - \bbz_{n - 2}) + 2\pi \imath \bigl( \rho - \l + \frac{\imath(\hat{g}^* - \hat{g})}{2} \bigr)y_{n}} \\[6pt]
&\times  \eta^{-1}(\bx_n) \, K^*(\bx_n, \by_{n - 1}) \, \Delta(\by_{n - 1}) \, K(\by_{n - 1}, \bz_{n - 2})  \, \mu(\bz_{n - 2}) \, K(y_n).
\end{aligned}
\end{equation}
Note that the integral over $y_n$ has separated and represents a Fourier transform of the function $K$ \eqref{K-fourier}
\begin{equation}
\int_\R dy_n \, e^{2\pi \imath \bigl( \rho - \l + \frac{\imath(\hat{g}^* - \hat{g})}{2} \bigr)y_{n}} \, K(y_n) = \sqrt{\o_1 \o_2} \, S_2(g) \, \hat{K} \Bigl( \rho - \l + \frac{\imath(\hat{g}^* - \hat{g})}{2} \Bigr).
\end{equation}
The rest part of the integral coincides with the kernel of $\Lambda_n^*(\l - \imath \hat{g}^*/2) \, \Lambda_{n - 1}(\rho)$ up to constant $d_{n - 1}(g^*) \, d_{n - 2}(g)$. 

Collecting everything together, in the limit we have the identity
\begin{equation}\label{L*L-shift}
\begin{aligned}
\hat{K} \Bigl( \rho - \l + \frac{\imath(\hat{g}^* - \hat{g})}{2} \Bigr) \, & \Lambda^*_n \Bigl( \l - \frac{\imath \hat{g}^*}{2} \Bigr) \, \Lambda_{n - 1}(\rho) \\[6pt]
&= \hat{K}^*(\l - \rho) \, \Lambda_{n }(\rho) \, \Lambda^*_{n - 1} \Bigl( \l - \frac{\imath \hat{g}^*}{2} \Bigr)
\end{aligned}
\end{equation}
that holds true assuming the condition \eqref{rhs-cond}. Note that coefficients in this identity have common factor
\begin{align}
&\hat{K}^*(\l - \rho) = S_2^{-1} \Bigl( \imath \l - \imath \rho + \frac{\hat{g}^*}{2} \Big| \hat{\bo} \Bigr) \, S_2^{-1} \Bigl( \imath \rho - \imath \l + \frac{\hat{g}^*}{2} \Big| \hat{\bo} \Bigr), \\[6pt]
&\hat{K} \Bigl( \rho - \l + \frac{\imath(\hat{g}^* - \hat{g})}{2} \Bigr) = S_2^{-1} \Bigl( \imath \rho - \imath \l + \hat{g} - \frac{\hat{g}^*}{2} \Big| \hat{\bo} \Bigr) \, S_2^{-1} \Bigl( \imath \l - \imath \rho + \frac{\hat{g}^*}{2} \Big| \hat{\bo} \Bigr).
\end{align}
Canceling it from both sides, shifting $\l \rightarrow \l + \imath \hat{g}^*/2$ and using reflection formula \eqref{A3} we arrive at
\begin{equation}\label{L*L2}
\Lambda_n^*(\l) \, \Lambda_{n - 1}(\rho) = S_2^{-1} ( \imath \l - \imath \rho + \hat{g}^* | \hat{\bo} ) \, S_2^{-1} ( \imath \rho - \imath \l + \hat{g}^* | \hat{\bo} ) \, \Lambda_n(\rho) \, \Lambda^*_{n - 1}(\l)
\end{equation}
with the condition
\begin{equation}
- \min(\nu_g, \nu_{g^*}) < \Im(\l - \rho) < 0.
\end{equation}
From the estimates obtained in the proof it is clear that the left-hand side of the final relation is analytic in a wider strip $| \Im(\l - \rho) | < \nu_{g^*}$ and similarly the right-hand side is analytic in the strip $| \Im(\l - \rho) | < \nu_g$. Therefore, we can analytically continue this relation to the strip
\begin{equation}
| \Im(\l - \rho) | < \min(\nu_g, \nu_{g^*}).
\end{equation}
\end{proof}

\begin{remark*} 
The local relation \eqref{L*L2} can be also obtained directly from the Rains integral identity \eqref{id1} similarly to the commutativity relation \eqref{QQ*comm}.
\end{remark*}

\setcounter{equation}{0} 
\section{Eigenfunctions}
\subsection{Eigenfunctions of $Q^*$-operator}\label{sec:Q*eigen} 
\QPsithm*

The proof is close to the one given in our previous paper \cite[Section 3]{BDKK2} about diagonalization of the first $Q$-operator. We use recursive structure of the wave function 
\begin{equation}
\Psi_{\bl_n}(\bx_n) = \Lambda_n(\lambda_n) \, \Psi_{\bl_{n - 1}}(\bx_{n - 1})
\end{equation}
together with the exchange relation given by Theorem 2
\begin{equation}\label{Q*L}
Q^*_n(\l) \, \Lambda_n(\rho) = \hat{K}^*(\l - \rho) \, \Lambda_n(\rho)  \, Q^*_{n - 1}(\l).
\end{equation}
The subtle point is the convergence of multiple integrals \eqref{Iint}, \eqref{Jint} appearing during this calculation.

\begin{restatable}{prop}{prop1} \label{prop1} \-
	\begin{itemize}
		\item[(i)] The multiple integral
		\begin{equation}\label{Iint}
		I_{\l, \bl_n} = Q^*_n(\l) \, \Lambda_n(\l_n) \, \cdots \, \Lambda_2(\l_2) \, e^{2\pi \imath \l_1 x_1} 
		\end{equation}
		is absolutely convergent in the domain
		\begin{equation}\label{I-dom}
		| \Im(\l - \l_n) | < \frac{1}{2}(\nu_{g^*} - \epsilon \nu_g), \qquad |\Im(\l_k - \l_j)| \leq \theta(\epsilon), \qquad k,j = 1, \dots, n
		\end{equation}
		for any $\epsilon \in [0,1)$ and
		\begin{equation}
		\theta(\epsilon) = \frac{\nu_g}{2(n - 1)! e} \epsilon.
		\end{equation} 
		Moreover, it is analytic with respect to $\l, \bl_n$ on compact subsets of this domain.
		\item[(ii)] The multiple integral 
		\begin{equation}\label{Jint}
		J_{\l, \bl_n} = \Lambda_n(\l_n) \, Q^*_{n - 1}(\l) \, \Lambda_{n - 1}(\l_{n - 1}) \, \cdots \, \Lambda_2(\l_2) \, e^{2\pi \imath \l_1 x_1} 
		\end{equation}
		is absolutely convergent under the restriction
		\begin{equation}
		\Im\l = \Im \l_k, \qquad  k = 1, \dots, n.
		\end{equation}
	\end{itemize}
\end{restatable}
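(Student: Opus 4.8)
The plan follows the structure of our earlier argument for the first $Q$-operator, see \cite[Section 3]{BDKK2}: estimate the integrands of $I_{\l,\bl_n}$ and $J_{\l,\bl_n}$ by explicit products of exponentials and carry out the integrations one level at a time, starting from the innermost raising operator $\Lambda_2(\l_2)$ and working outwards. The only analytic inputs are the bounds \eqref{Kmu-bound2} for $K$ and $\mu$ (together with their $g\rightarrow g^*$ counterparts for $K^*$), the bound \eqref{sh-bound} for the hyperbolic sines entering $\Delta$, and the elementary identity $|e^{2\pi\imath\l z}| = e^{-2\pi\Im\l\,z}$ for real $z$; once absolute convergence is established locally uniformly in $\l,\bl_n$, analyticity follows for free by differentiation under the integral sign.

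For part (i) I would argue by induction on $k$ that the partial wave function $\Psi_{\bl_k}(\bx_k)=\Lambda_k(\l_k)\cdots\Lambda_2(\l_2)e^{2\pi\imath\l_1 x_1}$ is given by an absolutely convergent $\binom{k}{2}$-fold integral and obeys a pointwise bound $|\Psi_{\bl_k}(\bx_k)| \le C_k\,e^{2\pi a_k\|\bx_k\|}$ with explicitly controlled exponent $a_k$. In the inductive step one uses $\Psi_{\bl_k}=\Lambda_k(\l_k)\Psi_{\bl_{k-1}}$, inserts the kernel \eqref{Lker}, and estimates the resulting integral over $\by_{k-1}\in\R^{k-1}$. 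The delicate point is that $\mu(\by_{k-1})$ grows and $K(\bx_k,\by_{k-1})$ decays on the same exponential scale $\nu_g$, so no crude $\|\cdot\|$-estimate suffices; instead I would split $\R^{k-1}$ into the $(k-1)!$ ordering sectors $\{y_{\sigma(1)}\ge\cdots\ge y_{\sigma(k-1)}\}$, on each of which every $|y_i-y_j|$ and every $|x_i-y_j|$ is a signed linear form, so that the modulus of the integrand becomes a genuine exponential $e^{2\pi\ell(\by_{k-1})}$ with $\ell$ linear. Convergence then reduces to checking that $\ell$ is negative along every extremal ray of the sector, a finite system of linear inequalities in $\Im\l_1,\dots,\Im\l_k$, $a_{k-1}$ and $\nu_g$ (here the $\binom{k}{2}-(k-1)=\binom{k-1}{2}$ extra factors of $K$ relative to $\mu$ in $\Lambda_k$ provide the needed surplus decay); these hold provided $|\Im(\l_i-\l_j)|$ stays below a threshold that shrinks from level to level, and one feeds the resulting $a_k$ into the next step. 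Accumulating the thresholds over $k=2,\dots,n$, where the sector count is $(k-1)!$ and $\sum_{k=2}^n 1/(k-1)! < e$, produces precisely the condition $|\Im(\l_i-\l_j)|\le\theta(\epsilon)=\tfrac{\nu_g}{2(n-1)!\,e}\epsilon$ of \eqref{I-dom}.

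The final, outermost $Q^*_n(\l)$-integration over $\R^n$ is handled identically: by \eqref{Q*ker} the $Q^*$-kernel carries $n$ decaying factors $K^*(x_i-y_j)$ per variable $y_j$ (decay rate $\nu_{g^*}$) weighed against the growth of $\Delta(\by_n)$ and the inherited growth $a_n$ of $\Psi_{\bl_n}$, with $\eta^{-1}(\bx_n)$ only a harmless $\bx_n$-dependent prefactor; the same sector analysis shows convergence exactly when $|\Im(\l-\l_n)| < \tfrac12(\nu_{g^*}-\epsilon\nu_g)$, the term $\epsilon\nu_g$ being the residual growth budget left over from the $\Lambda$-levels. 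For part (ii), note that the restriction $\Im\l=\Im\l_k$ for all $k$ annihilates all the quantities $|\Im(\l_i-\l_j)|$ and $|\Im(\l-\l_{n-1})|$, so part (i) applied with $n$ replaced by $n-1$ and $\epsilon=0$ already yields that $Q^*_{n-1}(\l)\,\Lambda_{n-1}(\l_{n-1})\cdots\Lambda_2(\l_2)e^{2\pi\imath\l_1 x_1}$ is an absolutely convergent integral with a bound $C\,e^{2\pi a\|\bx_{n-1}\|}$; applying one further raising operator $\Lambda_n(\l_n)$ and estimating its $\R^{n-1}$-integral by the level-$n$ sector argument of part (i) — which now converges with room to spare since all plane-wave imbalances vanish — gives the absolute convergence of $J_{\l,\bl_n}$. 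The main obstacle throughout is the competition between $\mu$-growth and $K$-decay at the common rate $\nu_g$: it forces the sector decomposition and the careful propagation of the growth budget through the $n$ nested integrations, and it is this bookkeeping that pins down the shape of $\theta(\epsilon)$, in particular the factor $1/((n-1)!\,e)$.
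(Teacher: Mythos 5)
There is a genuine gap in the inductive scheme. Your induction carries only a norm bound $|\Psi_{\bl_k}(\bx_k)|\le C_k e^{2\pi a_k\|\bx_k\|}$ from one level to the next, but such a bound is too weak to make the next integration converge once $k$ is moderately large. At level $k$ the integrand contains $\mu(\by_{k-1})=\prod_{i\ne j}\mu(y_i-y_j)$, which by \eqref{Kmu-bound2} grows like $\exp\bigl(\pi\nu_g\sum_{i\ne j}|y_i-y_j|\bigr)$, against $K(\bx_k,\by_{k-1})$, which decays like $\exp\bigl(-\pi\nu_g\sum_{i,j}|x_i-y_j|\bigr)$. On rays where the $y_j$ spread apart from one another and from the fixed $\bx_k$ (e.g.\ $\by_4=(3t,t,-t,-3t)$, $t\to\infty$, at level $k=5$) the two exponents cancel exactly, so the majorant $K\mu\cdot C e^{2\pi a\|\by\|}$ is non-integrable for any $a\ge 0$. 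The true integrand is saved only because $\Psi_{\bl_{k-1}}(\by_{k-1})$ itself decays exponentially in the pairwise differences $|y_i-y_j|$ (morally like $\mu'(\by_{k-1})^{-1}$), and precisely this information is discarded by a pointwise bound in $\|\by_{k-1}\|$. Your sector decomposition and extremal-ray check would in fact \emph{detect} this failure (the linear form $\ell$ is nonnegative on the spreading rays), not repair it. The same objection applies to the outermost $Q^*_n$ integration, where $\Delta(\by_n)$ grows at rate $(\nu_g+\nu_{g^*})/2$ per difference and the $n$ factors $K^*(x_i-y_j)$ per variable do not dominate it for $n\ge 2$ without the inherited difference-decay of $\Psi_{\bl_n}$.

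The paper avoids this by never reducing to a per-level pointwise bound: it estimates the \emph{entire} multiple integrand at once, collecting all growth/decay exponents into the telescoping combination $S_n(\by_1,\dots,\by_n)$ of \eqref{Srec} (and $R_{n-1}$, $L_n$ for the other integrals), and then invokes the combinatorial inequality \eqref{Sn-bound} from \cite[Lemma 2]{BDKK2}, which trades the residual growth $\tfrac12\sum_{i\ne j}|y_i^{(n)}-y_j^{(n)}|$ against $\ve\|\by_n\|-\tfrac{\ve}{c_n}\sum_k\|\by_k\|$ with $c_n<(n-1)!\,e$. That inequality is exactly the missing ingredient in your argument: it is what propagates the difference-decay across levels, and it is the genuine source of the constant $(n-1)!\,e$ in $\theta(\epsilon)$ --- your attribution of $(n-1)!$ to a sector count and $e$ to $\sum_k 1/(k-1)!$ is a plausible-looking coincidence, not a derivation. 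To salvage your level-by-level route you would have to strengthen the inductive hypothesis to a bound of the form $C_k\exp\bigl(2\pi a_k\|\bx_k\|-\pi b_k\sum_{i<j}|x_i-x_j|\bigr)$ with $b_k$ close to $\nu_g$, and verify that such a bound survives each integration; that is essentially equivalent to re-proving the $S_n$ inequality.
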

\begin{proof}[Proof of Proposition \ref{prop1}]
	In our previous work we already considered analogous integrals, but with the operators $Q_k(\l)$ instead of $Q^*_k(\l)$ \cite[Proposition 1]{BDKK2}. The convergence of the present integrals follows from almost the same bounds and inequalities.
	
	Consider the first integral \eqref{Iint}. It has $n$ groups of integration variables, denote them~by
	\begin{equation}
	\by_k = \bigl( y_1^{(k)}, \dots, y_k^{(k)} \bigr).
	\end{equation}
	The integral in full form
	\begin{equation}
	\begin{aligned}
	I_{\l, \bl_n} = C_I \, \eta^{-1}(\bx_n) \int & \prod_{j = 1}^n d\by_j  \; \Delta(\by_n) \, K^*(\bx_n, \by_n) \, e^{2\pi \imath \bigl[ \l \bbx_n + (\l_n - \l) \bby_n \bigr]} \\[6pt]
	& \times \prod_{j = 1}^{n - 1} \mu(\by_j) \, K(\by_{j + 1}, \by_j) \, e^{2\pi \imath (\l_j - \l_{j + 1}) \bby_j},
	\end{aligned}
	\end{equation}
	where in the last product we assume $\mu(\by_1) \equiv 1$. The constant $C_I$ contains all constants $d_k$ from operators and all integrals are over $\R$. Denote the integrand by $F$ and suppose
	\begin{equation}
	| \Im(\l - \l_n) | \leq \delta_Q \frac{\nu_{g^*}}{2} , \qquad |\Im(\l_{j} - \l_{j + 1}) | \leq \delta_\Lambda \frac{\nu_g}{2}, \qquad j = 1, \dots, n - 1.
	\end{equation}
	Note also that from definition \eqref{hvand} we have
	\begin{equation}\label{hvand-b}
	\begin{aligned}
	| \Delta( \by_n) | & = \Biggl|\prod_{\substack{i,j = 1 \\ i < j}}^n 4 \sh \frac{\pi(y_i - y_j)}{\o_1} \sh \frac{\pi(y_i - y_j)}{\o_2} \Biggr| \\
	& \leq C \, \exp \pi \biggl( \frac{\nu_g + \nu_{g^*}}{2} \sum_{ \substack{ i, j = 1 \\ i \not= j} }^n \bigl| y_i^{(n)} - y_j^{(n)} \bigr| \biggr).
	\end{aligned}
	\end{equation}
	Then using bounds \eqref{hvand-b}, \eqref{Kmu-bound} and triangle inequalities we obtain the estimate 
	\begin{equation}\label{I-int}
	\begin{aligned}
	|F| \leq C_1 \exp \pi \biggl( \bigl[\delta_Q - n\bigr] \nu_{g^*} \| \by_n \| &+ \frac{\nu_{g^*} - \nu_g}{2} \sum_{ \substack{ i, j = 1 \\ i \not= j} }^n \bigl| y_i^{(n)} - y_j^{(n)} \bigr| \\
	&+ \nu_g S_n(\by_1, \dots, \by_n) + \delta_\Lambda \nu_g \sum_{k = 1}^{n - 1} \| \by_k \| \biggr)
	\end{aligned}
	\end{equation}
	with some $C_1(g, \bo, \bx_n)$ and the function $S_n$ defined by recurrence relation
	\begin{equation}\label{Srec}
	\begin{aligned}
	S_n (\bm{y}_1, \dots, \bm{y}_n ) = \sum_{\substack{i, j = 1 \\ i \not= j}}^n \bigl| y_i^{(n)} - y_j^{(n)} \bigr| &- \sum_{i = 1}^{n} \sum_{j = 1}^{n - 1} \, \bigl| y_i^{(n)} - y_j^{(n - 1)} \bigr| \\
	&+ S_{n - 1} (\bm{y}_1, \dots,  \bm{y}_{n - 1} )
	\end{aligned}
	\end{equation}
	with $S_1 = 0$. In the previous paper we proved the following bound on $S_n$ \cite[Lemma~2]{BDKK2}
	\begin{equation}\label{Sn-bound}
	S_n \leq \frac{1}{2} \sum_{\substack{i, j = 1 \\ i \not= j}}^n \bigl| y_i^{(n)} - y_j^{(n)} \bigr| + \ve \| \by_n \| - \frac{\ve}{c_n} \, \sum_{k = 1}^{n - 1} \| \bm{y}_k \|
	\end{equation}
	for any $\epsilon \in \left[ 0,2(n - 1) \right]$, where the numbers $c_n$ are bounded as 
	\begin{equation}
	c_n < (n - 1)! e. 
	\end{equation}
	Substituting it into \eqref{I-int} and using the estimate 
	\begin{equation}\label{sum<}
	\frac{1}{2} \sum_{ \substack{ i, j = 1 \\ i \not= j} }^n \bigl| y_i^{(n)} - y_j^{(n)} \bigr| \leq (n - 1) \| \by_{n} \|
	\end{equation}
	we have
	\begin{equation}
	|F| \leq C_1 \exp \pi \biggl( \bigl[ \delta_Q \, \nu_{g^*} - \nu_{g^*} + \ve \nu_g \bigr]  \| \by_n \| + \biggl[ \delta_\Lambda - \frac{\ve}{c_n} \biggr] \nu_g \sum_{k = 1}^{n - 1} \| \by_k \| \biggr).
	\end{equation}
	So, the function from the right is integrable under assumptions
	\begin{equation}
	\delta_Q < 1 - \frac{ \ve \nu_g}{\nu_{g^*}}, \qquad \delta_\Lambda \leq \frac{\ve}{(n - 1)! e}.
	\end{equation}
	
	Next consider the integral \eqref{Jint}. In full form it looks as follows
	\begin{equation}
	\begin{aligned}
	&J_{\l, \bl_n}  = C_J \int d\bt_{n - 1} \prod_{j = 1}^{n - 1} d\by_j \; \Delta(\bt_{n - 1}) \, K(\bx_n, \bt_{n - 1}) \, e^{2 \pi \imath \bigl[ \l_n \bbx_n + (\l - \l_n) \bbt_{n - 1} \bigr]} \times\\
	&  \Delta(\by_{n - 1}) \, K^*(\bt_{n - 1}, \by_{n - 1}) \, e^{2\pi \imath ( \l_{n - 1} - \l) \bby_{n - 1} } \prod_{j = 1}^{n - 2} \mu(\by_j) \, K(\by_{j + 1}, \by_j) \, e^{2\pi \imath (\l_j - \l_{j+1}) \bby_j}.
	\end{aligned}
	\end{equation}
	Denote the integrand by $G$. Assuming
	\begin{equation}
	\Im \l = \Im \l_j
	\end{equation}
	for all $j$ use bounds \eqref{hvand-b}, \eqref{Kmu-bound} and triangle inequalities to arrive at
	\begin{equation}
	\begin{aligned}
	|G| \leq C_2 \exp \pi \biggl(& - n \nu_g \| \bt_{n - 1} \| + \nu_g \sum_{ \substack{ i,j = 1 \\ i < j } }^{n - 1} \Bigl( \bigl|t_i - t_j \bigr| + \bigl| y_i^{(n - 1)} - y_j^{(n - 1)} \bigr| \Bigr) \\
	&+ \nu_{g^*} R_{n - 1}(\bt_{n - 1}, \by_{n - 1}) + \nu_g S_{n - 1}(\by_1, \dots, \by_{n - 1}) \biggr),
	\end{aligned}
	\end{equation}
	where we introduced new function
	\begin{equation}
	R_{n - 1} = \sum_{ \substack{ i,j = 1 \\ i < j } }^{n - 1} \Bigl( \bigl|t_i - t_j \bigr| + \bigl| y_i^{(n - 1)} - y_j^{(n - 1)} \bigr| \Bigr) - \sum_{i,j = 1}^{n - 1} \bigl| t_i - y_j^{(n - 1)}\bigr|.
	\end{equation}
	In our previous paper we proved the bound \cite[Corollary 1]{BDKK3}
	\begin{equation}
	R_{n - 1}(\bt_{n - 1}, \by_{n - 1}) \leq \ve \bigl( \| \bt_{n - 1} \| - \| \by_{n - 1} \| \bigr)
	\end{equation}
	for any $\ve \in[0,1]$. Using it with $\ve_1$, the bound \eqref{Sn-bound} with $\ve_2$ and 
	\begin{equation}
	\sum_{ \substack{ i, j = 1 \\ i < j} }^{n - 1} \bigl| t_i - t_j \bigr| \leq (n - 1) \| \bt_{n-1} \|
	\end{equation}
	we have
	\begin{equation}
	|G| \leq C_2 \exp \pi \biggl( \bigl[ \ve_1 \nu_{g^*} - \nu_g \bigr] \| \bt_{n - 1} \| + \bigl[ \ve_2 \nu_g - \ve_1 \nu_{g^*} \bigr] \| \by_{n - 1} \| - \frac{\ve_2 \nu_g}{c_{n - 1}} \sum_{k = 1}^{n - 2} \| \by_k \| \biggr).
	\end{equation}
	For small enough $\ve_1, \ve_2$, such that
	\begin{equation}
	\ve_1 < \frac{\nu_{g}}{\nu_{g^*}}, \qquad \ve_2 < \ve_1 \frac{\nu_{g^*}}{\nu_g},
	\end{equation} 
	the function from the right is integrable.
\end{proof}

\begin{proof}[Proof of Theorem \ref{thm:QPsi}]
	In the notation of Proposition \ref{prop1} the theorem states that
	\begin{equation}\label{I-Psi}
	I_{\l, \bl_n} = \prod_{j = 1}^n \hat{K}^*(\l - \l_j) \, \Psi_{\bl_n}.
	\end{equation}
	By Proposition \ref{prop1} the function from the left is analytic on compact subsets of domain \eqref{I-dom}, the same is true for the right-hand side, see \cite[Proposition 1]{BDKK2}. Hence, first we prove the statement \eqref{I-Psi} assuming
	\begin{equation}
	\Im \l = \Im \l_j, \qquad j = 1, \dots, n
	\end{equation}
	and then analytically continue it. 
	
	The case $n = 1$
	\begin{equation}
	I_{\l, \l_1} = d_1(g^*) \int_\R dy_1 \, K^*(x_1 - y_1) \, e^{2\pi \imath \l (x_1 - y_1) } \, e^{2\pi \imath \l_1 y_1} = \hat{K}^*(\l - \l_1) \, e^{2\pi\imath \l_1 x_1}
	\end{equation}
	is equivalent to the Fourier transform \eqref{K-fourier}. Then proceed by induction. Due to the absolute convergence of the integral $I_{\l, \bl_n}$ we can interchange the order of integrals in it and use exchange relation \eqref{Q*L}
	\begin{equation}
	\begin{aligned}
	I_{\l, \bl_n} & = Q_{n}^*(\l) \, \Lambda_n(\l_n) \, \Lambda_{n - 1}(\l_{n - 1}) \cdots \Lambda_2(\l_2) \, e^{2\pi \imath \l_1 x_1}  \\[8pt]
	& = \hat{K}^*(\l - \l_n) \, \Lambda_n(\l_n) \, Q_{n - 1}^*(\l) \, \Lambda_{n - 1}(\l_{n - 1}) \cdots \Lambda_2(\l_2) \, e^{2\pi \imath \l_1 x_1} \\[6pt]
	& = \hat{K}^*(\l - \l_n) \, J_{\l, \bl_n}.
	\end{aligned}
	\end{equation}
	By Proposition \ref{prop1} the integral $J_{\l, \bl_n}$ is also absolutely convergent and therefore we make the integrals associated with $\Lambda_n(\l_n)$ in it to be the last ones. The rest integrals give the integral $I_{\l, \bl_{n - 1}}$, for which we use induction assumption and arrive at the statement~\eqref{I-Psi}.
\end{proof}

\subsection{Wave function $g \rightarrow g^*$ symmetry}\label{sec:eigen-g}
\Psisym*

The proof of the symmetry relies on the recursive construction of the wave function
\begin{equation}\label{Psi-L}
\Psi_{\bl_n}(\bx_n) = \Lambda_n(\lambda_n) \, \Psi_{\bl_{n - 1}}(\bx_{n - 1})
\end{equation}
and the exchange relation given by Theorem 4
\begin{equation}\label{L*L}
\Lambda^*_{n}(\l) \, \Lambda_{n - 1}(\rho) = K_{2\hat{g}}(\l - \rho | \hat{\bo} ) \, \Lambda_n(\rho)  \, \Lambda^*_{n - 1}(\l).
\end{equation}
To justify the interchange of integrals appearing in the proof we also state the following proposition.

\begin{restatable}{prop}{prop2}\label{prop2}
	The multiple integrals
	\begin{align}
	\tilde{I}_{\l, \bl_n} &= \Lambda^*_{n + 1}(\l) \, \Lambda_n(\l_n)  \, \Lambda_{n - 1}(\l_{n - 1}) \, \cdots \, \Lambda_2(\l_2) \, e^{2\pi \imath \l_1 x_1}, \\[6pt]
	\tilde{J}_{\l, \bl_n} &= \Lambda_{n + 1}(\l_n) \, \Lambda^*_{n}(\l) \, \Lambda_{n - 1}(\l_{n - 1}) \, \cdots \, \Lambda_2(\l_2) \, e^{2\pi \imath \l_1 x_1}
	\end{align}
	are absolutely convergent under restriction
	\begin{equation}\label{l-rest}
	\Im \l = \Im \l_k, \qquad k = 1, \dots, n.
	\end{equation}
\end{restatable}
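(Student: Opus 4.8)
The plan is to mimic the proof of Proposition~\ref{prop1} almost verbatim: expand each of the two multiple integrals in full form using the kernels \eqref{Lker} and \eqref{L*ker}, collect all the constants $d_k$ into a single prefactor, and then bound the integrand factor by factor via the hyperbolic Vandermonde estimate \eqref{hvand-b}, the bounds \eqref{Kmu-bound} on $K$ and $\mu$ together with their $g \to g^*$ counterparts for $K^*$, and triangle inequalities, thereby reducing absolute convergence to integrability of an explicit exponential majorant. Throughout, the assumed restriction $\Im\l = \Im\l_k$ for all $k$ makes every oscillatory exponential of unit modulus, so no auxiliary $\delta$-parameters are needed; this is the most benign regime, strictly inside the convergence domains already met in Proposition~\ref{prop1}.

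For $\tilde I_{\l,\bl_n} = \Lambda^*_{n+1}(\l)\,\Lambda_n(\l_n)\cdots\Lambda_2(\l_2)\,e^{2\pi\imath\l_1 x_1}$ I would label the $n$ groups of integration variables $\by_1,\dots,\by_n$ with $\by_k$ having $k$ components, so that $\by_n$ is the group integrated by $\Lambda^*_{n+1}$. One then obtains exactly the integral \eqref{Iint} of Proposition~\ref{prop1}, the only changes being that the external prefactor is now $\eta^{-1}(\bx_{n+1})$ and that $K^*(\bx_n,\by_n)$ is replaced by $K^*(\bx_{n+1},\by_n) = \prod_{i=1}^{n+1}\prod_{j=1}^n K^*(x_i-y_j^{(n)})$; in particular the number of integration variables is unchanged. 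Running the chain of estimates leading to \eqref{I-int} produces a majorant of the form $\exp\pi\bigl(-(n+1)\nu_{g^*}\|\by_n\| + \tfrac{\nu_{g^*}-\nu_g}{2}\sum_{i\ne j}|y_i^{(n)}-y_j^{(n)}| + \nu_g S_n\bigr)$ with $S_n$ as in \eqref{Srec}; the coefficient of $\|\by_n\|$ is $-(n+1)\nu_{g^*}$ rather than $-n\nu_{g^*}$, so the extra external variable only improves convergence. Inserting the bound \eqref{Sn-bound} on $S_n$ with a small $\ve > 0$ and the elementary inequality \eqref{sum<} then yields an integrable function, since here the parameters that appeared in Proposition~\ref{prop1} are $\delta_Q = \delta_\Lambda = 0$.

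For $\tilde J_{\l,\bl_n} = \Lambda_{n+1}(\l_n)\,\Lambda^*_n(\l)\,\Lambda_{n-1}(\l_{n-1})\cdots\Lambda_2(\l_2)\,e^{2\pi\imath\l_1 x_1}$ I would proceed as for the integral \eqref{Jint}: expand the kernels, use $\eta^{-1}(\bt_n)\,\mu(\bt_n) = \Delta(\bt_n)$ (a consequence of \eqref{mu-eta}) to combine the $\mu$-factor of $\Lambda_{n+1}(\l_n)$ with the $\eta^{-1}$-factor of $\Lambda^*_n(\l)$ attached to its external variable $\bt_n$, and recognize the same layout as in the second half of the proof of Proposition~\ref{prop1}, now with $\bt_{n-1}$ replaced by $\bt_n$ ($n$ components) and with $\bx_n$ replaced by $\bx_{n+1}$ in the leading factor $K(\bx_{n+1},\bt_n)$. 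Under $\Im\l = \Im\l_k$ all oscillatory factors again drop out, and combining \eqref{hvand-b}, \eqref{Kmu-bound}, the rectangular version of the bound on the function $R_{n-1}$ from \cite[Corollary~1]{BDKK3}, the bound \eqref{Sn-bound} on $S_{n-1}$, and $\sum_{i<j}|t_i-t_j| \le (n-1)\|\bt_n\|$, one arrives at an integrable majorant, the leading decay $-(n+1)\nu_g\|\bt_n\|$ again dominating that of Proposition~\ref{prop1}.

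I do not expect a genuine obstacle here: there is no new analytic phenomenon beyond Proposition~\ref{prop1}, only bookkeeping. The points that need attention are (i) correctly propagating the extra external variable ($n+1$ in place of $n$), which always enters with a favourable sign; (ii) the fact that $\tilde J_{\l,\bl_n}$ carries one more integration variable than \eqref{Jint}, since $\Lambda_{n+1}$ has $n$ internal variables against $n-1$ for $\Lambda_n$, so one must check that the recursion on $S$ and $R$ still closes — which it does, precisely because the matching extra external variable supplies the extra decay needed; and (iii) pairing each $\eta^{-1}$ coming from a $\Lambda^*$-operator with a $\Delta$ to form a $\mu$ before applying \eqref{Kmu-bound}, for which \eqref{mu-eta} (equivalently the reflection formula \eqref{A3}) suffices.
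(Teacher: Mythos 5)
Your proposal matches the paper's proof essentially step for step: $\tilde I_{\l,\bl_n}$ is handled by observing that its integrand is that of $I_{\l,\bl_n}$ from Proposition \ref{prop1} multiplied by the extra exponentially decaying factors $K^*(x_{n+1}-y^{(n)}_j)$, and $\tilde J_{\l,\bl_n}$ by combining $\mu(\by_n)\,\eta^{-1}(\by_n)=\Delta(\by_n)$ via \eqref{mu-eta} and rerunning the same chain of estimates with \eqref{hvand-b}, \eqref{Kmu-bound}, \eqref{Sn-bound} and \eqref{sum<}. The one ingredient you invoke without supplying is the ``rectangular version'' of the $R_{n-1}$ bound: this is precisely the inequality $L_n \le (n-1)\,\ve\,\|\by_n\| - \ve\,\|\by_{n-1}\|$ of \eqref{Ln-bound}, which is not contained in \cite[Corollary 1]{BDKK3} and is the reason the paper includes Appendix \ref{AppC} — it is true and elementary, but it needs its own short proof rather than a citation.
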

\begin{proof}[Proof of Proposition \ref{prop2}]
The integrand of $\tilde{I}_{\l, \bl_n}$ almost coincides with the integrand of the integral $I_{\l, \bl_n}$ \eqref{Iint} from Proposition \ref{prop1} up to additional functions 
\begin{equation}
\prod_{j = 1}^{n + 1} K^*(x_{n + 1} - y_j)
\end{equation}
that only improve convergence since $K$-functions are exponentially bounded \eqref{Kmu-bound}. Hence, $\tilde{I}_{\l, \bl_n}$ is absolutely convergent.

Next consider $\tilde{J}_{\l, \bl_n}$. Denote the groups of integration variables by
\begin{equation}
\by_k = \bigl( y_1^{(k)}, \dots, y_k^{(k)} \bigr).
\end{equation} 
The integral in its full form
\begin{equation}
\begin{aligned}
&\tilde{J}_{\l, \bl_n} = C_{\tilde{J}} \int \prod_{j = 1}^n d\by_j \; \Delta(\by_n) \, K(\bx_{n + 1}, \by_n) \, e^{2\pi \imath \bigl[ \l_n \bbx_{n + 1} + (\l - \l_n) \bby_n \bigr]} \times\\
& \Delta(\by_{n - 1}) \, K^*(\by_n, \by_{n - 1}) \, e^{2\pi\imath (\l_{n - 1} - \l)\bby_{n - 1}} \prod_{j = 1}^{n - 2} \mu(\by_j) \, K(\by_{j + 1}, \by_j) \, e^{2\pi\imath (\l_j - \l_{j + 1}) \bby_j},
\end{aligned}
\end{equation}
where in the last product we assume $\mu(\by_1) \equiv 1$. The constant $C_{\tilde{J}}$ contains all constants $d_k$ from operators and all integrals are over $\R$. Denote the integrand by $F$. Then under restriction \eqref{l-rest} with the help of the bounds \eqref{hvand-b}, \eqref{Kmu-bound} and triangle inequalities we arrive at
\begin{equation}
\begin{aligned}
|F| \leq C  \exp \pi \biggl( &- (n + 1) \nu_g \| \by_n \| + \frac{\nu_g}{2} \sum_{ \substack{ i,j = 1 \\ i \not= j } }^{n} \bigl| y_i^{(n)} - y_j^{(n)} \bigr| + \nu_{g^*} L_n(\by_{n - 1}, \by_n) \\[4pt]
&- \frac{\nu_g }{2} \sum_{ \substack{ i,j = 1 \\ i \not= j } }^{n-1} \bigl| y_i^{(n-1)} - y_j^{(n-1)} \bigr|   + \nu_g S_{n - 1}(\by_1, \dots, \by_{n - 1})\biggr),
\end{aligned}
\end{equation}
where $L_n$ is defined as
\begin{equation}
L_n(\by_{n - 1}, \by_n) = \sum_{ \substack{ i, j = 1 \\ i < j} }^n \bigl| y^{(n)}_i - y^{(n)}_j \bigr| + \sum_{ \substack{ i, j = 1 \\ i < j} }^{n - 1} \bigl| y^{(n-1)}_ i - y^{(n-1)}_j \bigr| - \sum_{i = 1}^{n}\sum_{j = 1}^{n - 1} \bigl| y^{(n)}_i - y^{(n-1)}_j \bigr|
\end{equation}
and $S_{n - 1}$ is defined in \eqref{Srec}. Using the bound \eqref{Ln-bound} with $\ve_1$, the bound \eqref{Sn-bound} with $\ve_2$ and the estimate \eqref{sum<} we have
\begin{equation}
|F| \leq C \exp \pi \biggl( \bigl[ (n - 1) \ve_1 \nu_{g^*} -2 \nu_g \bigr] \| \by_n \| + \bigl[ \ve_2 \nu_g - \ve_1 \nu_{g^*} \bigr] \| \by_{n - 1} \| - \frac{\ve_2}{c_{n - 1}} \sum_{k = 1}^{n - 2} \| \by_k \| \biggr).
\end{equation}
The function from the right is integrable for small enough $\ve_1, \ve_2$ such that
\begin{equation}
(n - 1) \ve_1 \nu_{g^*} < 2 \nu_g, \qquad \ve_2 \nu_g < \ve_1 \nu_{g^*}.
\end{equation}
Therefore, the integral $\tilde{J}_{\l, \bl_n}$ is absolutely convergent.
\end{proof}

\begin{proof}[Proof of Theorem \ref{thm:psi-g}]
The wave function is analytic with respect to $\l_j$, see \cite[Proposition 1]{BDKK2} and remark after it. So, it is sufficient to prove the statement of the theorem for real $\l_j$.

The proof goes by induction. The case $n = 1$ is trivial, since the wave function 
\begin{equation}
\Psi_{\l_1}(x_1) = e^{2\pi \imath \l_1 x_1}
\end{equation}
doesn't depend on $g$. Then assume we proved the $(n - 1)$-particle case 
\begin{equation}\label{indassump}
\Psi_{ \bm{\lambda}_{n - 1} }(\bx_{n - 1}; g ) = \hat{\eta}^{-1}(\bl_{n - 1}) \, \eta^{-1}(\bx_{n - 1}) \, \Psi_{ \bm{\lambda}_{n - 1} }(\bx_{n - 1}; g^* )
\end{equation}
and let us prove the $n$-particle case. Here and in what follows we omit the dependence on periods $\bo$. 

First, note that in terms of the $\Lambda^*$-operator \eqref{L*def} the recursive formula for the function $\Psi_{ \bm{\lambda}_{n} }(\bx_{n}; g^* )$ looks as
\begin{equation}
\Psi_{ \bm{\lambda}_{n} }(\bx_{n}; g^* ) = \eta(\bx_n) \, \Lambda_n^*(\l_n) \, \eta^{-1}(\bx_{n - 1}) \, \Psi_{ \bm{\lambda}_{n - 1} }(\bx_{n - 1}; g^* ).
\end{equation}
Then using induction assumption \eqref{indassump} we have
\begin{equation}
\begin{aligned}
\eta^{-1}(\bx_n) \, \Psi_{ \bm{\lambda}_{n} }(\bx_{n}; g^* ) &= \hat{\eta}(\bl_{n - 1})  \, \Lambda_n^*(\l_n) \, \Psi_{ \bm{\lambda}_{n - 1} }(\bx_{n - 1}; g ) \\[6pt]
& = \hat{\eta}(\bl_{n - 1})  \, \Lambda_n^*(\l_n) \, \Lambda_{n - 1}(\l_{n - 1}) \, \cdots \, \Lambda_2(\l_2) \, e^{2\pi \imath \l_1 x_1}.
\end{aligned}
\end{equation}
The multiple integral in the last line is absolutely convergent due to Proposition \ref{prop2}. Therefore, we can change order of integrals in it and use exchange relation \eqref{L*L} to obtain
\begin{equation}
\begin{aligned}
\eta^{-1}(\bx_n) \, \Psi_{ \bm{\lambda}_{n} }(\bx_{n}; g^* ) &= \hat{\eta}(\bl_{n - 1})  \, K_{2\hat{g}}(\l_n - \l_{n - 1} | \hat{\bo})  \\[6pt]
&\times \Lambda_n(\l_{n - 1}) \, \Lambda^*_{n - 1}(\l_{n})  \, \Lambda_{n - 2}(\l_{n - 2})\, \cdots \, \Lambda_2(\l_2) \, e^{2\pi \imath \l_1 x_1}.
\end{aligned}
\end{equation}
Again by Proposition \ref{prop2} the integral from the right is absolutely convergent. Then we proceed by exchanging $\Lambda^*$-operator with all $\Lambda$-operators from the right and arrive at
\begin{equation}
\begin{aligned}
\eta^{-1}(\bx_n) \, \Psi_{ \bm{\lambda}_{n} }(\bx_{n}; g^* ) &= \hat{\eta}(\bl_{n - 1})  \, \prod_{j = 1}^{{n - 1}} K_{2\hat{g}}(\l_n - \l_{j} | \hat{\bo}) \, \Psi_{ \bm{\lambda}_{n} }(\bx_{n}; g).
\end{aligned}
\end{equation}
Since
\begin{align}
K_{2\hat{g}}(\l_n - \l_j | \hat{\bo}) = S^{-1}_2(\imath \l_n - \imath \l_j + \hat{g}^*|\hat{\bo}) \, S^{-1}_2(\imath \l_j - \imath \l_n + \hat{g}^*|\hat{\bo})
\end{align}
due to the definition \eqref{coef} we have
\begin{equation}
\hat{\eta}(\bl_{n - 1})  \, \prod_{j = 1}^{{n - 1}} K_{2\hat{g}}(\l_n - \l_{j} | \hat{\bo}) = \hat{\eta}(\bl_n).
\end{equation}
Thus, we proved the statement for $n$-particle case.
\end{proof}

Now we prove one important corollary of the relations \eqref{psi-g}, \eqref{duality} and of the main result of the paper \cite{HR3} concerning the asymptotics of the function \eqref{Edef} 
\begin{align}
E_{\bl_n}(\bx_n) := E_{\bl_n}(\bx_n; g | \bo) =  e^{-\frac{\imath \hat{g}g^*}{4}n(n - 1)} \, \mu'(\bx_n) \, \hat{\mu}'(\bl_n) \, \Psi_{\bl_n}(\bx_n).
\end{align}

\propAsymp*

\begin{proof}
	Using space-spectral duality \eqref{duality} from \eqref{Easymp1} we obtain the asymptotics \eqref{Easymp} under the assumption
	\begin{equation}
	\Re \hat{g}^* \in \bigl(0, \max(\hat{\o}_1, \hat{\o}_2)\bigr],
	\end{equation} 
	or equivalently
	\begin{equation}\label{g-cond}
	\Re g \in \bigl[ \min(\o_1, \o_2), \o_1 + \o_2 \bigr).
	\end{equation}
	Due to the relation \eqref{psi-g} and reflection formula \eqref{A3} we have the symmetries
	\begin{equation}
	E_{\bl_n}(\bx_n; g) = E_{\bl_n}(\bx_n; g^*), \qquad E^{\mathrm{as}}_{\bl_n}(\bx_n; g) = E^{\mathrm{as}}_{\bl_n}(\bx_n; g^*).
	\end{equation}
	In the case $\Re g \in (0,  \min(\o_1, \o_2))$, that is outside of the domain \eqref{g-cond},
	\begin{equation}
	\Re g^* \in \bigl( \max(\o_1, \o_2), \o_1 + \o_2 \bigr) \subset \bigl[ \min(\o_1, \o_2), \o_1 + \o_2 \bigr).
	\end{equation}
	Hence, the function $E_{\bl_n}(\bx_n; g^*)$ has asymptotics $E^{\mathrm{as}}_{\bl_n}(\bx_n; g^*)$.
\end{proof}

\setcounter{equation}{0}
\section{ Noumi-Sano difference operators}\label{sec:NS}
\propNS*

\begin{proof}	
The proof consists of an explicit calculation of residues in the right-hand side of \rf{N8} which makes sense due to the conditions \rf{N11}. Introduce one more notation for the hyperbolic Pochhammer symbol
\begin{equation} \label{N12} 
	[x]_{m,k}=\frac{S_2(x)}{S_2(x+m\o_1+k\o_2)},\qquad m,k\in\Z. 
\end{equation}
Due to factorization formula for the double sine function \eqref{Sfact}
\begin{equation}\label{N13} 
	[x]_{m,k}=(-1)^{mk} \, [x|\o_1]_m \; [x|\o_2]_k
\end{equation}
where Pochhammer symbols in the right-hand side of \rf{N13} are defined in \rf{N0}.

The kernel of the $Q^*$-operator \eqref{Q*ker} explicitly reads
\begin{multline} \label{N14}
	Q^* \Bigl(\bx_n+\frac{\imath g}{2}\bbe_n,
	\by_n;\l \Bigr) = e^{2\pi\imath \l (\bbx_n-\bby_n)-\pi \l n g } \\[6pt]
	\times \prod_{\substack{i,j=1 \\ i\not=j}}^n S_2(\imath x_{ij}+g) \, S_2(\imath y_{ij})\prod_{i,j=1}^n S_2^{-1}(\imath x_i- \imath y_j) \, S_2^{-1} (\imath y_i- \imath x_j+g)
\end{multline} 
where for brevity we denoted $x_{ij} = x_i - x_j$.  Because of the reflection formula \eqref{trig4} the product of functions $S_2(\imath y_{ij})$ doesn't have poles.
Then due to the formula \eqref{trig5} the residue 
\begin{equation}\label{N15}
	\Res_{\substack{y_1=x_1-\imath(m_1\o_1+k_1\o_2)\\[2pt] \ldots \\[2pt] y_n=x_n-\imath(m_n\o_1+k_n\o_2)}} \, Q^*\Bigl(\bx_n+\frac{\imath g}{2}\bbe_n, \by_n;\l \Bigr)f(\by_n) 
\end{equation}
equals to
\begin{multline} \label{N16} 
	\biggl(\frac{ \imath \sqrt{\o_1\o_2}}{2\pi}\biggr)^n e^{-\pi\l(2|\bm{m}|\o_1+2|\bk|\o_2+ng)}\prod_{i=1}^n \frac{ S_2^{-1}(g+m_i\o_1+k_i\o_2)}{[\o_1+\o_2]_{m_i,k_i}} \\[6pt]
	\times \prod_{\substack{i,j = 1 \\ i\not=j}}^n \frac{S_2(\imath x_{ij}+m_{ij}\o_1+k_{ij}\o_2)}{S_2(\imath x_{ij}-m_j\o_1-k_j\o_2)}\frac{S_2(\imath x_{ij}+g)}{S_2(\imath x_{ij}+g+m_i\o_1+k_i\o_2)}\\[6pt]
	\times f \bigl(x_1-\imath(m_1\o_1+k_1\o_2),\ldots,x_n-\imath(m_n\o_1+k_n\o_2) \bigr).
\end{multline}
The first two lines of \rf{N16} can be rewritten in terms of Pochhammer symbols \eqref{N12} as
\begin{equation}\label{N24}
\begin{aligned}
&\biggl(\frac{\imath \sqrt{\o_1\o_2}}{2\pi S_2(g)}\biggr)^n e^{-\pi\l(2|\bm{m}|\o_1+2|\bk|\o_2+ng)}\\[6pt]
&\times  \prod_{i=1}^n \frac{[g+m_i\o_1+k_i\o_2]_{m_i,k_i}}{[\o_1+\o_2]_{m_i,k_i}} \prod_{\substack{i,j = 1 \\ i\not=j}}^n
\frac{[\imath x_{ij}+g]_{m_i,k_i}}{[\imath x_{ij}-m_j\o_1-k_j\o_2]_{m_i,k_i}}.
\end{aligned}
\end{equation} 	 
 Due to the reflection formula \eqref{A3} the following relation holds
\begin{equation}
[\o_1 + \o_2]_{m_i, k_i} = [- m_i \o_1 - k_i \o_2]_{m_i, k_i}
\end{equation} 
and therefore two products in \eqref{N24} can be unified. 
Finally, using factorization \eqref{N13} we arrive at 
\begin{multline}\label{N17}
		\Res_{\substack{y_1=x_1-\imath(m_1\o_1+k_1\o_2)\\[2pt] \ldots \\[2pt] y_n=x_n-\imath(m_n\o_1+k_n\o_2)}} \, Q^*\Bigl(\bx_n+\frac{\imath g}{2}\bbe_n, \by_n;\l \Bigr)f(\by_n)  \\[8pt]
	= \biggl( \frac{\imath \sqrt{\o_1\o_2}}{2\pi S_2(g)}\biggr)^n  e^{-\pi\l(2|\bm{m}|\o_1+2|\bk|\o_2+ng)} \prod_{i,j=1}^n\frac{[\imath x_{ij}+g|\o_1]_{m_i}}{[\imath x_{ij}-m_j\o_1|\o_1]_{m_i}}
	\prod_{i,j=1}^n\frac{[\imath x_{ij}+g|\o_2]_{k_i}}{[\imath x_{ij}-k_j\o_2|\o_2]_{k_i}}\\[6pt]
	\times f \bigl(x_1-\imath(m_1\o_1+k_1\o_2),\ldots,x_n-\imath(m_n\o_1+k_n\o_2) \bigr).
\end{multline}
Summing up over all $\bm{m}$ and $\bm{k}$ with $|\bm{m}|=p$ and $|\bm{k}|=q$ we obtain the statement of the proposition. 
\end{proof}

\propQNS*
  Here 
\begin{equation*}  c^{(2)}(\bx_n;\l)=N^{(2)}(\bx_n;\l) \, \mathbf{1} \end{equation*} \cbk
\begin{proof}
The arguments are similar to the ones given in the previous proof, but being used in opposite direction.
 Calculating the integral 
 \begin{equation}\label{N18}  \int_{\R^n} d\by_n \, Q^*\Bigl(\bx_n+\frac{\imath g}{2}\bbe_n,
 \by_n;\l \Bigr)f(\by_n) \end{equation}
by residue technique, which is supposed to be applicable, we meet simple poles of two kinds.
The poles of the first kind are
 \begin{equation}\label{N19} \imath y_1= \imath x_{\sigma_1}+\frac{g}{2}+m_1\o_1+k_1\o_2, \qquad \ldots \qquad \imath y_n=\imath x_{\sigma_n}+\frac{g}{2}+m_n\o_1+k_n\o_2,\end{equation}
for some permutation $\sigma\in S_n$. Since the function $f(\bx_n)$ is assumed to be symmetric their contribution doesn't depend on a permutation and are computed above. The sum over permutations gives additional factor $n!$ in the final answer.
  
In the poles of the second kind the indeces of variables $x_j$ may coincide. We claim that for $\imath \o_2$-periodic function $f(\bx_n)$ their sum vanishes. Assume for definiteness that the pole is at the point
\begin{equation}\label{N20} \imath y_1=\imath x_{1}+\frac{g}{2}+m_1\o_1+k_1\o_2, \qquad \imath y_2=\imath x_{1}+\frac{g}{2}+m_2\o_1+k_2\o_2, \qquad \ldots \end{equation}
  Consider in addition the point
  \begin{equation}\label{N211} \imath y_1=\imath x_{1}+\frac{g}{2}+m_1\o_1+k_2\o_2, \qquad \imath y_2=\imath x_{1}+\frac{g}{2}+m_2\o_1+k_1\o_2, \qquad \ldots \end{equation}
  where the coefficients $k_1$ and $k_2$ are exchanged. We claim that sum of the residues at these two points vanish. This can be seen from a factorized form of the coefficients in residues. The minus sign comes from the measure function  $\Delta(\by_n)$, which at the point \eqref{N20} contains 
  \begin{equation}
  S(\imath y_{12}) \, S(\imath y_{21}) = 4 (-1)^{m_1 + m_2 + k_1 + k_2 + 1} \sin \frac{\pi (m_1 - m_2) \o_1}{\o_2}  \, \sin \frac{\pi (k_{1} - k_2) \o_2}{\o_1}.
  \end{equation}

  Thus, the residue calculation gives 
  \begin{multline}\label{N212}
  \int_{\R^n} d\by_n \, Q^*\Bigl(\bx_n+\frac{\imath g}{2}\bbe_n,
  	\by_n;\l \Bigr)f(\by_n) = n!\biggl(\frac{\sqrt{\o_1\o_2}}{ S_2(g)}\biggr)^ne^{-\pi\l ng}\\[6pt]
  \sum_{p,q\geq 0}e^{-2\pi\l(p\o_1+q\o_2)}	\sum_{\substack{|\bm{m}|=p,\\[2pt] |\bk|=q}} \,  \prod_{i,j=1}^n\frac{[\imath x_{ij}+g|\o_1]_{m_i}}{[\imath x_{ij}-m_j\o_1|\o_1]_{m_i}}
  \prod_{i,j=1}^n\frac{[\imath x_{ij}+g|\o_2]_{k_i}}{[\imath x_{ij}-k_j\o_2|\o_2]_{k_i}}\\[6pt]
  \times f\bigl(x_1-\imath(m_1\o_1+k_1\o_2),\ldots,x_n-\imath(m_n\o_1+k_n\o_2)\bigr).
  \end{multline}
This relation with the use of \eqref{dconst}, \rf{N3} and \rf{N5} can be rewritten in a factorized form as 
 \begin{equation}\label{N21} 
 \bigl(Q^*_n(\l) f \bigr) \Bigl(\bx_n+\frac{\imath g}{2}\bbe_n\Bigr)=e^{-\pi\l ng}N^{(1)}(\bx_n ;\l)N^{(2)}(\bx_n;\l) f(\bx_n).
 \end{equation}
Since $f(\bx_n)$ is supposed to be $\imath\o_2$-periodic, the relation \rf{N21} can be also rewritten as
\begin{equation}\label{N22} \bigl(Q^*_n(\l) f \bigr) \Bigl(\bx_n+\frac{\imath g}{2}\bbe_n\Bigr)=e^{-\pi\l ng}N^{(1)}(\bx_n ;\l) \, c^{(2)}(\bx_n;\l) f(\bx_n)
\end{equation}
where 
\begin{equation} c^{(2)}(\bx_n;\l)=N^{(2)}(\bx_n;\l) \, {\mathbf {1}}=\sum_{q\geq 0}(-1)^q e^{-2\pi\l q\o_2}\sum_{|\bk|=q}\prod_{i,j=1}^n\frac{[\imath x_{ij}+g|\o_2]_{k_i}}{[\imath x_{ij}-k_j\o_2|\o_2]_{k_i}}.
\end{equation}

\end{proof}
 
\section*{Acknowledgments}
The authors thank N. Nekrasov, N. Reshetikhin and  V. Spiridonov for stimulating discussions and interest in the work.

The work of N. Belousov and S. Derkachov was supported by Russian Science Foundation, project No. 23-11-00311, used for the proof of statements of Section 2 and Appendices A, B, C. The work of S. Kharchev was supported by Russian Science Foundation, project No. 20-12-00195, used for the
proof of statements of Section 3. The work of S. Khoroshkin (Section 4) was supported by the International Laboratory of Cluster Geometry of National Research University Higher School of Economics, Russian Federation Government grant, ag. No. 075-15-2021-608 dated 08.06.2021. 

\setcounter{equation}{0}

\section*{Appendix}
\appendix
\section{The double sine function} \label{AppA}
The  double sine  function $S_2(z):=S_2(z|\bo)$, see \cite{Ku} and references therein, is a meromorphic function that satisfies two functional relations
\begin{equation}\label{trig3}  \frac{S_2(z)}{S_2(z+\o_1)}=2\sin \frac{\pi z}{\o_2},\qquad \frac{S_2(z)}{S_2(z+\o_2)}=2\sin \frac{\pi z}{\o_1}
\end{equation}
and inversion relation
\begin{equation} \label{trig4} S_2(z)S_2(-z)=-4\sin\frac{\pi z}{\o_1}\sin\frac{\pi z}{\o_2},\end{equation}
or equivalently
\begin{equation}\label{A3} S_2(z)S_2(\o_1+\o_2-z)=1. \end{equation}
The factorization formula
\begin{equation} \label{Sfact}
S_2(z + m \o_1 + k \o_2) = (-1)^{mk} \, \frac{S_2(z + m \o_1) \, S_2(z + k \o_2)}{S_2(z)}
\end{equation}
follows from \eqref{trig3}.
The function $S_2(z)$ is a meromorphic function of $z$ with poles at
\begin{equation}\label{A1a} z_{m,k} = m\o_1 + k\o_2, \qquad m,k \geq 1\end{equation}
and zeros at
\begin{equation}\label{A1b} z_{-m,-k}=-m\o_1-k\o_2,\qquad m,k\geq 0. \end{equation}
For $\o_1/\o_2 \not\in \mathbb{Q}$ all poles and zeros are simple. The residues of $S_2(z)$ and $S^{-1}_2(z)$ at these points are
\begin{align}
\underset{z = z_{m,k}}{\Res} \, S_2(z) = \frac{\sqrt{\o_1\o_2}}{2\pi}\frac{(-1)^{mk}}{\prod\limits_{s=1}^{m - 1}2\sin\dfrac{\pi s\o_1}{\o_2}\prod\limits_{l=1}^{k - 1}2\sin\dfrac{\pi l\o_2}{\o_1}},
\\[10pt]
\label{trig5} \underset{z = z_{-m,-k}}{\Res} \, S^{-1}_2(z) = \frac{\sqrt{\o_1\o_2}}{2\pi}\frac{(-1)^{mk+m+k}}{\prod\limits_{s=1}^m2\sin\dfrac{\pi s\o_1}{\o_2}\prod\limits_{l=1}^k2\sin\dfrac{\pi l\o_2}{\o_1}}.
\end{align}
In the analytic region $ \Re z \in ( 0, \Re(\omega_1 + \omega_2) )$ we have the following integral representation for the logarithm of $S_2(z)$
\begin{equation}\label{S2-int}
\ln S_2 (z) = \int_0^\infty \frac{dt}{2t} \left( \frac{\sh \left[ (2z - \omega_1 - \omega_2)t \right]}{ \sh (\omega_1 t) \sh (\omega_2 t) } - \frac{ 2z - \omega_1 - \omega_2 }{ \omega_1 \omega_ 2 t } \right).
\end{equation}
It is clear from this representation that the double sine function is homogeneous
\begin{equation}\label{S-hom}
S_2( \gamma z | \gamma\o_1, \gamma \o_2 ) = S_2(z|\o_1, \o_2), \qquad \gamma \in (0, \infty)
\end{equation}
and invariant under permutation of periods
\begin{equation}\label{A6}
S_2(z| \o_1, \o_2) = S_2(z | \o_2, \o_1).
\end{equation}
The double sine function can be expressed through the Barnes double Gamma function $\Gamma_2(z|\bo)$ \cite{B},
\begin{equation}
S_2(z|\bo)=\Gamma_2(\o_1+\o_2-z|\bo)\Gamma_2^{-1}(z|\bo),
\end{equation}
and its properties follow from the corresponding properties of the double Gamma function.
It is also connected to the Ruijsenaars hyperbolic Gamma function $G(z|\bo)$ \cite{R2}
\begin{equation} \label{S-G}
G(z|\bo) = S_2\Bigl(\imath z + \frac{\o_1 + \o_2}{2} \,\Big|\, \bo \Bigr)
\end{equation}
and to the Faddeev quantum dilogarithm $\gamma(z|\bo)$ \cite{F}
\begin{equation} 
\gamma(z|\bo) = S_2\Bigl(-\imath z + \frac{\o_1+\o_2}{2}\, \Big|\, \bo\Bigr) \exp \Bigl( \frac{\imath \pi}{2\o_1 \o_2} \Bigl[z^2 + \frac{\o_1^2+\o_2^2}{12} \Bigr]\Bigr).
\end{equation}
Both $G(z|\bo)$ and $\gamma(z|\bo)$ were investigated independently.

In the paper we deal only with ratios of double sine functions denoted by $\mu(x)$ \eqref{I5} and $K(x)$ \eqref{I6}
\begin{equation}\label{B1}
\begin{split}\mu(x)& =S_2(\imath x)S_2^{-1} (\imath x+g),\\[6pt]
	K(x)& =  S_2\left(\imath x+\frac{\o_1+\o_2}{2}+\frac{g}{2}\right)S_2^{-1}\left(\imath x+\frac{\o_1+\o_2}{2}-\frac{g}{2}\right).
\end{split}
\end{equation}
Now we will give the key asymptotic formulas and bounds for them, which were derived in \cite[Appendices A, B]{BDKK} from the known results for the double Gamma function. In what follows we assume conditions \eqref{I0a}, \eqref{I0b}
\begin{equation}
\Re \o_j > 0, \qquad 0 < \Re g < \Re\o_1 + \Re \o_2, \qquad \nu_g = \Re \hat{g} > 0,
\end{equation}
where we denoted
\begin{equation}
\hat{g} = \frac{g}{\o_1 \o_2}.
\end{equation}
The functions $\mu(x)$ and $K(x)$ \eqref{B1} with $x \in \R$ have the following asymptotics
\begin{equation}\label{Kmu-asymp}
\mu(x) \sim e^{\pi \hat{g} | x | \pm \imath \frac{\pi \hat{g} g^*}{2} }, \qquad K(x) \sim e^{- \pi \hat{g} |x|}, \qquad x\rightarrow \pm \infty.
\end{equation}
and bounds
\begin{equation} \label{Kmu-bound}
|\mu(x)| \leq C e^{\pi\nu_g |x|}, \qquad |K(x)| \leq C e^{-\pi\nu_g |x|},  \qquad x \in \R
\end{equation}
where $C$ is a positive constant uniform for compact subsets of parameters $\bo, g$ preserving the mentioned conditions, see \cite[eq.(B.3)]{BDKK}.

Another key result that we need in the paper is the following Fourier transform formula given in \cite[Proposition C.1]{R3}, which we rewrite in terms of the double sine function using connection formula \eqref{S-G}.  This Fourier transform can be already found in \cite{FKV,PT}.
\begin{proposition*}\cite{R3}
For real positive periods $\o_1, \o_2$ we have
\begin{equation}
\begin{aligned}
&\int_{\mathbb{R}} dx \, e^{\frac{2\pi\imath}{\o_1 \o_2} y x} S_2\Bigl(\imath x - \imath \nu + \frac{\o_1 + \o_2}{2} \Bigr) S_2^{-1} \Bigl( \imath x - \imath \rho + \frac{\o_1 + \o_2}{2} \Bigr) \\[6pt]
&= \sqrt{\o_1 \o_2} \, e^{ \frac{\pi\imath}{\o_1 \o_2} y (\nu + \rho) } S_2(\imath \rho - \imath \nu) \, S_2^{-1}\Bigl(\imath y + \frac{\imath(\rho - \nu)}{2} \Bigr) \, S_2^{-1}\Bigl(-\imath y + \frac{\imath(\rho - \nu)}{2} \Bigr),
\end{aligned}
\end{equation}
while the parameters $\nu, \rho, y$ satisfy the conditions
\begin{equation}\label{A19}
-\frac{\o_1 + \o_2}{2} < \Im \rho < \Im \nu < \frac{\o_1 + \o_2}{2}, \qquad |\Im y | < \Im \frac{\nu - \rho}{2}.
\end{equation}
\end{proposition*}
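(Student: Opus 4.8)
The plan is to prove this as the basic hyperbolic Fourier/beta integral: reduce to an open parameter region, close the contour, expand in residues, sum the resulting series by the $q$-binomial theorem, and reassemble into a ratio of double sines. First I would record the reductions. Under \eqref{A19} the integrand equals, up to a bounded factor, a ratio $S_2(\imath x + A)\,S_2^{-1}(\imath x + B)$ with $\Re(A - B) = \Im(\nu-\rho) > 0$, so by the asymptotics of $S_2$ behind \eqref{Kmu-asymp}, \eqref{Kmu-bound} (derived from \eqref{S2-int}) the integrand decays exponentially in $|\Re x|$ throughout the strip \eqref{A19}; hence the left-hand side converges absolutely and is holomorphic there, and both sides continue to meromorphic functions of $(\nu,\rho,y)$, so it suffices to prove the identity on a non-empty open set. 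Using the homogeneity \eqref{S-hom} and the permutation symmetry \eqref{A6} I would then pass to complex periods $\o_1,\o_2$ with $\Im(\o_1/\o_2)\neq 0$ (recovering the real-period statement at the end by continuity), where $S_2$ has a convergent infinite-product form and the series below converge geometrically; and finally shift $x \mapsto x + \tfrac12(\nu+\rho)$ to pull $e^{\pi\imath y(\nu+\rho)/(\o_1\o_2)}$ out of both sides, reducing the claim to the two-parameter version in $\mu = \tfrac12(\rho-\nu)$ and $y$.

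For the core computation I would close the $x$-contour into the half-plane where the integrand decays and sum the residues at the poles of $S_2(\imath x + A)$, i.e.\ the array $\imath x + A = m\o_1 + k\o_2$, $m,k \geq 1$. Evaluating these residues by \eqref{trig5}, evaluating $e^{2\pi\imath y x/(\o_1\o_2)}$ and $S_2^{-1}(\imath x + B)$ at the pole positions (the latter via repeated use of \eqref{trig3}), and using the factorization \eqref{Sfact}, the double series factors into a product of two one-variable $q$-hypergeometric series of $_1\phi_0$ type, each of which is summable in closed form by the $q$-binomial theorem to a ratio of $q$-Pochhammer symbols. Recombining the two factors — again via \eqref{Sfact}, and simplifying with the inversion relations \eqref{trig4}, \eqref{A3} — produces exactly the ratio of double sine functions $S_2(\imath\rho-\imath\nu)\,S_2^{-1}(\imath y + \tfrac{\imath(\rho-\nu)}{2})\,S_2^{-1}(-\imath y + \tfrac{\imath(\rho-\nu)}{2})$ on the right, the normalising $\sqrt{\o_1\o_2}$ coming from the leading residue of $S_2^{-1}$ in \eqref{trig5} (the $m=k=1$ term).

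A self-contained alternative, avoiding the series, is the uniqueness route: using contour shifts $x \mapsto x + \imath\o_a$ together with the functional equations \eqref{trig3}, show that the ratio of the two sides, as a function of $y$, is invariant under $y \mapsto y + \imath\o_1$ and $y \mapsto y + \imath\o_2$; since for $\o_1/\o_2\notin\mathbb{Q}$ this shift group is dense in $\imath\mathbb{R}$, the holomorphic ratio depends only on $\Re y$, hence is constant, and the constant $1$ is pinned down by a pinching/residue computation at the edge of the strip \eqref{A19}. I expect the main obstacle, in either route, to be analytic rather than algebraic. For the residue route it is the justification of closing the contour and of interchanging the limit with the double series, which is precisely why I would do the computation first at complex periods and only then specialize; one must also carry a handful of elementary prefactors through \eqref{trig5} and \eqref{Sfact} — signs, products of $2\sin$, and $q$-Pochhammer rearrangements of the type \eqref{N0} — carefully enough that the constant comes out as the clean $\sqrt{\o_1\o_2}$ rather than off by a root of unity. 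For the uniqueness route the delicate points are instead that the contour shifts introduce no stray residue terms spoiling the exact periodicity, and that the ratio is genuinely holomorphic, with no residual poles, on a full strip before density is invoked.
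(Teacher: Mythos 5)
The paper does not actually prove this proposition: it is imported verbatim from \cite[Proposition C.1]{R3}, with the remark that the formula already appears in \cite{FKV,PT}. So there is no internal proof to compare against; what can be judged is whether your reconstruction is a viable proof of the cited fact.

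Your primary route is essentially the argument of Ponsot--Teschner and Faddeev--Kashaev--Volkov, and the outline is sound. The convergence analysis is right: with $A=q-\imath\nu$, $B=q-\imath\rho$, $q=\tfrac{\o_1+\o_2}{2}$, the ratio $S_2(\imath x+A)S_2^{-1}(\imath x+B)$ decays like $e^{-\pi\Im(\nu-\rho)|x|/(\o_1\o_2)}$, so \eqref{A19} is exactly the absolute-convergence strip, and reducing to an open parameter set and to non-real $\o_1/\o_2$ is legitimate by meromorphic continuation plus homogeneity \eqref{S-hom}. The poles of $S_2(\imath x+A)$ sit at $x=\nu+\imath(q-m\o_1-k\o_2)$, $m,k\ge1$, all in the lower half-plane since $\Im\nu<q$, while those of $S_2^{-1}(\imath x+B)$ are in the upper half-plane, so the two closures cleanly separate the two pole arrays; the factorization \eqref{Sfact} then splits the double residue sum into two ${}_1\phi_0$'s with bases $e^{2\pi\imath\o_1/\o_2}$ and $e^{-2\pi\imath\o_2/\o_1}$, each summable by the $q$-binomial theorem when $\Im(\o_1/\o_2)\neq0$. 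Two small corrections. First, for your stated choice (closing onto the poles of $S_2(\imath x+A)$) the residues you need are those of $S_2$ at $z_{m,k}$, i.e.\ the \emph{unlabelled} companion formula displayed above \eqref{trig5}; formula \eqref{trig5} itself gives the residues of $S_2^{-1}$ at $z_{-m,-k}$ and would be the relevant one only if you closed upward onto the poles of $S_2^{-1}(\imath x+B)$. This is a labelling slip, not a conceptual error, but it matters for getting the prefactor $\sqrt{\o_1\o_2}$ and the signs right. Second, the preliminary shift $x\mapsto x+\tfrac12(\nu+\rho)$ moves a real contour by a complex amount and needs the (easy) remark that no poles are crossed for $\nu+\rho$ in a neighbourhood of the reals. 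Your alternative uniqueness route via the shift equations \eqref{trig3} and density of $\Z\o_1+\Z\o_2$ is closer in spirit to Ruijsenaars' own methods, but, as you yourself flag, it requires first showing the ratio of the two sides is pole-free on arbitrarily wide strips, which amounts to redoing most of the residue bookkeeping; I would keep it as a remark and carry out the first route.
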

In the special case
\begin{equation}
\nu = \frac{\imath g}{2}, \qquad \rho = -\frac{\imath g}{2}
\end{equation}
taking $y = \o_1 \o_2 \l$ and using homogeneity of the double sine \eqref{S-hom} (with $\gamma = \o_1 \o_2$) we arrive at the Fourier transform formula for the function $K(x)$ \eqref{B1}
\begin{equation}\label{K-fourier}
\int_{\mathbb{R}} dx \; e^{2 \pi \imath  \lambda x }  \K (x) = \sqrt{\omega_1 \omega_2} \, S_2(g) \, \KK (\lambda),
\end{equation}
where $| \Im \l | < \Re \hat{g}/2$ and conditions \eqref{A19} are satisfied due to the inequalities on the coupling constant $g$ \eqref{I0a}, \eqref{I0b}. Here we recall the notations
\begin{equation}
\hat{K}(\l) = K_{\hat{g}^*}(\l|\hat{\bo}), \qquad \hat{g}^* = \frac{g^*}{\o_1\o_2}, \qquad \hat{\bo} = \Bigl( \frac{1}{\o_2}, \frac{1}{\o_1} \Bigr).
\end{equation}
Note that the right hand side of \eqref{K-fourier} is analytic function of $\o_1, \o_2$ in the domain $\Re \o_j > 0$. The integral from the left is also analytic with respect to periods. Indeed, due to the bound \eqref{Kmu-bound} it is absolutely convergent uniformly on compact sets of parameters $\bo, g$ preserving the conditions \eqref{I0a}, \eqref{I0b}. Hence, the formula \eqref{K-fourier} also holds for complex periods under the mentioned conditions.

\setcounter{equation}{0}
\section{A degeneration of Rains integral identity}\label{AppB}

\subsection{Hyperbolic  $A_n\rightleftarrows A_m$ identity}

Keep the assumptions $\Re \o_1>0$, $\Re \o_2>0$ and denote
\begin{equation}q=\frac{\o_1+\o_2}{2}, \qquad  \eta=  \Re \frac{2\pi q}{\o_1\o_2} > 0.
\end{equation}
In this Appendix it is convenient to use the following notations
\begin{equation}\gamma^{(2)}(z) =S_2^{-1}(z|\omega_1,\omega_2),\qquad \gamma^{(2)}( a+u,b-u)= \gamma^{(2)}( a+u)\gamma^{(2)}(b-u)\end{equation}
and
\begin{equation}
f(\pm z + c) = f(z + c) \, f(-z + c).
\end{equation}
Assume that $a$ and $b$ are in the region of analyticity of the double sine function, and  
\begin{equation}\label{as0a}
\a=\Re\frac{a+b}{\o_1\o_2}>0,\qquad \beta=\Re \frac{2q-a-b}{\o_1\o_2}>0. \end{equation}
The asymptotical bounds and global analytical properties of the double sine function imply the following lemma, see \cite[eq. (A.20), (A.29)]{BDKK} for the details.
\begin{lemma}\label{l1}  For any $u \in \R$ we have the uniform bounds
	\begin{align}\label{as0b}C_1 e^{-\b|u|} <| \gamma^{(2)}( a+\imath u,b-\imath u)|<C_2 e^{-\b|u|},\qquad C_1,C_2>0.\end{align}
\end{lemma}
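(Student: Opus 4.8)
The plan is to obtain both inequalities from the two-sided asymptotics of the double sine function in the imaginary direction. I would use that, uniformly for $\Re z$ in a compact subset of the analyticity strip $\bigl(0,\Re(\o_1+\o_2)\bigr)$ and uniformly in $\bo$ on compact sets,
\[
S_2(z\,|\,\bo)=e^{\pm\frac{\imath\pi}{2}B_{2,2}(z\,|\,\bo)}\,\bigl(1+o(1)\bigr)\quad\text{as }\Im z\to\pm\infty,\qquad B_{2,2}(z\,|\,\bo)=\frac{1}{\o_1\o_2}\Bigl[(z-q)^2-\tfrac{\o_1^2+\o_2^2}{12}\Bigr] ;
\]
this is exactly the input recorded in \cite[eq.~(A.20),~(A.29)]{BDKK}, and it can be extracted from the integral representation \eqref{S2-int} by a routine contour estimate, or read off from the known asymptotics of the Barnes double Gamma function. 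Since $a$ and $b$ are assumed to lie in the analyticity strip, $\Re(a+\imath u)$ and $\Re(b-\imath u)$ stay constant and strictly inside $(0,\Re(\o_1+\o_2))$, so the function $u\mapsto\gamma^{(2)}(a+\imath u,b-\imath u)=S_2^{-1}(a+\imath u)\,S_2^{-1}(b-\imath u)$ is continuous and nowhere vanishing on $\R$; hence everything reduces to controlling its modulus as $|u|\to\infty$.

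The computation then runs as follows. For $u\to+\infty$ one has $\Im(a+\imath u)\to+\infty$ and $\Im(b-\imath u)\to-\infty$, so inserting the two asymptotics gives
\[
\gamma^{(2)}(a+\imath u,b-\imath u)=e^{-\frac{\imath\pi}{2}\bigl[B_{2,2}(a+\imath u)-B_{2,2}(b-\imath u)\bigr]}\,\bigl(1+o(1)\bigr).
\]
The crucial point is that the $u^2$-terms cancel, since $(a+\imath u-q)^2$ and $(b-\imath u-q)^2$ both contribute $-u^2$, leaving only a term linear in $u$:
\[
B_{2,2}(a+\imath u)-B_{2,2}(b-\imath u)=\frac{(a+\imath u-q)^2-(b-\imath u-q)^2}{\o_1\o_2}=\frac{(a+b-2q)\,(a-b+2\imath u)}{\o_1\o_2}.
\]
Taking the real part of the exponent, only the $2\imath u$-term is unbounded, and it contributes $-\beta|u|$ to $\log\bigl|\gamma^{(2)}(a+\imath u,b-\imath u)\bigr|$ (with $\beta=\Re\frac{2q-a-b}{\o_1\o_2}$ as in the statement), while the remaining part is a bounded $u$-independent constant; thus $\log\bigl|\gamma^{(2)}(a+\imath u,b-\imath u)\bigr|=-\beta|u|+O(1)$. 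The limit $u\to-\infty$ is treated the same way after interchanging the two factors and yields the identical exponential rate. (Equivalently, via the reflection formula \eqref{A3} one may write $\gamma^{(2)}(a+\imath u,b-\imath u)=S_2^{-1}(a+\imath u)\,S_2(2q-b+\imath u)$, which has exactly the shape of the function $K$ of the main text evaluated along a fixed line parallel to $\R$, so the estimates \eqref{Kmu-asymp},~\eqref{Kmu-bound} apply after extending them slightly off the real axis.)

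To pass from this asymptotic statement to the uniform two-sided bound valid for \emph{all} $u\in\R$, I would set $h(u):=\bigl|\gamma^{(2)}(a+\imath u,b-\imath u)\bigr|\,e^{\beta|u|}$. By the previous paragraph $h$ extends to a continuous function on the compactified line $[-\infty,\infty]$ with finite, strictly positive limits at both ends, and $h(u)>0$ for every finite $u$ since $\gamma^{(2)}$ has neither poles nor zeros on these lines; a continuous strictly positive function on a compact space is bounded above and away from $0$, which yields $C_2$ and $C_1$. Uniformity over compact sets of $(\bo,a,b)$ obeying the hypotheses follows because the asymptotic expansion, the location of the poles and zeros of $S_2$, and hence the limits of $h$, all depend continuously on these parameters.

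The main obstacle is really the first ingredient: making precise the large-$\Im z$ asymptotics of $S_2$ with an error term uniform across the strip and in $\bo$. Once that standard fact is granted — it is precisely \cite[eq.~(A.20),~(A.29)]{BDKK} — the argument reduces to the one-line cancellation of the $u^2$-terms above together with a compactness argument. A reduction modulo the periods is not convenient here, since $\o_1,\o_2$ have positive real parts whereas the relevant direction is purely imaginary, so the asymptotic route is the natural one.
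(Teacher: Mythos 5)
Your argument is correct and is essentially the proof the paper intends: the paper gives no details beyond citing the asymptotics of $\gamma^{(2)}$ from [BDKK, eq.~(A.20), (A.29)] (reproduced here as \eqref{as1}--\eqref{as2}) together with the global analyticity of $S_2$, and your write-up — cancellation of the $u^2$-terms in $B_{2,2}(a+\imath u)-B_{2,2}(b-\imath u)$, the resulting linear exponent, and the compactness argument for the two-sided uniform bound — is exactly that route made explicit. The only discrepancy is a factor of $\pi$ in the exponential rate (the real part of the exponent is $-\pi\beta|u|$ rather than $-\beta|u|$), but this normalization slip is already present in the paper's own statement of the lemma and is harmless for the way the bound is used.
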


Here is the hyperbolic limit \cite[Theorem 4.6]{Rains2} of $A_n-A_m$ Rains integral identity
\cite[Theorem 4.1]{Rains1} (see also \cite{SS})
\begin{multline}\label{id1}
	\frac{1}{(n+1)!}\int_{\R^n}
	\frac{\prod_{j=1}^{n+1}\prod_{\ell=1}^{n+m+2}\gamma^{(2)}(g_\ell+\imath u_j\,,f_\ell-\imath u_j) }
	{\prod_{1\leq j<k\leq n+1}\gamma^{(2)}(\pm\imath (u_j- u_k))
	} \prod_{j=1}^n \frac{du_j}{ \sqrt{\omega_1\omega_2}} = \prod_{j,k=1}^{n+m+2}\gamma^{(2)}(g_j+f_k)\\[6pt]
	\times \frac{1}{(m+1)!}\int_{\R^m}
	\frac{\prod_{j=1}^{m+1}\prod_{\ell=1}^{n+m+2}\gamma^{(2)}(g'_\ell+ \imath u_j,f'_\ell-\imath u_j)}
	{\prod_{1\leq j<k\leq m+1}\gamma^{(2)}(\pm\imath (u_j- u_k))
	} \prod_{j=1}^m \frac{du_j}{ \sqrt{\omega_1\omega_2}}
\end{multline}
where integration variables satisfy the relations
\begin{equation}\label{balans0}\sum_{j=1}^{n+1}u_j=0, \qquad\quad  
\sum_{j=1}^{m+1}u_j=0\end{equation}
in the first and the second integrals correspondingly.
External parameters $g_{\ell}$ and $f_{\ell}$ obey the following
balancing condition:
\begin{align}\label{balans}
	G+F=2(m+1)q, \qquad G=\sum_{\ell=1}^{n+m+2}g_\ell,\qquad F=\sum_{\ell=1}^{n+m+2}f_\ell.
\end{align}
Parameters $g'_{\ell}$ and $f'_{\ell}$ are connected with
$g_{\ell}$ and $f_{\ell}$ by means of the following transformation
\begin{equation}\label{balans1}
g'_{\ell} =\frac{G}{ m+1}-g_\ell, \qquad     f'_{\ell} =\frac{F}{ m+1}-f_\ell, \qquad \ell=1\ldots, n+m+2.
\end{equation}

Assume that all the parameters $f_l$, $g_l$, $f'_l$, $g'_l$, have real positive parts and  the sums $f_l+g_l$ and $f'_l+g'_l$ are in the region of analyticity of the double sine function. It is achieved, e.g., once these parameters are in a vicinity of the middle point 
\begin{equation}\label{as0c}f_l=g_l=\frac{m+1}{n+m+2}q.\end{equation} 
Then due to Lemma \ref{l1} and balancing conditions  the integrand of the left-hand side of \rf{id1} can be bounded by the function
\begin{equation} C\exp\eta\biggl(-(n+1)\sum_{j=1}^{n+1}|u_j|+\sum_{\substack{i,j=1 \\ i<j}}^{n + 1}|u_i-u_j|\biggr) \leq 
C'\exp\eta\biggl(-\sum_{j=1}^{n+1}|u_j|\biggr).\end{equation}
Analogous bound we have for the right-hand side of \rf{id1}, so that this identity has a non-empty region of parameters where it is presented by convergent integrals. \cbk
\subsection{Removing the  condition $\textstyle\sum_{j}\,u_j=0$}
To remove conditions \eqref{balans0} we shift the external parameters
\begin{align}
	g_{\ell} \to g_{\ell} +\imath L, \qquad  f_{\ell} \to f_{\ell} -\imath L,\qquad L >0,
\end{align} 
and then calculate the leading asymptotic of both sides as $L \to  \infty$.

Denote the domain 
\begin{equation}
	D_j=\{(u_1,\ldots,u_{n+1})\in\R^{n+1} \colon  {u_j} \geq {u_k}, \, \forall k\not=j\}.\end{equation} 
Due to $S_{n+1}$ symmetry  the integrand in the left-hand side, the integral is $n+1$ times the same integral over the region $D_{n+1}$. Similarly for the right-hand side, so that we replace the identity \rf{id1} by the same equality of integrals over the regions $D_{n+1}$ and $D_{m+1}$ substituting $\frac{n+1}{(n+1)!} = \frac{1}{n!}$
in front of the left-hand side and
$\frac{m+1}{(m+1)!} = \frac{1}{m!}$ in front of the right-hand side.

Now consider the left-hand side. Change the integration variables
\begin{align}\label{as5a}
	u_{j} =  v_{j} - L, \qquad j=1,\ldots, n; \qquad  u_{n+1} = v_{n+1} +n L.
\end{align}
The integrand transforms as follows
\begin{align}\label{as6}
	\frac{\prod_{j=1}^{n+1}\prod_{\ell=1}^{n+m+2}\gamma^{(2)}(g_\ell+ \imath u_j\,,f_\ell-\imath u_j) }
	{\prod_{1\leq j<k\leq n+1}\gamma^{(2)}(\pm\imath (u_j- u_k))} =
	\frac{\prod_{j=1}^{n}\prod_{\ell=1}^{n+m+2}\gamma^{(2)}(g_\ell+ \imath v_j\,,f_\ell-\imath v_j) }
	{\prod_{1\leq j<k\leq n}\gamma^{(2)}(\pm\imath (v_j- v_k))} \\[8pt] \label{as7}
	\times \frac{\prod_{\ell=1}^{n+m+2}\gamma^{(2)}(g_\ell + \imath v_{n+1}  +\imath(n+1)L\,,f_\ell-\imath v_{n+1} -\imath(n+1)L) }
	{\prod_{1\leq j\leq n}\gamma^{(2)}(\pm\imath (v_j - v_{n+1}  -  (n+1)L))}.
\end{align}
Next we recall the asymptotic \cite[eq. (A.19)]{BDKK} 
\begin{align} \label{as1}
	\gamma^{(2)}(z|\omega_1,\omega_2) = e^{\mp\frac{\imath\pi}{2} B_{2,2}(z|\omega_1,\omega_2)} \Bigl(1+O\bigl(z^{-1}\bigr)\Bigr)
\end{align}
for $\pm \Im(z)>0$ and $|z| \to \infty$ along a vertical strip of a fixed width, 
where $B_{2,2}(z|\omega_1,\omega_2)$ is a multiple Bernoulli polynomial
\begin{align}
B_{2,2}(z|\omega_1,\omega_2) = \frac{z^2}{\omega_1\omega_2} -
\frac{\omega_1+\omega_2}{\omega_1\omega_2}\,z +
\frac{\omega^2_1+3\omega_1\omega_2+\omega^2_2}{6\omega_1\omega_2} =
\frac{(z-q)^2}{\omega_1\omega_2}  - \frac{\omega^2_1+\omega^2_2}{12\omega_1\omega_2}
\end{align}
and $2q = \omega_1+\omega_2$. We use the following corollary of \rf{as1}
\begin{equation}\label{as2} \g^{(2)} (z+a)\g^{(2)}(-z+b)=e^{\pm \frac{ \imath \pi}{\o_1\o_2} (2q-a-b)(z+(a-b)/2)} \Bigl(1+O\bigl(z^{-1}\bigr)\Bigr)\end{equation}
for $\pm \Im(z+a)>0$, $\Im(z-b)>0$ and $|z| \to \infty$ along a  vertical strip of a fixed width.

\cbk The following inequalitites are valid in $D_{n+1}$ for real positive $L$
\begin{align}\label{as2a}
	A=v_{n+1} +(n+1)L > L, \qquad  B_j=v_{n+1} - v_{j} +(n+1)L > 0.
\end{align}
Let us take $g_{\ell}\,,f_{\ell} \in \mathbb{R}$.
Then, by \rf{as2} we have the following asymptotic as $L \to \infty$
\begin{equation}
\label{as3}\begin{split}
	&\frac{\prod_{\ell=1}^{n+m+2}\gamma^{(2)}(g_\ell + \imath v_{n+1}  + \imath (n+1)L\,,f_\ell-\imath v_{n+1} -\imath (n+1)L) }
	{\prod_{1\leq j\leq n}\gamma^{(2)}(\pm\imath (v_j - v_{n+1}  - (n+1)L))} \\[10pt]
	&= \frac{e^{-\frac{\imath\pi}{2\omega_1\omega_2}
			\sum_{\ell=1}^{n+m+2}
			\left(g_\ell + f_\ell-2q\right)
			\left(g_\ell - f_\ell+2\imath v_{n+1}+\imath  2(n+1)L\right)} }
	{e^{\frac{\imath \pi}{2\omega_1\omega_2}
			\sum_{j=1}^{n}(2q)
			\left(-2\imath v_j + 2\imath v_{n+1}  + 2(n+1)\imath L\right)}}.
	\biggl(1+O\bigl(A^{-1}\bigr)+\sum_{j=1}^nO\bigl(B_j^{-1}\bigr)\biggr)\end{split}\end{equation}
Using the balancing conditions \rf{balans0} and \rf{balans} we can simplify the exponent in the right-hand side of \rf{as3} and rewrite it as
\begin{equation} \label{as4}
\exp \biggl(-\frac{2q(n+1)L}{\o_1\o_2} \biggr) \, \exp\frac{\imath\pi}{2\omega_1\omega_2}\biggl(2q\left(G-F\right)+\sum_{\ell=1}^{n+m+2}\left(- g^2_\ell + f^2_\ell\right) \biggr).
\end{equation}
The error term $O(1/A)$ can be replaced by $O(1/L)$ due to \rf{as2a}, errow terms $O(1/B_j)$ are not small only in a vicinity of the zero point of the measure function, which can be droped in the total integral or replaced by $o(1)$. Thus, we have finally the estimate for $L\to\infty$
\begin{equation}\label{as5}\begin{split}&\frac{\prod_{\ell=1}^{n+m+2}\gamma^{(2)}(g_\ell + \imath v_{n+1}  + \imath (n+1)L\,,f_\ell-\imath v_{n+1} -\imath (n+1)L) }
	{\prod_{1\leq j\leq n}\gamma^{(2)}(\pm\imath(v_j - v_{n+1} - (n+1)L))} \\[8pt]
	&=\exp\biggl(-\frac{2q(n+1)L}{\o_1\o_2} \biggr) \, \exp\frac{\imath \pi}{2\omega_1\omega_2}\biggl(2q\left(G-F\right)+\sum_{\ell=1}^{n+m+2}\bigl(- g^2_\ell + f^2_\ell\bigr)
	\biggr)\bigl(1+o(1)\bigr).	 \end{split}
\end{equation} 
In the same manner we can write down a uniform bound for the integrands in the right-hand side of \rf{as6} and \rf{as7}. According to Lemma \ref{l1} and balancing conditions, the product in the right-hand side of \rf{as6} is restricted by
\begin{equation}\label{as10} 
C\exp\eta\biggl(-(n+1)\sum_{j=1}^{n}|v_j|+\sum_{\substack{i,j = 1 \\ i < j}}^n|v_i-v_j|\biggr) \leq 
C'\exp\eta\biggl(-\sum_{j=1}^{n}|v_j|\biggr)\end{equation}
while the product \rf{as7} is restricted by
\begin{equation}\label{as11}
C''\frac{\exp \bigl[-(n+1)\eta(v_{n+1}+(n+1)L)\bigr]}{\exp\bigl[-\eta\sum_{j=1}^n(v_{n+1}-v_j+(n+1)L) \bigr]}=
C''\exp\bigl[-(n+1)L\eta\bigr].
\end{equation}
Here we used the condition
$$\sum_{j=1}^{n+1}v_j=0.$$
The estimates \rf{as10} and \rf{as11} show that the transformed integrals in the left-hand side of \rf{id1}, multiplied by 
\begin{equation} \label{as11a}\exp\frac{2\pi q(n+1)L}{\o_1\o_2} \,
\exp\frac{-\imath \pi}{2\omega_1\omega_2}\biggl(2q(G-F)+\sum_{\ell=1}^{n+m+2}
\left(- g^2_\ell + f^2_\ell\right) \biggr)\end{equation}
have a  convergend majorant and thus tend to the limit equal to the convergent integral
\begin{equation}\label{as12}\frac{1}{n!}\int_{\R^n}
\frac{\prod_{j=1}^{n}\prod_{\ell=1}^{n+m+2}\gamma^{(2)}(g_\ell+\imath v_j\,,f_\ell-\imath v_j) }
{\prod_{1\leq j<k\leq n}\gamma^{(2)}(\pm\imath (v_j- v_k))} \prod_{j=1}^n \frac{dv_j}{ \sqrt{\omega_1\omega_2}}\end{equation}
\cbk

Next perform the same calculation in the $m$-integral in
the right-hand side of \eqref{id1}. The shifts of external variables
\begin{align}
	g_{\ell} \to g_{\ell} +\imath L, \qquad  f_{\ell} \to f_{\ell} -\imath L
\end{align}
induce different shifts of $g'_{\ell}$ and $f'_{\ell}$.
Due to the relations \rf{balans1} we have 
\begin{align}\label{as13}
	g'_{\ell} \to g'_{\ell} + \frac{n+1}{m+1} \, \imath L, \qquad 
	f'_{\ell} \to f'_{\ell} - \frac{n+1}{m+1}\imath L.
\end{align} 
Repeating the same steps we conclude that the integral in the right-hand side of \rf{id1}, multiplied by \rf{as11a} tends to the convergent integral 
\begin{equation}\label{as17}\frac{1}{m!}\int_{\R^m}
\frac{\prod_{j=1}^{m}\prod_{\ell=1}^{n+m+2}\gamma^{(2)}(g'_\ell+\imath u_j\,,f'_\ell-\imath u_j) }
{\prod_{1\leq j<k\leq m}\gamma^{(2)}(\pm\imath (u_j- u_k))} \prod_{j=1}^m \frac{du_j}{ \sqrt{\omega_1\omega_2}}.
\end{equation}

After all we obtain the  relation
\begin{multline}\label{id2}
	\frac{1}{n!}\int_{\R^n}
	\frac{\prod_{j=1}^{n}\prod_{\ell=1}^{n+m+2}\gamma^{(2)}(g_\ell+\imath u_j\,,f_\ell-\imath u_j) }
	{\prod_{1\leq j<k\leq n}\gamma^{(2)}(\pm\imath(u_j- u_k))
	} \prod_{j=1}^n \frac{du_j}{  \sqrt{\omega_1\omega_2}} = \prod_{j,k=1}^{n+m+2}\gamma^{(2)}(g_j+f_k)\\[8pt]
	\times \frac{1}{m!}\int_{\R^m}
	\frac{\prod_{j=1}^{m}\prod_{\ell=1}^{n+m+2}\gamma^{(2)}(g'_\ell+\imath u_j,f'_\ell-\imath u_j)}
	{\prod_{1\leq j<k\leq m}\gamma^{(2)}(\pm\imath(u_j- u_k))
	} \prod_{j=1}^m \frac{du_j}{  \sqrt{\omega_1\omega_2}}
\end{multline} 
valid under balancing conditions \rf{balans}, \rf{balans1}. \cbk

\subsection{First reduction}
\def\iL{\imath L}
In what follows we perform some reductions of the relation \rf{id2}, where the external parameters
obey the balancing condition \rf{balans} and are connected by the relation \rf{balans1}. 

Let us perform the shifts
\begin{align}
	g_{n+m+2} \to g_{n+m+2} -\iL, \qquad f_{n+m+2} \to f_{n+m+2} +\iL
\end{align}
which are compatible with balancing condition. Due to the relations \eqref{balans1} we have
\begin{equation}\label{a24}\begin{split}
	&g'_{n+m+2} \to g'_{n+m+2} - \frac{\iL}{ m+1}+\iL, \qquad 
	f'_{n+m+2} \to f'_{n+m+2} + \frac{\iL}{ m+1}-\iL\\[10pt]
	&g'_{\ell} \to g'_{\ell} - \frac{\iL}{ m+1},\qquad
	f'_{\ell} \to f'_{\ell} +\frac{\iL}{ m+1}, \qquad  \ell=1,\ldots, n+m+1.
\end{split}\end{equation}
Next we calculate calculate the leading asymptotics of both sides of the identity \eqref{id2} as $L \to \infty$.
Denote for simplicity
\begin{align}
	g_{n+m+2} = a, \qquad f_{n+m+2} = b, \qquad g'_{n+m+2} = a', \qquad f'_{n+m+2} = b'.
\end{align}
We have the following pointwise limit in left-hand side as $L \to \infty$
\begin{equation}\label{as31}\begin{split}
	&\prod_{j=1}^{n}\gamma^{(2)}(a + \imath u_{j} -\iL\,,b-\imath u_{j}+\iL)
	\to \\
	&\prod_{j=1}^{n} e^{\frac{\imath \pi}{2\omega_1\omega_2}
		\left((a + \imath u_{j} -\iL -q)^2 -(b-\imath u_{j}+\iL -q)^2\right)}  =
	e^{\frac{\imath \pi}{2\omega_1\omega_2} I_1}
\end{split}\end{equation}
where
\begin{equation}
\begin{aligned}
I_1 = & \sum_{j=1}^{n}
\left(a+b-2q\right)
\left(a-b+2\imath u_j -2\iL\right) \\
&= \left(a+b-2q\right)\left(a-b-2\iL\right)n +
2\left(a+b-2q\right)\sum_{j=1}^{n} \imath u_j
\end{aligned}
\end{equation}
In right-hand side of the identity \rf{id2} we have to shift all the integration
variables 
\begin{equation}\label{as25}u_j \to u_j +\frac{L}{m+1}\end{equation} 
to remove $L$-dependence in almost all functions except containing
$g'_{n+m+2} = a'$ and $f'_{n+m+2} = b'$, so that
\begin{equation}
\begin{aligned}
	&\prod_{j=1}^{m}\gamma^{(2)}(a' + u_{j} +\iL\,,b'-u_{j}-\iL)
	\to\\
	&\prod_{j=1}^{n} e^{-\frac{\imath \pi}{2\omega_1\omega_2}
		\left((a' + \imath u_{j} +\iL -q)^2 -(b'-\imath u_{j}-\iL -q)^2\right)}  =
	e^{\frac{\imath \pi}{2\omega_1\omega_2} I_2}
\end{aligned}
\end{equation}
where
\begin{equation}
\begin{aligned}
I_2 &= \sum_{j=1}^{m}
\left(2q-a'-b'\right)
\left(a'-b'+2\imath u_j +2\iL\right) \\ 
&=\left(2q-a'-b'\right)\left(a'-b'+2\iL\right)m +
2\left(a'+b'-2q\right)\sum_{j=1}^{m} \imath u_j\\
&= (a+b){\frac{m}{m+1}}(G-F) +2\iL(a+b)m -
\left(a^2-b^2\right)m +2(a+b)\sum_{j=1}^{m} \imath u_j.
\end{aligned}
\end{equation}
We also have $L$-dependence in prefactor in right-hand side of \rf{id2} and
\begin{equation}\label{as25a}\begin{split}
	&\prod_{\ell=1}^{n+m+1}\gamma^{(2)}(a -\iL+f_{\ell}\,,b +\iL +g_{\ell})
	\to\\
	&\prod_{\ell=1}^{n+m+1} e^{\frac{\imath \pi}{2\omega_1\omega_2}
		\left((a -\iL+f_{\ell} -q)^2 -(b +\iL +g_{\ell} -q)^2\right)}  =
	e^{\frac{\imath \pi}{2\omega_1\omega_2} I_3}
\end{split}\end{equation}
where
\begin{equation}
\begin{aligned}
I_3 &= \sum_{\ell=1}^{n+m+1}
\left(a+b+f_{\ell}+g_{\ell}-2q\right)
\left(a-b+f_{\ell}-g_{\ell} -2\iL\right) \\[6pt]
&= \left((a+b)(n+m)-2q n\right)\left(a-b-2L\right)
+\left(a+b-2q\right)\left(a-b\right)\\[6pt]
&+ \left(a+b-2q\right)\left(F-G\right) +
\sum_{\ell=1}^{n+m+1}
\left(f_{\ell}+g_{\ell}\right)
\left(f_{\ell}-g_{\ell}\right).
\end{aligned}
\end{equation}
We have
\begin{equation}
\begin{aligned}
I_2+I_3 &=
-2\iL\left(a+b-2q\right)n + \left(a+b-2q\right)\left(a-b\right)(n+1)\\[8pt]
&+\left(a+b-2q\right)\left(F-G\right) +(a+b){\frac{m}{m+1}}(G-F) \\[6pt]
&+	\sum_{\ell=1}^{n+m+1}
\left(f_{\ell}+g_{\ell}\right)
\left(f_{\ell}-g_{\ell}\right)+2(a+b)\sum_{j=1}^{m} \imath u_j.
\end{aligned}
\end{equation}
The $L$-dependence in $I_1$ and $I_2+I_3$ is the same so that
asymptotic behaviour of both sides of the identity is the same and in the limit we arrive at
\begin{equation}
\begin{split}\label{fgF}
	\frac{1}{n!}\int_{\R^n}\,
	e^{\frac{\pi}{\omega_1\omega_2}\left(2q-a-b\right)\sum_{j=1}^{n} u_j}\,
	\frac{\prod_{j=1}^{n}\prod_{\ell=1}^{n+m+1}\gamma^{(2)}(g_\ell+ \imath u_j\,,f_\ell-\imath u_j) }
	{\prod_{1\leq j<k\leq n}\gamma^{(2)}(\pm\imath(u_j- u_k))
	} \prod_{j=1}^n \frac{ du_j}{\sqrt{\o_1\o_2}} \\[6pt]
	= e^{\frac{\imath \pi}{2\omega_1\omega_2}\,\varphi(a,b,f,g)}\,\gamma^{(2)}(a+b)\,
	\prod_{j,k=1}^{n+m+1}\gamma^{(2)}(g_j+f_k) \\[6pt]
	\times \frac{1}{m!}\int_{\R^m}\,
	e^{\frac{\pi}{\omega_1\omega_2}\left(a+b\right)\sum_{j=1}^{m} u_j}\,
	\frac{\prod_{j=1}^{m}\prod_{\ell=1}^{n+m+1}\gamma^{(2)}(g'_\ell+\imath u_j,f'_\ell-\imath u_j)}
	{\prod_{1\leq j<k\leq m}\gamma^{(2)}(\pm\imath (u_j- u_k))
	} \prod_{j=1}^m\frac{ du_j}{\sqrt{\o_1\o_2}}
\end{split}\end{equation}
provided we are able to obtain the uniform bounds for the corresponding integrands. Here we introduced the function
\begin{equation}
\begin{aligned}
\varphi(a,b,f,g) &= \left(a+b-2q\right)\left(F-G+a-b\right) +
(a+b){\frac{m}{m+1}}(G-F) +\sum_{\ell=1}^{n+m+1}
\left(f^2_{\ell}-g^2_{\ell}\right) \\[6pt]
&= (a^2-b^2){\frac{m}{m+1}} + \left({\frac{a+b}{m+1}}-2q\right)
\sum_{\ell=1}^{n+m+1} \left(f_{\ell}-g_{\ell}\right) +
\sum_{\ell=1}^{n+m+1} \left(f^2_{\ell}-g^2_{\ell}\right).
\end{aligned}
\end{equation}
Estimate first the nominator of the integrand in the left-hand side of \rf{id2}. Collect its factors
containing the variable $u_j$. They are equal to 
\begin{equation}\label{as18}
G_j= \g^{(2)}(a+\imath u_j-\imath L, b-\imath u_j+\imath L)\prod_{l=1}^{n+m+1}\g^{(2)}(g_l+\imath u_j,f_l-\imath u_j).\end{equation}
Due to Lemma \ref{l1} 
\begin{equation} \label{as19}\begin{split}
	|G_j|&<C_j\exp \,\Re \frac{\pi}{\o_1\o_2}\biggl(-\sum_{l=1}^{n+m+1}(2q-g_l-f_l)|u_j|-(2q-a-b)|u_j-L|\biggr)\\[8pt]
	& = \exp \,\Re \frac{\pi}{\o_1\o_2}\bigl( -(2qn+a+b)|u_j|-(2q-a-b)|u_j-L|\big).
\end{split}\end{equation}
Assuming the condition \rf{as0a} for parameters $a$ and $b$ we see that the last line of \rf{as19}
is represented by exponent of the piecewise linear function $-\delta(u_j)$, where
$$\delta(u_j)=\a|u_j|+\b|u_j-L|$$
with positive coefficients $\a$ and $\b$, $\a>\b$. Elementary analysis of its graph shows that
\begin{equation}\label{as20} \delta(u_j)>\b L+(\a-\b)|u_j|. \end{equation} 
Thus, we have the bound 
\begin{equation}\label{as21} |G_j|<C_j \exp \left(-\pi \Re\frac{2q-a-b}{\o_1\o_2}L -2\pi\Re\frac{q(n-1)+a+b}{\o_1\o_2}|u_j|\right). \end{equation}
Multiplying over all $j$ we arrive to the desired asymptotics
\begin{equation}
\exp \biggl( -\pi n\, \Re \frac{2q - a - b}{\o_1\o_2} L \biggr)\end{equation}
multiplied by the integral with integrand uniformly bounded by 
\begin{multline}
	C \exp \, \Re\frac{2\pi}{\o_1\o_1}\biggl(- \sum_{j=1}^{n}\bigl(q(n-1)+(a+b)\bigr)|u_j|+q
	\sum_{\substack{i,j=1 \\ i<j}}^n |u_i-u_j|\biggr)\\[6pt]
	 \leq C'\exp \biggl( -\Re \frac{2\pi(a+b)}{\o_1\o_2}\sum_{j = 1}^n|u_j| \biggr).
\end{multline}
The latter absolutely converges once
\begin{equation}\label{as24} \Re \frac{a+b}{\o_1\o_2}>0.\end{equation}
In the same manner we estimate the integral in the right-hand side of \rf{id2}. Collect all factors of the nominator  containing the shifted variable $u_j$ into the product $G'_j$
\begin{equation}\label{as23}
G'_j= \g^{(2)}(a'+\imath u_j-\imath L, b'-\imath u_j+\imath L)\prod_{l=1}^{n+m+1}\g^{(2)}(g'_l+\imath u_j,f'_l-\imath u_j).\end{equation}
Following the same lines as before we see that the integrand is a product of its asymptotics
\begin{equation}\label{as22}\exp \biggl(-\frac{\pi m}{\o_1\o_2}(2q-a'-b')L\biggr)=\exp \biggl(-\frac{\pi m}{\o_1\o_2}(a+b)L\biggr)\end{equation}
multiplied by the function which can be estimated by a uniform absolutely integrable function
\begin{multline} 
	\exp \, \Re \frac{2\pi}{\o_1\o_1}\biggl( - \sum_{j=1}^{m}\bigl(q(n-1)+(a'+b')\bigr)|u_j|+q
	\sum_{\substack{i,j=1 \\ i<j}}^m|u_i-u_j|\biggr)\\[6pt]
	\leq C'\exp \biggl( -\Re \frac{2\pi(a'+b')}{\o_1\o_2}\sum_{j = 1}^m|u_j| \biggr)=C'\exp \biggl(-\Re \frac{2\pi(2q-a-b)}{\o_1\o_2}\sum_{j = 1}^m|u_j|  \biggr).
\end{multline}
Combining this bound with the limit \rf{as25a} we also see that the right-hand side of \rf{id2} divided by its asymptotics is given by uniformly bounded integral. This finishes the proof of the relation \rf{fgF}.

\medskip

We can write down the relation \rf{fgF} in a slightly different form by separating $f$ and $g$-dependence in the function $\varphi(a,b,f,g)$. Namely, denote
\begin{equation}
\vf(g) =
\left({\frac{a+b}{m+1}}-2q\right)
\sum_{\ell=1}^{n+m+1} g_{\ell} +
\sum_{\ell=1}^{n+m+1} g^2_{\ell}.
\end{equation}
Then
\begin{equation}\label{as29}
	\varphi(a,b,f,g) = (a^2-b^2){\frac{m}{m+1}}-\vf(g)+\vf(f).
\end{equation}
The external parameters $g_{\ell}$ and $f_{\ell}$ obey the following
balancing condition
\begin{align}
	g+f+(a+b) = 2q(m+1), \qquad 2q = \omega_1+\omega_2
\end{align}
where 
\begin{equation}
g=\sum_{\ell=1}^{n+m+1}g_\ell, \qquad f=\sum_{\ell=1}^{n+m+1}f_\ell.
\end{equation}
Parameters $g'_{\ell}$ and $f'_{\ell}$ are connected with
$g_{\ell}$ and $f_{\ell}$ by simple transformation
\begin{equation}\label{gftick}
g'_{\ell} =\frac{g+a}{ m+1}-g_\ell, \qquad 
f'_{\ell} =\frac{f+b}{ m+1}-f_\ell, \qquad \ell=1,\ldots, n+m+1
\end{equation}
and in the same way
\begin{equation}
a' =\frac{g+a}{ m+1}-a, \qquad b' =\frac{f+b }{m+1}-b, \qquad a'+b' = 2q-a-b.
\end{equation}
Finally, using the notation \rf{as29} we rewrite the  relation \rf{fgF} as
\begin{equation}\label{fg}\begin{split}
	e^{\vf(g)}\,\frac{1}{n!}\int_{\R^n}\,
	e^{\frac{\pi}{\omega_1\omega_2}\left(2q-a-b\right)\sum_{j=1}^{n} u_j}\,
	\frac{\prod_{j=1}^{n}\prod_{\ell=1}^{n+m+1}\gamma^{(2)}(g_\ell+ \imath u_j\,,f_\ell-\imath u_j) }
	{\prod_{1\leq j<k\leq n}\gamma^{(2)}(\pm\imath(u_j- u_k))
	} \prod_{j=1}^n \frac{du_j}{\sqrt{\o_1\o_2}} \\[8pt]
	= \gamma^{(2)}(a+b)\,e^{\frac{\imath \pi}{2\omega_1\omega_2}\,(a^2-b^2){\frac{m}{m+1}}}\,
	\prod_{j,k=1}^{n+m+1}\gamma^{(2)}(g_j+f_k)\\[8pt]
	\times e^{\vf(f)}\,\frac{1}{m!}\int_{\R^m}\,
	e^{\frac{\pi}{\omega_1\omega_2}\left(a+b\right)\sum_{j=1}^{m} u_j}\,
	\frac{\prod_{j=1}^{m}\prod_{\ell=1}^{n+m+1}\gamma^{(2)}(g'_\ell+ \imath u_j,f'_\ell-\imath u_j)}
	{\prod_{1\leq j<k\leq m}\gamma^{(2)}(\pm\imath(u_j- u_k))
	} \prod_{j=1}^m \frac{du_j}{\sqrt{\o_1\o_2}}.
\end{split}\end{equation}
This relation is valid under the conditions \rf{as0a} and \rf{as24}.

We should note that a special case of obtained formula is presented in the forthcoming paper \cite{SS}.

\subsection{Second reduction}
Now we set $m=n$ and use the following parametrization
\begin{equation}
	\begin{aligned}
	&g_{k} = -\imath x_k+\frac{g^*}{2}, \qquad  g_{n+k} = - \imath z_k+\frac{g}{2},\\
	& f_{k} = \imath x_k+\frac{g^*}{2}, \qquad f_{n+k} = \imath z_k+\frac{g}{2}
	\end{aligned}
\end{equation}
for $k = 1, \ldots, n$ and
\begin{equation}
g_{2n+1} = q-a + \imath \sum_{k=1}^n (x_k+z_k), \qquad f_{2n+1} = q-b - \imath \sum_{k=1}^n (x_k+z_k).
\end{equation}
Then since $g + g^* = 2q$ from the relations \eqref{gftick} we also have
\begin{equation}
\begin{aligned}
	& g'_{k} = \imath x_k+\frac{g}{2}, \qquad g'_{n+k} = iz_k+\frac{g^*}{2}, \\
	& f'_{k} = -ix_k+\frac{g}{2}, \qquad f'_{n+k} = -iz_k+\frac{g^*}{2}
\end{aligned}
\end{equation}
for $k = 1, \ldots, n$ and
\begin{equation}
g'_{2n+1} = a - \imath \sum_{k=1}^n (x_k+z_k), \qquad f'_{2n+1} = b + \imath \sum_{k=1}^n (x_k+z_k).
\end{equation}
The product behind the integral in the right-hand side of \eqref{fg}
\begin{align*}
	\prod_{j,k=1}^{2n+1}\gamma^{(2)}(g_j+f_k) & = 
	\gamma^{(2)}(g_{2n+1}+f_{2n+1})\,
	\prod_{k=1}^{2n}\gamma^{(2)}(g_{2n+1}+f_k)\, \\
& \times	\prod_{k=1}^{2n}\gamma^{(2)}(f_{2n+1}+g_k)\,
	\prod_{j,k=1}^{2n}\gamma^{(2)}(g_j+f_k).
\end{align*}
Then the relation \rf{fg} takes the form
\begin{equation} \label{as36}\begin{split}
	& \int_{\R^n}\!\!
	e^{\frac{\pi(2q-a-b)}{\omega_1\omega_2}\sum\limits_{j=1}^{n} u_j}\,
	\!\prod_{j=1}^{n}\textstyle\gamma^{(2)}\Bigl(q-a + \imath \sum\limits_{k=1}^n (x_k+z_k)+ \imath u_j, q-b - \imath \sum\limits_{k=1}^n (x_k+z_k)-\imath u_j \Bigr) \\[10pt]
	& \times \frac{\prod_{j=1}^{n}\prod_{k=1}^{n}
		\gamma^{(2)}\left(\pm\imath(x_k-u_j)+\frac{g^*}{2}\,,\pm\imath(z_k-u_j)+\frac{g}{2}\right) }
	{\prod_{1\leq j<k\leq n}\gamma^{(2)}(\pm\imath(u_j- u_k))
	} \prod_{j=1}^n \frac{du_j}{\sqrt{\o_1\o_2}} = H(x,z,a,b)  \\[10pt]
	& \times \int_{\R^n}
	e^{\frac{\pi(a+b)}{\omega_1\omega_2}\sum\limits_{j=1}^{n} u_j}\,
	\prod_{j=1}^{n}\textstyle\gamma^{(2)}\Bigl(a - \imath \sum\limits_{k=1}^n (x_k+z_k)- \imath u_j\,, b + \imath \sum\limits_{k=1}^n (x_k+z_k)+\imath u_j\Bigr) \\[10pt]
	& \times \frac{\prod_{j=1}^{n}\prod_{k=1}^{n}
		\gamma^{(2)}\left(\pm\imath(x_k-u_j)+\frac{g}{2}\,,\pm\imath(z_k-u_j)+\frac{g^*}{2}\right) }
	{\prod_{1\leq j<k\leq n}\gamma^{(2)}(\pm\imath(u_j- u_k))
	} \prod_{j=1}^n\frac{du_j}{\sqrt{\o_1\o_2}}
\end{split}\end{equation}
where
\begin{equation}\label{as33}\begin{split}
	&H(x,z,a,b)=	e^{\frac{\pi}{\omega_1\omega_2}\,
		\left((2q-a-b)\sum\limits_{k=1}^n (x_k+z_k) - g^{*}\sum\limits_{k=1}^n x_k-
		g\sum\limits_{k=1}^n z_k\right)} \\[6pt]
	&\times	\prod_{\substack{j,k = 1 \\ j\neq k}}^{n}\gamma^{(2)}(\imath(x_k-x_j)+g^*)\,
	\gamma^{(2)}(\imath(z_k-z_j)+g)\\[6pt]
	&\times	\prod_{k=1}^{n}\textstyle\gamma^{(2)}\Bigl(q-a + \imath \sum\limits_{k=1}^n (x_k+z_k)+\imath x_k+\frac{g^*}{2}\Bigr)\,
	\gamma^{(2)}\Bigl(q-a + \imath \sum\limits_{k=1}^n (x_k+z_k)+\imath z_k+\frac{g}{2}\Bigr)\\[6pt]
	&\times \prod_{k=1}^{n}\textstyle\gamma^{(2)}\Bigl(q-b - \imath \sum\limits_{k=1}^n (x_k+z_k)-\imath x_k+\frac{g^*}{2}\Bigr)\,
	\gamma^{(2)}\Bigl(q-b - \imath \sum\limits_{k=1}^n (x_k+z_k)-\imath z_k+\frac{g}{2}\Bigr).
\end{split}\end{equation}
Now we shift $a \to a-\imath L$ and $b \to b+\imath L$ and calculate asymptotics
as $L \to \infty$ using the relation \rf{as31}.

In the left-hand side of \rf{as36} we have
\begin{equation}
\begin{aligned}
\prod_{j=1}^{n}\textstyle\gamma^{(2)} \Bigl(q-a + \imath \sum\limits_{k=1}^n (x_k+z_k)+ \imath u_j+\imath L,
q-b - \imath \sum_{k=1}^n (x_k+z_k)-\imath u_j-\imath L \Bigr) \to \\ e^{\frac{\imath \pi}{2\omega_1\omega_2} I_1},\qquad
I_1 = 2\imath(a+b)\Big(nL+\sum_{j=1}^n u_j\Big) +n(a+b)\Big(2 \imath \sum_{k=1}^n (x_k+z_k) -(a-b)\Big).
\end{aligned}
\end{equation}
In right-hand side of \rf{as36} in the integrand we have
\begin{equation}
\begin{aligned}
&\prod_{j=1}^{n}\textstyle\gamma^{(2)} \Bigl(a - \imath \sum\limits_{k=1}^n (x_k+z_k)- \imath u_j-\imath L,
b + \imath \sum\limits_{k=1}^n (x_k+z_k)+\imath u_j+\imath L \Bigr) \to e^{\frac{\imath \pi}{2\omega_1\omega_2} I_2},\\
&I_2 = 2\imath(2q-a-b)\Big(nL+\sum_{j=1}^n u_j\Big)- n(a+b-2q)
\Big(2 \imath \sum_{k=1}^n (x_k+z_k) -(a-b)\Big)
\end{aligned}
\end{equation}
and for the two factors outside of the integral we have
\begin{equation}
\begin{aligned}
&\prod_{k=1}^{n}\textstyle
\gamma^{(2)}\Bigl(q - a + \imath \sum\limits_{k=1}^n (x_k+z_k)+ \imath x_k+\frac{g^*}{2}+\imath L, \\[6pt]
&\hspace{1.6cm} \textstyle q-b -\imath \sum\limits_{k=1}^n (x_k + z_k)- \imath x_k+\frac{g^*}{2} -\imath L\Bigr)
\to  e^{\frac{\imath \pi}{2\omega_1\omega_2} I_3},\\[6pt]
&I_3 = (a+b-g^\ast)\Big(2\imath n L+2\imath\sum_{k=1}^n\big((n+1)x_k+nz_k\big)-n(a-b)\Big) , 
\end{aligned}
\end{equation}
and
\begin{align*}
	&\prod_{k=1}^{n}\textstyle
	\gamma^{(2)}\Big(q - a + \imath \sum\limits_{k=1}^n (x_k + z_k) + \imath z_k+ \frac{g}{2}+\imath L, \\[6pt]
	&\hspace{1.6cm} \textstyle q-b - \imath \sum\limits_{k=1}^n (x_k + z_k) - \imath z_k+\frac{g}{2}-\imath L\Bigr)
	\to e^{\frac{\imath \pi}{2\omega_1\omega_2} I_4},\\[6pt]
	&I_4 =(a+b-g)\Big(2\imath n L+2\imath\sum_{k=1}^n\big(nx_k+(n+1)z_k\big)-n(a-b)\Big).
\end{align*}
Collecting all these calculations we see that integrands from both sides have equal asymptotics
\begin{equation}\label{as34} \exp\Big(-\frac{\pi nL}{\o_1\o_2}(a+b)\Big)\end{equation}
while the rest has a poinwise limit,
so that the initial relation is reduced to the following equality
\begin{equation}\label{IM}\begin{split}
	\int_{\R^n}
	e^{\frac{2\pi \lambda}{\omega_1\omega_2}\sum\limits_{j=1}^{n} u_j}\,
	\frac{\prod_{j=1}^{n}\prod_{k=1}^{n}
		\gamma^{(2)}\left(\pm\imath(x_k-u_j)+\frac{g^*}{2}\,,\pm\imath(z_k-u_j)+\frac{g}{2}\right) }
	{\prod_{1\leq j<k\leq n}\gamma^{(2)}(\pm\imath(u_j- u_k))
	} \prod_{j=1}^n du_j \\[6pt]
	= e^{\frac{\pi\lambda}{\omega_1\omega_2}\sum\limits_{k=1}^n (x_k+z_k)}\,
	\prod_{\substack{j,k = 1 \\ j\neq k}}^{n}\gamma^{(2)}(\imath(x_k-x_j)+g^*)\,
	\gamma^{(2)}(\imath(z_k-z_j)+g) \\[6pt]
	\times \int_{\R^n}
	e^{-\frac{2\pi\lambda}{\omega_1\omega_2}\sum\limits_{j=1}^{n} u_j}\,
	\frac{\prod_{j=1}^{n}\prod_{k=1}^{n}
		\gamma^{(2)}\left(\pm\imath(x_k-u_j)+\frac{g}{2}\,,\pm\imath(z_k-u_j)+\frac{g^*}{2}\right) }
	{\prod_{1\leq j<k\leq n}\gamma^{(2)}(\pm\imath(u_j- u_k))
	} \prod_{j=1}^n du_j
\end{split}\end{equation}
where
\begin{equation} \label{as35} \l=q-a-b,\qquad \Big|\Re \frac{\l}{\o_1\o_2}\Big|<\frac{\eta}{2}=\frac{1}{2} \Re \big(\o_1^{-1}+\o_2^{-1}\big)\end{equation}
provided we can obtain the uniform integrable bounds for the integrands divided by asymptotics  \rf{as34}.

Let us estimate the integrand of the left-hand side of \rf{as36}. Due to Lemma \ref{l1} its nominator is bounded by
\begin{equation}\label{as37}
C\prod_{j=1}^n \exp \Re \gamma A_j(u_j),\qquad \gamma= \frac{\pi}{\o_1\o_2}
\end{equation}
with some constant $C$ and 
\begin{equation} \label{as38} A_j(u_j)= (2q-a-b)u_j-g\sum_{k=1}^n|u_j-x_k|-g^\ast\sum_{k = 1}^n|u_j-z_k|-(a+b) \Bigl|u_j+L-\sum_{k = 1}^n(x_k+z_k) \Bigr|
\end{equation}
Using triangle inequalities 
$$- |u_j-x_k|\leq -|u_j|+|x_k|,\qquad - |u_j-z_k| \leq -|u_j|+|z_k|$$
we can replace $A_j(u_j)$ by $\tilde{A}_j(u_j)+C'_j$, where $C'_j$ is some constant and
\begin{equation}
\tilde{A_j}(u_j)= (2q-a-b)u_j-2qn|u_j|-(a+b)|u_j+L| \end{equation}
The function $\xi(x)=-\Re \gamma\tilde{A_j}(x)$ is a piecewise linear function of the form
\begin{equation} \label{as39}\xi(x)=  n(\a+\b)|x|+\b|x+L|-\a x \end{equation}
where $\a,\b>0$ are given by \rf{as0a}. 
Analysing the graph of this function, we get inequality
\begin{equation}\label{as41}
\xi(x)\geq\beta L+\big((n-1)(\a+\beta)+2\min(\a,\beta)|x|\big).\end{equation}
This inequality implies the following bound for the integrand in the left-hand side of \rf{as36} divided by its asymptotics
\rf{as34}
\begin{equation} \label{as42}\begin{split} 
	C\exp \eta &\Big(-(n-1)-\min(\a,\b)\sum_{j=1}^n|u_j|+\sum_{\substack{i,j = 1 \\ i<j}}^n|u_i-u_j|\Big)\\
	& \leq C' \exp \biggl(-\eta\min(\a,\b)\sum_{j=1}^n|u_j|\biggr).\end{split}\end{equation}
The latter is absolutely integrable function. The right-hand side is analysed in a similar manner.

\setcounter{equation}{0}
\section{Some inequalities}\label{AppC}
In our previous paper we proved the following little lemma \cite[Lemma 1]{BDKK2}.
\begin{lemma*}
	For any $\epsilon \in [0, 2]$, $y_1, y_2, y \in \mathbb{R}$ we have
	\begin{equation}\label{ineq}
	|y_1 - y_2| - |y_1 - y| - |y_2 - y| \leq \epsilon \left( |y_1| + |y_2| - |y| \right).
	\end{equation}
\end{lemma*}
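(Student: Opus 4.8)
The plan is to exploit the fact that, for fixed $y_1,y_2,y\in\mathbb{R}$, the left-hand side of \eqref{ineq} does not depend on $\epsilon$, while the right-hand side is linear in $\epsilon$. Hence the difference between the two sides is an affine function of $\epsilon$, and an affine function on the interval $[0,2]$ attains its maximum at an endpoint. Consequently it suffices to verify \eqref{ineq} at $\epsilon=0$ and at $\epsilon=2$; the inequality for all intermediate values of $\epsilon$ then follows automatically.

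For $\epsilon=0$ the claim reduces to $|y_1-y_2|\le |y_1-y|+|y_2-y|$, which is the triangle inequality applied to the decomposition $y_1-y_2=(y_1-y)-(y_2-y)$. For $\epsilon=2$, after moving $|y_1-y|+|y_2-y|+2|y|$ to the other side, the claim becomes
\[
|y_1-y_2|+2|y|\;\le\;2|y_1|+2|y_2|+|y_1-y|+|y_2-y|.
\]
This I would obtain by summing three instances of the triangle inequality, namely $|y_1-y_2|\le|y_1|+|y_2|$, $|y|\le|y_1|+|y_1-y|$, and $|y|\le|y_2|+|y_2-y|$; adding these three bounds reproduces the displayed inequality exactly, and hence \eqref{ineq} with $\epsilon=2$.

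There is essentially no obstacle: the only step that requires a moment's thought is the reduction to the two endpoints via affineness in $\epsilon$ (equivalently, one may split into the cases $|y_1|+|y_2|\ge|y|$, where the right-hand side is nonnegative while the left-hand side is nonpositive, and $|y_1|+|y_2|<|y|$, where it suffices to treat $\epsilon=2$). After that, each remaining inequality is an immediate consequence of the triangle inequality, so the proof is short.
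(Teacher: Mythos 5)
Your proof is correct. Note first that the paper does not actually prove this lemma in the text you were given: it is quoted from the earlier paper in the series, \cite[Lemma 1]{BDKK2}, so there is no in-paper argument to compare against. Your argument is self-contained and sound. The reduction to the endpoints $\epsilon=0$ and $\epsilon=2$ is legitimate because the difference of the two sides is affine in $\epsilon$, so its maximum over $[0,2]$ is attained at an endpoint. The case $\epsilon=0$ is the triangle inequality $|y_1-y_2|\le|y_1-y|+|y_2-y|$, and the case $\epsilon=2$ follows, exactly as you say, by adding the three triangle inequalities $|y_1-y_2|\le|y_1|+|y_2|$, $|y|\le|y_1|+|y_1-y|$, and $|y|\le|y_2|+|y_2-y|$. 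The parenthetical two-case variant (splitting on the sign of $|y_1|+|y_2|-|y|$) is an equivalent and equally valid packaging. The more pedestrian alternative would be a direct case analysis on the position of $y$ relative to $y_1$ and $y_2$; your observation that one only needs to check the two extreme values of $\epsilon$ avoids that entirely and makes transparent why $2$ is the natural upper limit for the constant.
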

Now with its help we prove one inequality used in the main text. Define 
\begin{equation}
L_n(\by_{n - 1}, \bx_n) = \sum_{ \substack{ i, j = 1 \\ i < j} }^n | x_i - x_j | + \sum_{ \substack{ i, j = 1 \\ i < j} }^{n - 1} | y_ i - y_j| - \sum_{i = 1}^{n}\sum_{j = 1}^{n - 1} |x_i - y_j|.
\end{equation}
As before, by $\| \bx_n \|$ denote $L^1$-norm. The following statement implicitly appeared during the proof of Lemma 2 in \cite{BDKK2}.
\begin{lemma}
	For any $\ve \in [0, 2]$ we have
	\begin{equation}\label{Ln-bound}
	L_n \leq (n - 1)  \ve \| \bx_n \| - \ve \| \by_{n - 1} \|.
	\end{equation}
\end{lemma}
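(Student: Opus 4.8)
The plan is to prove only the two endpoints $\ve=0$ and $\ve=2$ and then interpolate, since the right-hand side of \eqref{Ln-bound} is affine in $\ve$: it equals $(1-\tfrac{\ve}{2})\cdot 0 + \tfrac{\ve}{2}\bigl(2(n-1)\|\bx_n\|-2\|\by_{n-1}\|\bigr)$. Hence, once we know $L_n\le 0$ and $L_n\le 2(n-1)\|\bx_n\|-2\|\by_{n-1}\|$, for every $\ve\in[0,2]$ the right-hand side of \eqref{Ln-bound} is a convex combination (with nonnegative weights $1-\tfrac{\ve}{2}$ and $\tfrac{\ve}{2}$) of two quantities each $\ge L_n$, and is therefore itself $\ge L_n$.

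Both endpoint inequalities will come from one elementary device, the layer-cake identity $|z-w| = \int_{\R}\bigl|\mathbf{1}[z\le t]-\mathbf{1}[w\le t]\bigr|\,dt$. I introduce the integer counting functions $a(t)=\#\{i\in[n]\colon x_i\le t\}$ and $b(t)=\#\{j\in[n-1]\colon y_j\le t\}$; both are locally constant and eventually constant, so every integral below is over a bounded set. For fixed $t$ one checks directly that $\sum_{i<j}\bigl|\mathbf{1}[x_i\le t]-\mathbf{1}[x_j\le t]\bigr| = a(t)(n-a(t))$, that the analogous sum over the $y$'s equals $b(t)(n-1-b(t))$, and that the mixed sum equals $a(t)(n-1-b(t))+(n-a(t))b(t)$. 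Integrating and collecting the three sums with the signs occurring in $L_n$, everything collapses to
\[
L_n(\by_{n-1},\bx_n) \;=\; \int_{\R}\bigl(a(t)-b(t)\bigr)\bigl(1-a(t)+b(t)\bigr)\,dt .
\]
Since $d(1-d)\le 0$ for every integer $d$ (with equality only at $d=0,1$) and $a(t)-b(t)$ is an integer, the integrand is pointwise $\le 0$, so $L_n\le 0$; this is the case $\ve=0$.

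For $\ve=2$ I apply the same identity to the $L^1$-norms, taking $t=0$ as the reference point: $\|\bx_n\| = \int_{t<0}a(t)\,dt+\int_{t\ge 0}(n-a(t))\,dt$ and $\|\by_{n-1}\| = \int_{t<0}b(t)\,dt+\int_{t\ge 0}(n-1-b(t))\,dt$. Then $2(n-1)\|\bx_n\|-2\|\by_{n-1}\|-L_n$ is an integral over $\R$ whose integrand equals $2(n-1)a-2b+(a-b)(a-b-1)$ on $\{t<0\}$ and $2(n-1)(n-1-a)+2b+(a-b)(a-b-1)$ on $\{t\ge 0\}$ (using $-\,(a-b)(1-a+b)=(a-b)(a-b-1)$). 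It remains to check that both expressions are $\ge 0$ for all integers $a\in\{0,\dots,n\}$, $b\in\{0,\dots,n-1\}$, which is a short finite verification: $(a-b)(a-b-1)\ge 0$ always (a product of consecutive integers), and if $a\ge1$ (resp. $a\le n-1$) the rest of the first (resp. second) integrand is visibly $\ge0$; the boundary cases $a=0$ and $a=n$ reduce to $b(b-1)\ge0$ and $(e-1)(e-2)\ge0$ with $e=n-b\ge1$, both true for integers. This gives $L_n\le 2(n-1)\|\bx_n\|-2\|\by_{n-1}\|$ and, combined with the previous paragraph, finishes the proof.

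The argument is essentially mechanical once the layer-cake reformulation is in place; the only real work is the bookkeeping that turns the three symmetric sums in $L_n$ into the single integrand $(a-b)(1-a+b)$, plus keeping the $t=0$ split straight in the norm identities. I note that the inequality \eqref{ineq} recalled just above is not actually needed here: applying it summand-by-summand fails, because individual terms of $L_n$ can be positive, and it is precisely this feature that makes a global (layer-cake) argument the natural approach.
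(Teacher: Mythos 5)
Your proof is correct, but it takes a genuinely different route from the paper's. The paper orders the components, folds the symmetric sums via $\sum_{i<j}|x_i-x_j|=\sum_m (n-2m+1)|x_m-x_{n-m+1}|$, regroups the mixed double sum around these nested pairs, and then applies the $\ve$-dependent three-point inequality \eqref{ineq} to the outermost pair and ordinary triangle inequalities to the rest; the parameter $\ve$ is threaded through the whole estimate. You instead (i) observe that the right-hand side is linear in $\ve$, so it suffices to treat $\ve=0$ and $\ve=2$ and interpolate, and (ii) use the layer-cake identity $|z-w|=\int_{\R}|\mathbf{1}[z\le t]-\mathbf{1}[w\le t]|\,dt$ to collapse $L_n$ into the exact formula $L_n=\int_{\R}(a-b)(1-a+b)\,dt$ with integer counting functions $a(t),b(t)$, after which both endpoint inequalities reduce to pointwise nonnegativity of quadratics in integers (products of consecutive integers); I checked the algebra, including the split at $t=0$ for the norm identities and the boundary cases $a=0$ and $a=n$, and it is all right. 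What your approach buys is an identity for $L_n$ rather than just a bound (so the $\ve=0$ case is sharp exactly where $a-b\in\{0,1\}$ everywhere), and it avoids any ordering or case analysis on indices; what the paper's approach buys is that it reuses the already-established Lemma 1 of \cite{BDKK2} and stays entirely within elementary term-by-term estimates, consistent with how the companion bound \eqref{Sn-bound} is proved there. Your closing remark is fair as stated: a naive summand-by-summand application of \eqref{ineq} does not work, which is precisely why the paper's proof must first regroup the terms as in \eqref{ineq1}--\eqref{ineq3}.
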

\begin{proof}
	Both sides of the stated inequality are symmetric with respect to components of $\bx_n, \by_{n - 1}$. Therefore, without loss of generality we assume the ordering
	\begin{equation}\label{ord}
	x_1 \geq \ldots \geq x_n, \qquad y_1 \geq \ldots \geq y_{n - 1}.
	\end{equation}
	For the vector $\bx_n$ with ordered components we write
	\begin{equation}\label{ord-comp}
	\sum_{\substack{i, j = 1 \\ i < j}}^n | x_i - x_j | =  \sum_{m = 1}^{\lfloor n/2 \rfloor} (n - 2m + 1) | x_m - x_{n - m + 1} |.
	\end{equation}
	Similarly for $\by_{n - 1}$.
	Consequently, 
	\begin{equation}\label{Sineq}
	\begin{aligned}
	L_n = \sum_{m = 1}^{\lfloor n/2 \rfloor} (n - 2m + 1) | x_m - x_{n - m + 1} | &+ \sum_{m = 1}^{\lfloor (n - 1)/2 \rfloor} (n - 2m) | y_m - y_{n - m} |  \\[5pt]
	&- \sum_{i = 1}^{n}\sum_{j = 1}^{n - 1} |x_i - y_j|.
	\end{aligned}
	\end{equation}
	Next step it to regroup terms. Consider term with $m = 1$ from the first sum and terms with $i = 1, n$ from the third double sum and write the estimate
	\begin{equation}\label{ineq1}
	\begin{aligned}
	(n - 1) | x_1 - x_n | &- \sum_{j = 1}^{n - 1}  \left( | x_1 - y_j | + | x_n - y_j | \right) \\[4pt]
	& \leq (n - 1) \, \ve \left( | x_1 | + |x_n | \right) - \ve \, \| \bm{y}_{n - 1} \|,
	\end{aligned}
	\end{equation}
	where we used inequality \eqref{ineq} multiple times. Similarly let us estimate the term with $m > 1$ from the first sum together with the corresponding terms from the third double sum
	\begin{equation}\label{ineq2}
	(n - 2m + 1) | x_m - x_{n - m + 1} | - \sum_{j = m}^{n - m} \left( | x_m - y_j | + | x_{n - m + 1} - y_j | \right) \leq 0,
	\end{equation}
	where we used triangle inequality multiple times. Remaining from the third double sum terms can be grouped with terms from the second sum
	\begin{equation}\label{ineq3}
	(n - 2m)  | y_m - y_{n - m} | - \sum_{i = m + 1}^{n - m} \left( | x_i - y_m | + | x_i - y_{n - m} | \right) \leq 0,
	\end{equation}
	where we again used triangle inequalities. Collecting everything together we have
	\begin{equation}
	L_n \leq (n - 1) \, \ve \left( | x_1 | + |x_n | \right) - \ve \, \| \bm{y}_{n - 1} \| \leq (n - 1) \, \ve \| \bx_n \| - \ve \, \| \bm{y}_{n - 1} \|.
	\end{equation}
\end{proof}

\end{document}